\crefname{lemma}{Lemma}{Lemmas}
\crefname{corollary}{Corollary}{Corollaries}
\crefname{theorem}{Theorem}{Theorems}
\let\reftagform@=\tagform@
\def\tagform@#1{\maketag@@@{\ignorespaces\textcolor{gray}{(#1)}\unskip\@@italiccorr}}
\renewcommand{\eqref}[1]{\textup{\reftagform@{\ref{#1}}}}
\declaretheorem[style=plain,numberwithin=section,name=Theorem]{theorem}
\declaretheorem[style=plain,sibling=theorem,name=Lemma]{lemma}
\declaretheorem[style=plain,sibling=theorem,name=Corollary]{corollary}
\declaretheorem[style=definition,sibling=theorem,name=Definition]{definition}
\declaretheorem[style=definition,sibling=theorem,name=Example]{example}
\declaretheorem[style=remark,sibling=theorem,name=Remark]{remark}
\def\[#1\]{\begin{align}#1\end{align}}
\def\*[#1\]{\begin{align*}#1\end{align*}}
\def \bigO{\mathcal{O}}
\def \IG{\textbf{IG}}
\def \Normal{\mathcal{N}}
\newcommand{\Reals}{\mathbb{R}}
\newcommand{\Nats}{\mathbb{N}}
\newcommand{\dee}{\mathrm{d}}
\renewcommand{\Pr}{\mathbb{P}}
\def\EE{\mathbb{E}}
\newcommand{\var}{\textrm{var}}
\renewcommand{\citet}{\cite}
\begin{document}

\begin{frontmatter}
\title{Complexity Results for MCMC Derived from Quantitative Bounds}
\runtitle{Complexity Results for MCMC Derived from Quantitative Bounds}

\begin{aug}
\author[A]{\fnms{Jun}~\snm{Yang}
}
\and
\author[B]{\fnms{Jeffrey S.}~\snm{Rosenthal}
}
\address[A]{Department of Statistics,
	University of Oxford, UK
	}

\address[B]{Department of Statistical Sciences,
	University of Toronto, Canada
	}
\end{aug}

\begin{abstract}
This paper considers how to obtain MCMC quantitative convergence bounds which can be translated into tight complexity bounds in high-dimensional settings. We propose a modified drift-and-minorization approach, which establishes generalized drift conditions defined in subsets of the state space. The subsets are called the ``large sets'', and are chosen to rule out some ``bad'' states which have poor drift property when the dimension of the state space gets large.  Using the ``large sets'' together with a ``fitted family of drift functions'', a quantitative bound can be obtained which can be translated into a tight complexity bound.
As a demonstration, we analyze several Gibbs samplers and obtain complexity upper bounds for the mixing time.
In particular, for one example of Gibbs sampler which is related to the James--Stein estimator, we show that the number of iterations required for the Gibbs sampler to converge is constant under certain conditions on the observed data and the initial state. It is our hope that this modified drift-and-minorization approach can be employed in many other specific examples to obtain complexity bounds for high-dimensional Markov chains.

\end{abstract}

\begin{keyword}[class=MSC]
\kwd[Primary ]{60J20}
\kwd{60J22}
\kwd[; secondary ]{65C05}
\end{keyword}

\begin{keyword}
\kwd{Markov chain Monte Carlo}
\kwd{convergence complexity}
\kwd{mixing time}
\kwd{drift and minorization}
\kwd{high dimensions}
\end{keyword}

\end{frontmatter}

\section{Introduction}






Markov chain Monte Carlo (MCMC) algorithms are extremely widely used and studied in statistics, e.g. \cite{Brooks2011,Gilks1995}, and their running times are an extremely important practical issue. They have been studied from a variety of perspectives, including convergence ``diagnostics'' via the Markov chain output (e.g. \cite{Gelman1992}), proving weak convergence limits of speed-up versions of the algorithms to diffusion limits \cite{Roberts1997,Roberts1998}, and directly bounding the convergence in total variation distance
\cite{Meyn1994,Rosenthal1995a,Rosenthal1996,Roberts1999,Jones2001,Rosenthal2002,Flegal2008,baxendale2005renewal,jones2004sufficient}. Furthermore, there is a recent trend focusing on quantitative mixing time bounds in terms of either total variation distance or Wasserstein distance for certain types of MCMC methods (such as Langevin Monte Carlo) and targets (such as strongly log-concave targets), see e.g.~\cite{dalalyan2017theoretical,durmus2017nonasymptotic}. Among the work of directly bounding the total variation distance,
most of the quantitative convergence bounds proceed by establishing a \emph{drift condition} and an associated \emph{minorization condition} for the Markov chain in question (see e.g.~\cite{Meyn2012}).
One approach for finding quantitative bounds has been the drift and minorization method set forth by \citet{Rosenthal1995a}.

Computer scientists take a slightly different perspective, in terms of running time complexity order as the ``size'' of the problem goes to infinity. Complexity results in computer science go back at least to \citet{Cobham1965}, and took on greater focus with the pioneering NP-complete work of \citet{Cook1971}. In the Markov chain context, computer scientists have been bounding convergence times of Markov chain algorithms since at least \citet{Sinclair1989}, focusing largely on spectral gap bounds for Markov chains on finite state spaces. More recently, attention has turned to bounding spectral gaps of modern Markov chain algorithms on general state spaces, again primarily via spectral gaps, such as \cite{Lovasz2003,Vempala2005,Lovasz2006,Woodard2009,Woodard2009a} and the references therein. These bounds often focus on the order of the convergence time in terms of some particular parameter, such as the dimension of the corresponding state space. In recent years, there is much interest in the ``large $p$, large $n$'' or ``large $p$, small $n$'' high-dimensional {settings}, where $p$ is the number of parameters and $n$ is the sample size.  \citet{Rajaratnam2015} use the term convergence complexity to denote
the ability of a high-dimensional MCMC scheme to draw samples from the posterior, and how the ability to do so changes as the dimension of the parameter set grows.

Direct total variation bounds for MCMC are sometimes presented in terms of the convergence order, for example, the work by \citet{Rosenthal1995b} for a Gibbs sampler for a variance components model. However, current methods for obtaining total variation bounds of such MCMCs typically proceed as if the dimension of the parameter, $p$, and sample size, $n$, are fixed. It is thus important to bridge the gap between statistics-style convergence bounds, and computer-science-style complexity results.

In one direction,
\citet{Roberts2016} connect known results about diffusion limits of MCMC to the computer science notion of algorithm complexity. They show that any weak limit of a Markov process implies a corresponding complexity bound in an appropriate metric. For example, under appropriate assumptions, in $p$ dimensions, the Random-Walk Metropolis algorithm  takes $\bigO(p)$ iterations (see also \cite{yang2020optimal}) and the Metropolis-adjusted Langevin algorithm (MALA) takes $\bigO(p^{1/3})$ iterations to converge to stationarity.

This paper considers how to obtain MCMC quantitative convergence bounds that can be translated into tight complexity bounds in high-dimensional {settings}. At the first glance, it may seem that an approach to answering the question of convergence complexity may be provided by the drift-and-minorization method of \cite{Rosenthal1995a}. However, \citet{Rajaratnam2015} demonstrate that, somewhat problematically, a few specific upper bounds in the literature obtained by the drift-and-minorization method tend to $1$ as $n$ or $p$ tends to infinity. 
For example, by directly translating the existing work by \citet{Choi2013,Khare2013}, which are both based on the general approach of \cite{Rosenthal1995a}, \citet{Rajaratnam2015} show that the ``small set'' gets large fast as the dimension $p$ increases.
And this seems to happen generally when the drift-and-minorization approach is applied to statistical problems. \citet{Rajaratnam2015} also discuss special cases when the method of \cite{Rosenthal1995a} can still be used to obtain tight bounds on the convergence rate. However, the conditions proposed in \cite{Rajaratnam2015} are very restrictive.
First, it requires the MCMC algorithm to be analyzed is a Gibbs sampler. Second, the Gibbs sampler must have only one high-dimensional parameter which must be drawn in the last step of the Gibbs sampling cycle. 
Unfortunately, other than some tailored examples \cite{Rajaratnam2015}, most realistic MCMC algorithms do not satisfy these conditions. 
It is unclear whether some particular drift functions lead to bad complexity bounds or the drift-and-minorization approach itself has some limitations. It is therefore the hope by \citet{Rajaratnam2015} that proposals and developments of new ideas analogous to those of \cite{Rosenthal1995a}, which are suitable for high-dimensional settings, can be motivated.

In this paper, we attempt to address {concerns about obtaining} quantitative bounds that can be translated into tight complexity bounds. We note that although \citet{Rajaratnam2015} provide evidence for the claim that many published bounds have poor dependence on $n$ and $p$, the statistics literature has not focused on controlling the complexity order on $n$ and $p$. We give some {intuition} why most directly translated complexity bounds are quite loose and provide {advice} on how to obtain tight complexity bounds for high-dimensional Markov chains. The key ideas are (1) the drift function should be small near the region of concentration of the posterior in high dimensions; (2) ``bad'' states which have poor drift property when $n$ and/or $p$ gets large should be ruled out when establishing the drift condition. In order to get tight complexity bounds, we propose a modified drift-and-minorization approach by establishing generalized drift conditions in subsets of the state space, which are called the ``large sets'', instead of the whole state space; see \cref{section_drift}.  The ``large sets'' are chosen to rule out some ``bad'' states which have poor drift property when the dimension of the state space gets large. By establishing the generalized drift condition, a new quantitative bound is obtained, which is composed of two parts. The first part is an upper bound on the probability the Markov chain will visit the states outside of the ``large set''; the second part is an upper bound on the total variation distance of a constructed restricted Markov chain defined only on the ``large set''. In order to obtain good complexity bounds for high-dimensional {settings}, as the dimension increases, the family of drift functions should be chosen such that the function values are small near the region of concentration of the posterior, which we will define formally as a ``fitted family of drift functions'', and the ``large sets'' should be adjusted depending on $n$ and $p$ to balance the complexity order of the two parts. 

As a demonstration, we prove three Gibbs samplers to get complexity bounds. In the first two examples, we demonstrate how to choose the ``fitted family of drift functions''. In the third example, we demonstrate the use of ``fitted family of drift functions'' together with ``large sets''. More specifically, we show in \cref{subsec_james} that a certain realistic Gibbs sampler related to the James--Stein estimator converges in $\bigO(1)$ iterations; see \cref{coro_mixing_time}.
As far as we know, this is the first successful example for analyzing the convergence complexity of a \emph{non-trivial} realistic MCMC algorithm using the (modified) drift-and-minorization approach. Several months after we uploaded this manuscript to arXiv, \citet{Qin2017} successfully analyzed another realistic MCMC algorithm using the drift-and-minorization approach. Although the analysis by \citet{Qin2017} does not make use of the ``large set'' technique proposed in this paper, they do make use of a ``fitted family of drift functions'', which they use an informal concept called ``a centered drift function''.
We explain in this paper that when there exists some ``bad'' states, using a ``fitted family of drift functions'' might not be enough to establish a tight complexity bound. For example, for the Gibbs sampler we successfully analyze in \cref{subsec_james}, it is unknown how to obtain tight complexity bound by the traditional drift-and-minorization approach or other approaches. This is confirmed in a later study by \citet{davis2020convergence}.  To the best of our knowledge, our approach using the ``large set'' is the only successful approach so far to get the tight complexity bound of this example. For another successful example using the ``large set'', we refer to recent work in \cite{zhou2021} for high-dimensional Bayesian variable selection. An important message from the successful analysis of several MCMC examples using the ``large set'' together with a ``fitted family of drift functions'' is that complexity bounds can be obtained
even without any particular form of non-deteriorating convergence bounds. Previous attempts in the literature on studying how the geometric convergence rate behaves as a function of $p$ and $n$ are incomplete.
It is our hope that our approach can be employed to many other specific examples for obtaining quantitative bounds that can be translated to complexity bounds in high-dimensional settings.

\emph{{Notation}:} 
	We use  $\xrightarrow{d}$ for weak convergence and $\pi(\cdot)$ to denote the stationary distribution of the Markov chain. The total {variation} distance is denoted by $\|\cdot\|_{\var}$ and the law of a random variable $X$ denoted by $\mathcal{L}(X)$.
	We adopt the Big-O, Little-O, Theta, and Omega notations. Formally, $T(n)=\bigO(f(n))$ if and only if for some constants $c$ and $n_0$, $T(n)\le c f(n)$ for all $n\ge n_0$; $T(n)=\Omega(f(n))$ if and only if {for some constants $c$ and $n_0$,} $T(n)\ge c f(n)$ for all $n\ge n_0$; $T(n)$ is $\Theta(f(n))$ if and only if both $T(n)=\bigO(f(n))$ and $T(n)=\Omega(f(n))$; $T(n)=o(f(n))$ if and only if $T(n)=\bigO(f(n))$ and $T(n)\neq\Omega(f(n))$.

\section{Generalized Geometric Drift Conditions and Large Sets}\label{section_drift}

Scaling classical MCMCs to very high dimensions can be problematic. {Even if a chain is geometrically ergodic} for fixed $n$ and $p$, the convergence of Markov chains may still be quite slow as $p\to\infty$ and $n\to\infty$. 
Throughout the paper, we assume the Markov chain is positive Harris recurrent, aperiodic, and $\pi$-irreducible, where $\pi$ denotes the unique stationary distribution. For a Markov chain $\{X^{(i)},i=0,1,\dots\}$ on a state space $(\mathcal{X},\mathcal{B})$ with transition kernel $P(x,\cdot)$, defined by
\[
P(x,B)=\Pr(X^{(i+1)}\in B\,|\,X^{(i)}=x), \quad \forall x\in\mathcal{X}, B\in\mathcal{B}
\]
the general method of \cite{Rosenthal1995a} proceeds by establishing a \emph{drift condition}
\[\label{eq_old_drift}
\EE(f(X^{(1)})\,|\, X^{(0)}=x)\le \lambda f(x)+b, \quad \forall x\in \mathcal{X},
\]
where $f: \mathcal{X}\to \Reals^+$ is the ``drift function'', some $0<\lambda<1$ and $b<\infty$;
and an associated \emph{minorization condition}
\[
P(x,\cdot)\ge \epsilon Q(\cdot),\quad \forall x\in R,
\]
where $R:=\{x\in\mathcal{X}: f(x)\le d\}$ is called the ``small set'', and $d>2b/(1-\lambda)$, for some $\epsilon>0$ and some probability measure $Q(\cdot)$ on $\mathcal{X}$.
{Then \cite[Theorem 12]{Rosenthal1995a} states that under both drift and minorization conditions, if the Markov chain starts from an initial distribution $\nu$, then for any $0<r<1$, we have
\[\label{eq_rosenthal95bound}
\begin{split}
\|\mathcal{L}(X^{(k)})-\pi\|_{\var}
&\le(1-\epsilon)^{rk} +\alpha^{-k}(\alpha \Lambda)^{rk}\left[1+\EE_{\nu}(f(x))+\frac{b}{1-\lambda}\right],
\end{split}
\]
where $\alpha^{-1}=\frac{1+2b+\lambda d}{1+d}$, $\Lambda=1+2(\lambda d+b)$ and $\EE_{\nu}[f(x)]$ denotes the expectation of $f(x)$ over $x\sim \nu(\cdot)$.
}
However, it is observed, for example, in \cite{Rajaratnam2015,Qin2017}, that for many specific bounds obtained by the drift-and-minorization method, when the dimension gets larger, the typical scenario for the drift condition of \cref{eq_old_drift}  seems to be $\lambda$ going to one, and/or $b$ getting much larger. This makes the ``size'' of the small set $R$ {grow} too fast, which leads to the minorization volume $\epsilon$ {go} to $0$ exponentially fast. In the following, we give an intuitive explanation {of what} makes a ``good'' drift condition in high-dimensional {settings}.

\subsection{Intuition}

It is useful to think of the drift function $f(x)$ as an energy function
\cite{Jones2001}. Then the drift condition in \cref{eq_old_drift}
implies the chain tends to ``drift'' toward states which have
``lower energy'' in expectation. It is well-known that a ``good''
drift condition is established when both $\lambda$ and $b$ are
small. Intuitively, $\lambda$ being small implies that when the chain is
in a ``high-energy'' state, then it tends to ``drift''
back to ``low-energy'' states fast; and $b$ being small implies {that} when the chain
is in a ``low-energy'' state, then it tends
to remain in a ``low-energy'' state in the next iteration too.
In a high-dimensional setting as the dimension grows to infinity, for a
collection of drift conditions to be ``good'', we would like it to
satisfy the following two properties:

\medskip\noindent P1.
$\lambda$ is small, in the sense that it converges to $1$
slowly or is bounded away from $1$;

\medskip\noindent P2.
$b$ is small, in the sense that it grows at a slower rate
than do typical values of the drift function.

\medskip\noindent

We explain the intuition behind the properties and define a new notion of ``fitted family of drift functions'' in this subsection and later demonstrate how to establish the properties using examples in \cref{section_Gibbs}. One way to understand this intuition is to think of it as
controlling the complexity order of the size of the ``small set'',
$R=\{x\in\mathcal{X}: f(x)\le d\}$. Since $d>2b/(1-\lambda)$,
if $\lambda$ converges to $1$ slowly or is bounded away from $1$,
and if $b$ is growing at a slower rate than typical values of $f(x)$ ({we will illustrate the meaning of ``typical values'' later in examples}),
then the size of the small set parameter $d$ can be chosen to have a small
complexity order {on $n$ and/or $p$}.  This in turn makes the minorization volume $\epsilon$
converge to $0$ sufficiently slowly (or even remain bounded away from $0$).

Next, we define the notion of ``fitted family of drift functions'', which is somewhat related to the informal concept of ``centered drift function'' in \cite{Qin2017}.
\begin{definition}
    Let $\pi^p$ be the target distributions when the dimension of the state space is $p$. We call a collection of non-negative functions, $\{f_p(\cdot)\}_{p=1}^{\infty}$, a \emph{fitted family} if
    \[
    \lim_{p\to\infty}\EE_{\pi^{p}}[f_p(x)]=0.
    \]
\end{definition}
Then a fitted family of drift functions is just a fitted family of functions which also satisfy a family of (generalized) drift conditions. Note that the fitted family of functions can also depend on $n$ if $n$ is a function of $p$. In the rest of the paper, we may simply write $\pi^{p}$ as $\pi$ and $f_p(x)$ as $f(x)$ for simplicity. However, we should keep in mind that the notation $\pi$ and $f(x)$ are actually a family of target distributions and a fitted family of drift functions in high-dimensional settings when we study the behavior of the Markov chains for $p\to\infty$.

Now we explain the intuition on why we should use a fitted family of drift functions in
high-dimensional {settings}. For clarity, we first assume that $\lambda$ is bounded away from
$1$, and focus on conditions required for $b$ to grow at a slower
rate than typical values of $f(x)$. Assume for definiteness that
$p$ is fixed and $n\to\infty$, and the drift function is scaled in
such a way that $f(x)=\bigO(1)$ and there is a fixed typical state
$\tilde{x}$ with $f(\tilde{x})=\Theta(1)$ regardless of dimension.
Then, to satisfy property P2 above, we require that $b=o(1)$.  On the
other hand, taking expectation over $x\sim \pi(\cdot)$ on both sides
of \cref{eq_old_drift} yields $b\ge \EE_{\pi}[f(x)]/(1-\lambda)$,
so $b=\Omega(\EE_{\pi}[f(x)])$.  To make $b=o(1)$ implies that the
drift function should be chosen such that
$\EE_{\pi}[f(x)]\to 0$,
which is exactly the definition of the fitted family of drift functions.
Therefore, to get a small $b$ in a high-dimensional setting,
we require a (properly scaled) drift function $f(\cdot)$ whose
values $f(x)$, where
$x\sim\pi(\cdot)$, concentrate around $0$, which is guaranteed by the fitted family of drift functions.

Note that the fitted family of drift functions for high-dimensional settings can be very different than traditional ``good'' drift functions. For example, to study a Markov chain $\{X^{(k)}\}$ sampling a fixed-dimensional target $\pi$, one might think $f(x)=\pi(x)^{-\alpha}$ for some fixed number $\alpha>0$ is a good candidate for the drift function. However, this is not a good intuition for choosing the fitted family of drift functions for the high-dimensional settings. The following is a toy example.
\begin{example}
Consider $\pi$ is the standard multivariate Gaussian $\mathcal{N}(0,I_p)$.  One choice for the drift function could be $f(x)=\exp(\|x\|^2)-1$ or
$f(x)=\|x\|^2/p$ (which is similar to the one used in \cite[Example 1]{Rosenthal1995a}). However, a better fitted family of drift functions in high dimensional settings could be
\[
f(x)=(\|x\|^2/p-1)^2.
\]
This is because that under $X\sim \mathcal{N}(0,I_p)$, we know $\|X\|^2/p$ concentrates around $1$. The family of drift functions $\{(\|x\|^2/p-1)^2\}_{p=1}^{\infty}$ exactly fits this concentration phenomenon.  The traditional popular choices of drift functions do not have this property. 
\end{example}

Note that in the existing
literature, the drift functions used to establish the drift condition
usually don't satisfy the definition of fitted family of drift functions. This is because in the traditional
setting where $n$ and $p$ are fixed, a ``good'' drift condition is
established whenever $\lambda$ and $b$ are small enough for specific
fixed values of $n$ and $p$.  The complexity
orders of $\lambda$ and $b$ as functions of $n$ and/or $p$ are not
essential, so fitting the concentration region of the posterior as dimension increases is not
necessary. As a result, many
existing quantitative bounds cannot be directly translated into tight
complexity bounds, since the size of the small set does not have a small
complexity order {on $n$ and/or $p$}. At the very least, one has to re-analyze such MCMC
algorithms using a fitted family of drift functions. 

Next, we focus on establishing $\lambda$ that is either bounded away
from $1$ or converges to $1$ slowly, assuming a fitted family of drift functions is
already chosen.  Intuitively, $\lambda$ describes
the behavior of the Markov chain when its current state has a ``high
energy''. If $\lambda$ goes to $1$ very fast when $n$ and/or $p$ goes to
infinity, this may suggest the existence of some ``bad'' states, i.e.\
states which have ``high energy'', but the drift property
becomes poor as $n$ and/or $p$ gets large. Therefore, in high
dimensions, once the Markov
chain visits one of these ``bad'' states, it only slowly drifts
back toward to the corresponding small set. Since the
drift condition in \cref{eq_old_drift} 
must hold for all $x\in\mathcal{X}$,
the existence of ``bad'' states forces $\lambda$ go
to $1$ very fast.  And since the
small set is defined as $R=\{x\in\mathcal{X}: f(x)\le d\}$ where
$d>2b/(1-\lambda)$, the scenario $\lambda\to 1$ very fast forces $R$
to become very large,
and hence the minorization volume $\epsilon$ goes
to zero very fast. One perspective on this problem is that
the definition of drift condition in \cref{eq_old_drift}
is too restrictive, since it must hold for all states $x$, even the
bad ones.


In summary, we are able to establish a small $b$ as in P2 above by
using a fitted family of drift functions. However, the other
difficulty in establishing a small $\lambda$ as in P1 above is the
existence of some ``bad'' states when $n$ and/or $p$ gets large. Since
the traditional drift condition defined in \cref{eq_old_drift} is
restrictive, the traditional drift-and-minorization method is not
flexible enough to deal with these ``bad'' states.  In the following,
we instead propose a modified drift-and-minorization approach using
a generalized drift condition, where the drift function is defined
only in a ``large set''.  This allows us to rule out those ``bad''
states in high-dimensional cases.

\subsection{New Quantitative Bound}

We first relax the traditional drift condition and define a generalized drift condition which is established only on a subset of the state space.
Recall that $\{X^{(k)}\}$ denotes the Markov chain  on a state space $(\mathcal{X},\mathcal{B})$ with a transition kernel $P(x,\cdot), \forall x\in\mathcal{X}$. Let $P^k(x,\cdot)$ be the $k$-step transition kernel. Denote $R_0$ as the ``large set'', i.e., $R_0\in\mathcal{B}$ is a subset of $\mathcal{X}$.

\begin{definition}{(Generalized drift condition on a large set)}\label{new_drift_cond}
	There exists a drift function $f: \mathcal{X}\to \Reals^{+}$ such that for some $\lambda<1$ and $b<\infty$, 
	\[\label{trad_drift_cond_R0}
	\begin{split}
	& \EE(f(X^{(1)})\,|\, X^{(0)}=x)
	\le \lambda f(x)+b, \quad \forall x\in R_0,
	\end{split}
	\]
	and (C1) \emph{or} (C1') holds.
	
	\medskip\noindent (C1). The ``large set'' $R_0$ is defined by $R_0=\{x\in \mathcal{X}: f(x)\le d_0\}$ for some $d_0>0$.
	
	\medskip\noindent (C1'). The transition kernel $P(x,\cdot)$ is a composition of reversible {(with respect to $\pi$)} steps $P=\prod_{i=1}^I P_i$, i.e.\ , $P(x,\dee y)=\int_{(x_1,\dots,x_{I-1})\in\mathcal{X}\times\cdots\times \mathcal{X}} P_1(x,\dee x_1) P_2(x_1,\dee x_2)\cdots P_{I}(x_{I-1},\dee y)$, where $I\ge 1$ is a fixed integer, and
	\[\label{equation_C1prime}
	\EE(f(\tilde{X}^{(1)})\,|\, \tilde{X}^{(0)}=x)\le \EE(f(X^{(1)})\,|\, X^{(0)}=x),\quad\forall x \in R_0,
	\]
	where $\{\tilde{X}^{(k)}\}$ denotes a restricted Markov chain with a transition kernel  $\prod_{i=1}^I \tilde{P}_i$ where $\tilde{P}_i(x,\dee y):=P_i(x,\dee y)$ for $x,y\in R_0, x\neq y$, and $\tilde{P}_i(x,x):=1- P_i(x,R_0 \backslash \{x\}),\forall x\in R_0$.
\end{definition}

\begin{remark}
	Note that only one of (C1) and (C1') is required. For (C1'), the Markov chain needs to be either reversible or can be written as a composition of reversible steps. This condition is very mild since it is satisfied by most realistic MCMC algorithms. For example, full-dimensional and random-scan Metropolis-Hastings algorithms and random-scan Gibbs samplers are reversible, and their deterministic-scan versions can be written as a composition of reversible steps. For (C1), it is required that the ``large set'' is constructed using the drift function in a certain way but there is no restriction for the transition kernel $P$. If $R_0$ is constructed as in (C1) then 
	\cref{equation_C1prime} automatically holds. Therefore, one should verify (C1') if one hopes to have more flexibility for constructing $R_0$ than the particular way in (C1). Particularly, if the drift function $f(x)$ depends on all coordinates, it might be hard to control all the states in $\{x\in \mathcal{X}: f(x)\le d_0\}$ as the dimension increases. Then (C1') might be preferable. 
\end{remark}
\begin{remark}	
	To verify (C1') in \cref{new_drift_cond}, one has to check a new inequality $\EE(f(\tilde{X}^{(1)})\,|\, \tilde{X}^{(0)}=x)\le\EE(f(X^{(1)})\,|\, X^{(0)}=x)$. This inequality in (C1') implies the ``large set'' $R_0$ should be chosen such that the states in $R_0$ have ``lower energy'' on expectation. This is intuitive since we assume the ``bad'' states all have ``high energy'' and poor drift property when $n$ and/or $p$ gets large. One trick is to choose $R_0$ by ruling out some (but not too many) states with ``high energy'' even if the states are not ``bad''. In \cref{subsec_james}, we demonstrate the use of this trick to select the ``large set'' $R_0$ so that $\EE(f(\tilde{X}^{(1)})\,|\, \tilde{X}^{(0)}=x)\le\EE(f(X^{(1)})\,|\, X^{(0)}=x)$ can be easily verified. The constructed $R_0$ in \cref{subsec_james} satisfies (C1') but not (C1).
\end{remark}

Next, we propose a new quantitative bound, which is based on the generalized drift condition on a ``large set''.

\begin{theorem}\label{thm_new_drift}
	Suppose the Markov chain satisfies the generalized drift condition in \cref{new_drift_cond} on a ``large set'' $R_0$. Furthermore, for a ``small set'' $R:=\{x\in\mathcal{X}: f(x)\le d\}$ where $d>2b/(1-\lambda)$, the Markov chain also satisfies a minorization condition:
	\[\label{eq_minorization}
	P(x,\cdot)\ge \epsilon Q(\cdot), \quad \forall x\in R,
	\]
	for some $\epsilon>0$, some probability measure $Q(\cdot)$ on $\mathcal{X}$. Finally, suppose the Markov chain begins with an initial distribution $\nu$ such that $\nu(R_0)=1$. Then for any $0<r<1$, we have
	\[\label{new_quant_bound}
	\begin{split}
	\|\mathcal{L}(X^{(k)})-\pi\|_{\var}
	&\le(1-\epsilon Q(R_0))^{rk} +\frac{(\alpha \Lambda)^{rk}\left[1+\EE_{\nu}[f(x)]+\frac{b}{1-\lambda}\right]-\alpha^{rk}}{\alpha^k-\alpha^{rk}}\\
	&\quad +k\,\pi(R_0^c)+\sum_{i=1}^k \nu P^i (R_0^c),
	\end{split}
	\]
	where $\alpha^{-1}=\frac{1+2b+\lambda d}{1+d}$, $\Lambda=1+2(\lambda d+b)$, and $\nu P^i(\cdot):=\int_{\mathcal{X}} P^i(x,\cdot)\nu(\dee x)$. 
\end{theorem}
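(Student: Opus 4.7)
The plan is to reduce the statement to the classical Rosenthal \cite{Rosenthal1995a} drift-and-minorization bound applied to an auxiliary ``restricted'' chain that is trapped inside $R_0$, paying a small surcharge for the probability that the original chain or a $\pi$-started copy actually exits $R_0$ during the first $k$ steps. That surcharge accounts for the two escape-probability terms at the end of \eqref{new_quant_bound}.

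First, I would construct a restricted chain $\tilde X^{(k)}$ on $R_0$. Under (C1') this chain is the one spelled out in \cref{new_drift_cond}; under (C1), one may set $\tilde P(x,\cdot) := P(x,\cdot\cap R_0) + P(x,R_0^c)\,\delta_x(\cdot)$ for $x \in R_0$. The key observation is that $\tilde X$ inherits the standard (everywhere-on-$R_0$) drift condition $\EE[f(\tilde X^{(1)})\mid\tilde X^{(0)}=x]\le \lambda f(x)+b$: under (C1') this is exactly the combination of \eqref{equation_C1prime} with \eqref{trad_drift_cond_R0}, while under (C1), any transition into $R_0^c$ has $f(y)>d_0\ge f(x)$, so substituting $x$ for $y$ on the rejection event only decreases $f$ in expectation. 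The minorization likewise lifts to $\tilde P(x,\cdot)\ge \epsilon\,Q(\cdot\cap R_0)$ for $x\in R$ (taking $d_0\ge d$ so that $R\subseteq R_0$), with total mass $\epsilon Q(R_0)$.

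Next, I would apply the quantitative bound of \cite{Rosenthal1995a} to $\tilde X$ started from $\nu$, which is supported in $R_0$ by hypothesis. With $\alpha^{-1} = (1+2b+\lambda d)/(1+d)$ and $\Lambda = 1+2(\lambda d+b)$ this produces a bound on $\|\mathcal{L}(\tilde X^{(k)})-\tilde\pi\|_{\var}$ equal to the first two terms of \eqref{new_quant_bound}, where $\tilde\pi$ denotes the stationary distribution of $\tilde X$. I then close the remaining gap via two coupling estimates: coupling $X$ with $\tilde X$ so they agree until the first exit of $X$ from $R_0$ yields $\|\mathcal{L}(X^{(k)})-\mathcal{L}(\tilde X^{(k)})\|_{\var}\le \sum_{i=1}^k P^i(\nu,R_0^c)$; and coupling a $\pi$-started copy $Y$ with its restricted version, together with a union bound using stationarity $\Pr_\pi(Y^{(i)}\in R_0^c)=\pi(R_0^c)$, yields $\|\pi-\tilde\pi\|_{\var}\le k\,\pi(R_0^c)$. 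The triangle inequality assembles these three pieces into \eqref{new_quant_bound}.

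The main technical obstacle I expect is the stationary-measure comparison $\|\pi-\tilde\pi\|_{\var}\le k\,\pi(R_0^c)$: since $\tilde\pi$ is not stationary for $P$, it must be obtained indirectly by coupling the $\pi$-started original and restricted chains and showing they coincide until the first time the $\pi$-started chain leaves $R_0$. The reversible-composition structure of (C1'), or the sublevel-set structure of (C1), is what guarantees that the restriction operation respects $\pi$ well enough for this coupling to succeed on the event that both chains remain in $R_0$. A secondary bookkeeping task is to verify that the form of Rosenthal's bound invoked in the second step reproduces precisely the expression involving $\alpha$ and $\Lambda$ quoted in the theorem statement.
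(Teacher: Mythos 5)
Your overall architecture --- restrict the chain to $R_0$, run Rosenthal's drift-and-minorization argument on the restricted chain, and pay for excursions out of $R_0$ with two coupling terms --- is exactly the paper's, and your treatment of the drift inheritance, the (C1') construction, and the escape-probability couplings all match the actual proof. The genuine gap is in your construction of the restricted chain under (C1). You take the rejection (Metropolis-style) restriction $\tilde P(x,\cdot) = P(x,\cdot\cap R_0) + P(x,R_0^c)\,\delta_x(\cdot)$, but (C1) places no reversibility assumption on $P$, and for a non-reversible kernel this restriction does not in general have stationary distribution $\pi'(\cdot) := \pi(\cdot\cap R_0)/\pi(R_0)$: invariance of $\pi'$ under $\tilde P$ requires the boundary balance $\int_{A}\pi(\dee x)P(x,R_0^c) = \int_{R_0^c}\pi(\dee x)P(x,A)$ for $A \subseteq R_0$, which is exactly what reversibility buys you in the (C1') branch and is missing here. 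Without $\tilde\pi = \pi'$, two steps of your argument break: the bound $\EE_{\tilde\pi}[f]\le b/(1-\lambda)$, which you need to reproduce the second term of \eqref{new_quant_bound}, and the comparison $\|\tilde\pi-\pi\|_{\var}\le k\,\pi(R_0^c)$. Your proposed rescue of the latter only shows that $\mathcal{L}(\tilde Y^{(k)})$ is within roughly $k\,\pi(R_0^c)$ of $\pi$; it does not identify $\mathcal{L}(\tilde Y^{(k)})$ with $\tilde\pi$, and letting $k\to\infty$ to reach $\tilde\pi$ destroys the bound. The ``sublevel-set structure of (C1)'' does not repair this: it gives the drift inheritance (as you correctly argue) but says nothing about invariant measures.

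The paper closes this hole by using a different restricted chain under (C1): the trace (induced) chain obtained by stopping the clock whenever $X$ leaves $R_0$, i.e.\ $\tilde P(x,B) = \Pr(X^{(m_1)}\in B \mid X^{(m_0)}=x)$ with $m_i$ the successive entrance times to $R_0$. The trace chain of an irreducible chain always has stationary distribution $\pi'$, reversible or not, and it still inherits the drift condition because $R_0=\{x : f(x)\le d_0\}$ is a sublevel set (on an excursion, $f$ at the return time is at most $d_0$, which is below $f(X^{(1)})>d_0$). A secondary remark: the paper never manipulates $\tilde\pi$ abstractly; it applies the two-copy coupling form of Rosenthal's bound to $(\tilde X,\tilde Y)$ with $\tilde Y^{(0)}\sim\pi'$, so that $\mathcal{L}(\tilde Y^{(k)})=\pi'$ for all $k$ and the $\pi$-versus-$\pi'$ discrepancy is absorbed into the $k\,\pi(R_0^c)$ coupling term --- but that too relies on $\pi'$ being stationary for $\tilde P$, so it does not rescue your (C1) construction. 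Replace the rejection restriction by the trace chain in the (C1) branch (and carry out the $\alpha$, $\Lambda$ bookkeeping you flagged, which the paper does via a slightly sharpened Markov inequality for the regeneration times) and the rest of your argument goes through as in the paper.
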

\begin{proof}
	See \cref{proof_thm_new_drift}.
\end{proof}	

\begin{remark}
Note that the new bound in \cref{thm_new_drift} assumes the Markov chain begins with an initial distribution $\nu$ such that $\nu(R_0)=1$. This assumption is not very restrictive since the ``large set'' ideally should include all ``good'' states. In high-dimensional {settings}, the Markov chain is not expected to converge fast beginning with any state (see \cref{discussion_initial_state} for discussions on initial states).  Furthermore, the use of ``warm start'' becomes popular recently, see e.g.~\cite{dwivedi2019log}. However, it doesn't directly relate to the large set. We only require that the initial distribution $\mu$ is supported in the large set. For example, $\mu$ can be a point mass. For the term $Q(R_0)$ in \cref{new_quant_bound}, it can be replaced by any lower bound of $Q(R_0)$. Since the ``large set'' is ideally chosen to include all ``good'' states, one can expect $Q(R_0)$ is at least bounded away from $0$. In particular, if we have established an upper bound for $P(x,R_0^c)$ with $x\in R$, then we can apply $\epsilon Q(R_0^c)\le P(x,R_0^c)$ to get an upper bound of $Q(R_0^c)$ which can be turned into a lower bound on $Q(R_0)$.
\end{remark}

\begin{remark}
In the proof of \cref{thm_new_drift}, the generalized drift condition in \cref{new_drift_cond} essentially implies a traditional drift condition in \cref{eq_old_drift} for a constructed ``restricted'' Markov chain only on the ``large set'' $R_0$. The first two terms in the upper bound \cref{new_quant_bound} are indeed an upper bound on the total variation distance of this constructed ``restricted'' Markov chain. Note that the general idea of studying the restriction of a Markov chain to some ``good'' subset of the state space has appeared in the literature, such as \cite{Martin2000,Dyer2003,Jerrum2004,Efthymiou2016,Mangoubi2017,Rudolf2018,Medina-Aguayo2019} and the references therein, in which different ways of restrictions have been considered for different reasons. {For example, \citet{Bou-Rabee2013} studied the rate of convergence of the MALA algorithm by a similar argument, which is later extended in \cite{Eberle2014} to study  contraction rate in Wasserstein distance w.r.t.~ Gaussian reference measure. However, the argument in \cite{Bou-Rabee2013} is only for the MALA algorithm and the proof technique is by constructing a restricted chain. Comparing with \cite{Bou-Rabee2013}, our \cref{thm_new_drift} is for general MCMC algorithms with weaker conditions in (C1) and (C1'). In the proof, we use either a trace chain or a restricted chain depending on which condition is satisfied. Most importantly, the motivation of this work is to obtain tight complexity bound which is quite different from \cite{Bou-Rabee2013}. In \cref{thm_new_drift}, the goal of considering a ``good'' subset of the state space is to obtain better control on the dependence on $n$ and $p$ for the upper bound.}
\end{remark}

\begin{remark}
The last two terms in the upper bound \cref{new_quant_bound} give an upper bound of the probability that the Markov chain will visit $R_0^c$ starting from either the initial distribution $\nu$ or the stationary distribution $\pi$. Therefore, the proposed method in \cref{thm_new_drift} is a generalized version of the classic drift-and-minorization method \cite{Rosenthal1995a} by allowing the drift condition to be established on a chosen ``large set''. {Indeed, if we choose $R_0=\mathcal{X}$, then \cref{new_quant_bound} is almost the same as \cref{eq_rosenthal95bound}, except slightly tighter due to the terms $\alpha^{rk}$.}
\end{remark}

\begin{remark}
One more note about \cref{new_quant_bound} is that the new bound does not decrease exponentially with $k$. For example, the term $k\,\pi(R_0^c)$ is linear increasing with $k$ for fixed $n$ and $p$. We emphasize that we do not aim to prove a Markov chain is geometrically ergodic here.  An upper bound which decreases exponentially with $k$ for fixed $n$ and $p$ does not guarantee to have a tight complexity order {on $n$ and/or $p$}, which has been discussed in \cite{Rajaratnam2015}. Instead, our new bound in \cref{new_quant_bound} is designed for controlling complexity orders of $n$ and/or $p$ for high-dimensional Markov chains. In \cref{subsec_james}, we obtain a tight complexity bound for a Gibbs sampler of a simple random effect model related to the James--Stein estimator. Previous unsuccessful attempts for the same Gibbs sampler (see \cite{davis2020convergence}), were focusing on how to obtain convergence bounds with geometric/polynomial rates as a function of $p$ and $n$. The successful analysis of the Gibbs sampler in the current paper implies that complexity bounds can be obtained even without any particular form of non-deteriorating convergence bounds.
\end{remark}

\subsection{Complexity Bound}

Note that mixing time is often defined uniformly over initial states, which is difficult to extend to general state spaces. In this paper, the term ``mixing time'' is defined depending on the initial state. The formal definition is given in the following.
{
\begin{definition}\label{def_mixingtime}
	For any $0<c<1$, {we define the mixing time} $K_{c,x}$ of a Markov chain $\{X^{(k)}\}$ with initial state $x$ by
	\[
	K_{c,x}:=\arg\min_{k}\left\{\|\mathcal{L}(X^{(k)})-\pi\|_{\var}\le c \right\}\quad \textrm{subject to } X^{(0)}=x.
	\]
\end{definition}
}

The proposed new bound in \cref{thm_new_drift} can be used to obtain complexity bounds in high-dimensional {settings}. The key is to balance the complexity orders of $k$ {on $n$ and/or $p$} required for both the first two terms and the last two terms of the upper bound in \cref{new_quant_bound} to be small. The complexity order of $k$ {on $n$ and/or $p$} for the first two terms to be small can be controlled by adjusting the ``large set''. The ``large set'' should be kept as large as possible provided that ``bad'' states have been ruled out.
{For the last two terms to be small, we should determine the growth rate of $k$ as a function of $n$ and $p$ so that}
\[
k\,\pi(R_0^c)+\sum_{i=1}^k \nu P^i (R_0^c)\to 0.
\]
This may involve (carefully) bounding the tail probability of the transition kernel, depending on the definition of the ``large set'' and the complexity order aimed to establish.

We give a direct corollary of \cref{thm_new_drift} on mixing time in terms of $p$. In general, mixing time in terms of both $n$ and $p$ can be obtained using \cref{thm_new_drift}.
\begin{corollary}\label{new_coro1}
Suppose \cref{thm_new_drift} has been established for every dimension $p$.
Let $\tilde{k}_p$ and $\hat{k}_p$ be sequences of positive integers as functions of $p$ such that both $\tilde{k}_p\to \infty$ and $\hat{k}_p\to\infty$ as $p\to\infty$. Furthermore, $\lim_{p\to\infty} \tilde{k}_p-\hat{k}_p\ge 0$ and
\[
&\tilde{k}_p\pi(R_0^c)+\sum_{i=1}^{\tilde{k}_p}\nu P^i(R_0^c)\to 0\\
&\frac{\log(2+\EE_{\nu}[f(x)]+\frac{b}{1-\lambda})}{-\log(1-\epsilon Q(R_0))}\frac{\log(\alpha\Lambda/(1-\epsilon Q(R_0))}{\log(\alpha)}=\bigO(\hat{k}_p).
\]
Then the mixing time of the MCMC starting from $\nu$ has the complexity order $\bigO(\hat{k}_p)$. 
\end{corollary}

Using \cref{new_coro1}, one can plug-in the orders of $b$, $1-\lambda$, and $\epsilon$ to get the complexity bound. The following result is directly from \cref{new_coro1}.
\begin{corollary}\label{new_coro2}
Suppose \cref{thm_new_drift} has been established for every dimension $p$ and $c_1,\cdots,c_5$ are non-negative constants such that $\frac{1}{\epsilon}=\bigO(p^{c_1})$, $\frac{1}{1-\lambda}=\bigO(p^{c_2})$, and $b=\bigO(p^{c_3})$. Also, $c_4>c_1$ and
\[
p^{c_4}\pi(R_0^c)+\sum_{i=1}^{p^{c_4}}\nu P^i(R_0^c)\to 0.
\]
Furthermore, if $Q(R_0^c)=o(1)$ and $\EE_{\nu}[f(x)]=\bigO(p^{c_5})$, then the mixing time starting from $\nu$ has the complexity order $\bigO(p^{c_1}\log(p^{c_5}+p^{c_2+c_3})\log(p^{c_2+c_3}))=\bigO(p^{c_1}(\log(p))^2)$.
\end{corollary}

We will discuss several MCMC examples in \cref{section_Gibbs} to demonstrate the use of the fitted family of drift functions and ``large sets'' to get complexity bounds.

\subsection{Discussions}

We finish this section by giving a few more remarks and discussions on our main results.
\begin{itemize}
    \item \emph{Geometric ergodicity:} The Markov chain to be analyzed  in \cref{thm_new_drift} does not have to be geometrically ergodic. The proof of \cref{new_quant_bound} only implies that, after ruling out ``bad'' states, a constructed ``restricted'' Markov chain defined on the ``large set'' is geometrically ergodic. Therefore, the new bound in \cref{new_quant_bound} can be used to analyze non-geometrically ergodic high-dimensional Markov chains. 
    \item \emph{Relation to spectral gaps:} Many approaches in MCMC literature bound the spectral gap of the corresponding Markov operator \cite{Lovasz2003,Vempala2005,Lovasz2006,Woodard2009,Woodard2009a}. However, on general state spaces, the spectral gap is zero for Markov chains which are not geometrically ergodic, even if they do converge to stationarity. Our results do not require the Markov chain to be geometrically ergodic. Instead, we only require the constructed ``restricted'' chain on the ``large set'' in our proof is geometrically ergodic. Therefore, we cannot connect our results to bounds on spectral gaps. Furthermore, we do not require the Markov chain to be reversible. So our results apply even in the non-reversible cases, which makes spectral gaps harder to study or interpret. For these reasons, we do not present the main results in terms of spectral gaps. 
    \item \emph{Other types of drift condition:} In this paper, we use the drift condition of the type in \cite{Rosenthal1995a}. There is another popular drift condition (e.g., in 
\cite{roberts2000rates}) and the connection between the two is well-known (see \cite[Lemma 3.1]{jones2004sufficient}). Therefore, it is straightforward to establish our main result using the other drift condition in \cite{roberts2000rates}. 
\item \emph{Complexity of MCMC estimators:} It would be nice to obtain rate of convergence (or non-asymptotic bounds) for general MCMC estimators. The proof techniques in the existing literature on establishing rate of convergence of MCMC estimators \cite{adamczak2015exponential,adamczak2008tail,paulin2015concentration,latuszynski2011rigorous,latuszynski2013nonasymptotic,rudolf2009explicit,rudolf2010error,rudolf2011explicit} requires certain conditions such as geometric/polynomial drift conditions, or spectral gaps. However, our result doesn't require establishing a geometric/polynomial drift condition or a spectral gap.  Therefore, it is not clear how to connect our complexity results to complexity of other MCMC estimators. This is certainly an interesting direction for future work.
\end{itemize}

\section{Gibbs Sampler Convergence Bound}\label{section_Gibbs}

In this section, we study several examples of Gibbs sampling to analyze the convergence complexity using the proposed approach. In \cref{subsec_gauss} and \cref{subsec_poisson}, we consider a simple Gaussian example and a hierarchical Poisson example. Simplified versions of both examples for fixed dimensions was originally studied in \cite[Example 1 and Example 2]{Rosenthal1995a} and the original mixing times have poor complexity orders in terms of dimensions. We study the extensions of them in the high-dimensional setting and obtain tight complexity bounds by choosing fitted families of drift functions. In \cref{subsec_james}, we study the MCMC model in \cite{Rosenthal1996} which is related to the James--Stein estimator. We demonstrate how to use both the fitted family of drift functions and the ``large sets'' to obtain a tight complexity bound.

Note that although the bound in \cref{thm_new_drift} contains different ``admissible" growth combinations such as
	of $b$, $1/(1-\lambda)$, and $1/\epsilon$ (see also \cref{new_coro2}), the minorization volume $\epsilon$ relies on the small set which is determined by both $b$ and $\lambda$. Furthermore, if $b$ is fairly large, it is not surprising that $\lambda$ can be bounded away from $1$. Therefore, we can summarize our general principle in analyzing all the three examples as follows.
\begin{enumerate}
    \item We first focus on choosing a fitted family of drift functions so that $\EE[f(x)]\le b$ where $b$ has a small order.
    \item Next, we establish the drift condition. If $\lambda$ from the drift condition goes to $1$ too fast, we apply the ``large set'' to rule out certain states. After the first two steps, we get a generalized drift condition which leads to a small set with reasonably ``size''. 
    \item Finally, we focus on establishing a (potentially multi-step) minorization condition to obtain $\epsilon$ which goes to zero slowly (or bounded away from $0$).
\end{enumerate}

\subsection{A Gaussian Toy Example}\label{subsec_gauss}
A bivariate Gaussian model was studied in \cite[Example 1]{Rosenthal1995a} as a demonstration of the drift-and-minorization approach. In this subsection, we study an extension of this example to the high-dimensional setting. Suppose our target $\pi$ is $\mathcal{N}(\mu,\Sigma)$, a $2p$-dimensional  multivariate Gaussian, where $\mu=\begin{pmatrix} \mu_1\\ \mu_2
\end{pmatrix}$ and $\Sigma = \begin{pmatrix}
\Sigma_{11} & \Sigma_{12} \\
\Sigma_{21} & \Sigma_{22} 
\end{pmatrix}$. 
To sample from the target distribution, we use a two-step Gibbs sampler as in \cite[Example 1]{Rosenthal1995a}.
Writing $X=\begin{pmatrix}X_1\\ X_2\end{pmatrix}$, the conditional distribution can be written as
\[
X_1\mid X_2=x_2 \sim \mathcal{N}\left(\mu_1 + \Sigma_{12}\Sigma_{22}^{-1}(x_2-\mu_2), \Sigma_{11}-\Sigma_{12}\Sigma_{22}^{-1}\Sigma_{21}\right)
\]
and similarly for $X_2\mid X_1$.

For simplicity, we only consider the setting such that $\mu_1=\mu_2=0$ and $\Sigma_{11}=\Sigma_{22}=I_d$ and $\Sigma_{12}=\Sigma_{21}=\frac{1}{2}I_d$. It is straightforward to extend our analysis to general cases of $\mu$ and $\Sigma$. The corresponding Gibbs sampler is 
\[
X_1^{(1)} &\sim \mathcal{N}\left(\frac{1}{2}X_2^{(0)},\frac{3}{4}I_p\right)\\
X_2^{(1)} &\sim \mathcal{N}\left(\frac{1}{2}X_1^{(1)},\frac{3}{4}I_p\right).
\]
Note that $X_1^{(0)}$ is not used in the updates.

If we choose a drift function similar to the one used in \cite{Rosenthal1995a}, such as
\[
f^{\textrm{old}}(X):=\|X_2\|^2/p.
\]
Then as $X_2^{(1)}\sim \mathcal{N}(\frac{1}{4}X_2^{(0)}, \frac{3}{4}(1+\frac{1}{4})I_p)$, it can be easily verified that the following drift condition can be established:
\[
\EE[f^{\textrm{old}}(X^{(1)})]\le \frac{1}{16} f^{\textrm{old}}(X^{(0)})+1.
\]
However, as $\|X_2\|^2/p$ concentrates to $1$ under stationarity, the drift condition leads to a small set $\{X: \|X_2\|^2/p=\bigO(1)\}$ in which the states that $\|X_2\|^2/p$ is much smaller than $1$ are included.

In our analysis, we choose a fitted family of drift functions which lead to a small set with much smaller size:
\[
f^{\textrm{new}}(X):=\left(\frac{\|X_2\|^2}{p}-1\right)^2.
\]
We can establish the following drift condition:
\[
\EE[f^{\textrm{new}}(X^{(1)})]\le \frac{1}{4} f^{\textrm{new}}(X^{(0)})+\bigO(1/p).
\]
The corresponding small set $\{X: 1-C/\sqrt{p}\le \|X_2\|^2/p\le 1+C/\sqrt{p}\}$ for some constant $C$, fits exactly the concentration region of the target as $p\to\infty$. Using the above drift condition and a multi-step minorization condition, we can obtain the mixing time is $\bigO(\log(p))$. Our main result is in the following.
\begin{theorem}\label{thm_toy_example}
For the two-step Gibbs sampler for our multivariate Gaussian model, suppose the initial state satisfies $\|X_2^{(0)}\|^2=\bigO(p)$, then there exists positive constants $C_1$, $C_2$ such that
\[
\|\mathcal{L}(X^{(n)})-\pi\|_{\var}\le \left[C_1+\left(\frac{\|X_2^{(0)}\|^2}{d}-1\right)^2\right]\gamma^k,
\]
where $\gamma<1$ is a fixed constant and number of steps $n=\lfloor kC_2\log(p)\rfloor+1$ where $k$ is any positive integer.
\end{theorem}
\begin{proof}
    See \cref{proof_thm_toy_example}.
\end{proof}
This implies the following complexity bound directly.
\begin{corollary}
Under the assumptions of \cref{thm_toy_example}, the mixing time of the Gibbs sampler is $\bigO(\log(p))$.
\end{corollary}

\subsection{A Hierarchical Poisson Model}\label{subsec_poisson}

We study a hierachical Poisson model originally for analyzing a realistic data set in \cite{gelfand1990sampling}. A Gibbs sampler for this model has been studied by \citet{gelfand1990sampling} and a (numerical) quantitative bound was studied using the drift-and-minorization approach in \cite[Example 2]{Rosenthal1995a}. In this subsection, we study the Gibbs sampler in the high-dimensional setting. 

Suppose the data has the form $\{Y_i, t_i\}_{i=1}^n$ where $Y_i$ represents the number of failures over a time interval $t_i$ of $n$ nuclear pumps. One can model the failures as a Poisson process with parameter $\lambda_i$. Thus, during a  observation period of length $t_i$, the number of failures $Y_i$ follows a Poisson distribution of parameters $\lambda_it_i$. We are interested in inferring the parameters $\lambda=(\lambda_1,\dots,\lambda_n)$ from the data $\{Y_i, t_i\}$. We follow a hierarchical Bayesian approach where we assume that  $\lambda_1,\dots,\lambda_n$ are conditional independent on a hyperparameter $\beta$ and follow a gamma distribution with density
\[
\pi(\lambda_i\mid \beta) =\frac{\beta^{\alpha-1}}{\Gamma(\alpha)}\lambda_i^{\alpha-1}\exp(-\beta\lambda_i)
\]
where $\alpha$ is a constant. We assume further that the hyperparameter $\beta$ follows itself a prior gamma distribution $\textrm{Ga}(\rho,\delta)$ where $\rho$ and $\delta$ are fixed constant
\[
\pi(\beta) =\frac{\delta^{\rho-1}}{\Gamma(\rho)}\beta^{\rho-1}\exp(-\delta\beta).
\]

For simplicity, in this example we assume the time intervals are unit, that is, $t_i=1$ for all $i$. 
It is straightforward to extend our analysis to general cases of time intervals.

Overall, the model can be written as
\[
Y_i\mid \lambda, \beta &\sim \textrm{Poisson}(\lambda_i),\quad i=1,\dots,n\\
\lambda_i\mid \beta &\sim_{\textrm{indep}}\textrm{Ga}(\alpha,\beta),\quad i=1,\dots,n,\\
\beta &\sim \textrm{Ga}(\rho,\delta).
\]
In this example, we have $p=n+1$ and $x=(\lambda_1,\dots,\lambda_n,\beta)$. The posterior satisfies
\[
\pi(x \mid Y_1,\dots,Y_n)\propto \pi(\beta)\pi(\lambda_i\mid \beta)\prod_{i=1}^n \frac{\lambda_i^{Y_i}}{Y_i!}\exp(-\lambda_i).
\]
Note that this multidimensional distribution is rather complicated and it is not obvious how the rejection sampling or importance sampling could be efficiently used in this context. As the conditional distributions $\pi(\lambda_1,\dots,\lambda_n \mid \beta, \{Y_i\})$ and $\pi(\beta\mid \{\lambda_i\}, \{Y_i\})$ admit standard parametric forms, we can write a Gibbs sampler with the following updating order:
\[
\pi(\lambda_1^{(k+1)},\dots,\lambda_n^{(k+1)}\mid \beta^{(k)}, \{Y_i\})&=\prod_{i=1}^n \pi(\lambda_i^{(k+1)}\mid \beta^{(k)}, Y_i)\\
\lambda_i^{(k+1)}\mid \beta^{(k)}, Y_i &\sim \textrm{Ga}(Y_i+\alpha, 1+\beta^{(k)}), \quad i=1,\dots,n\\
\beta^{(k+1)}\mid \{\lambda_i^{(k+1)}\}, \{Y_i\} &\sim \textrm{Ga}(\rho+ n\alpha, \delta+\sum_{i=1}^n \lambda_i^{(k+1)}).
\]

Next, we present the main result for this Gibbs sampler. The key step is to use a fitted family of drift functions:
\[
f_n(x):=\left(\frac{\sum_i \lambda_i}{n\alpha}-\frac{1}{\beta}\right)^2.
\]
Our main result for this Gibbs sampler is as follows.
\begin{theorem}\label{thm_new_example}
Suppose there exists a constant $N$ that for all $n\ge N$ the data satisfies $\bar{Y}:=\frac{1}{n}\sum_i Y_i\in [l,u]$ where $l$ and $u$ are two fixed constant such that $0<l<u<\infty$. Then there exists a constant $C$ such that for large enough $n$ and for all $k$, we have
\[
\|\mathcal{L}(X^{(k)})-\pi\|_{\var}\le \left[C+\left(\frac{\frac{1}{n}\sum_{i=1}^n\lambda_i^{(0)}}{\alpha}-\frac{1}{\beta^{(0)}}\right)^2\right]\gamma^k,
\]
where $\gamma<1$ is a constant.
\end{theorem}
\begin{proof}
    See \cref{proof_thm_new_example}.
\end{proof}

    Note that it is very reasonable to make some reasonable assumptions on the observed data since the posterior depends on the observed data and we are actually studying a sequence of posteriors for the convergence complexity. In \cref{thm_new_example}, we assume there exists a constant $N$ that for all $n\ge N$ the data satisfies $\bar{Y}:=\frac{1}{n}\sum_i Y_i\in [l,u]$ where $l$ and $u$ are two fixed constant such that $0<l<u<\infty$. This assumption is quite weak. For example, it holds if the data is indeed generated from the model with some ``true'' parameters.

\cref{thm_new_example} implies the following complexity bound directly.
\begin{corollary}
Under the assumptions of \cref{thm_new_example}, if the initial state satisfies $\frac{1}{n}\sum_{i=1}^n\lambda_i^{(0)}=\bigO(1)$ and $1/\beta^{(0)}=\bigO(1)$, the mixing time of the Gibbs sampler is $\bigO(1)$.
\end{corollary}

\subsection{A Random Effect Model related to the James--Stein Estimator}\label{subsec_james}

In this subsection, we concentrate on a particular MCMC model, which is related to the James--Stein estimator \cite{Rosenthal1996}:
	\[
	\begin{split}
	Y_i~|~\theta_i\quad &\sim \Normal(\theta_i,\sigma_V^2),\quad 1\le i\le n,\\
	\theta_i~|~\mu,{\sigma^2_A} &\sim \Normal(\mu,{\sigma^2_A}), \quad 1\le i\le n,\\
	\mu &\sim \textrm{ flat prior on }\mathbb{R},\\
	{\sigma^2_A} &\sim \IG(a,b),
	\end{split}
	\]
	where {${\sigma_V^2}$ is assumed to be known}, $(Y_1,\dots,Y_n)$ is the observed data, and $x=({\sigma^2_A},\mu, \theta_1,\dots,\theta_n)$ are parameters. Note that we have the number of parameters $p=n+2$ in this example. For simplicity, we will not mention $p$ but only refer to $n$ for this model. The posterior distribution satisfies
	\[
	\begin{split}
	&\mathcal{L}({\sigma^2_A},\mu,\theta_1,\dots,\theta_n~|~ Y_1,\dots,Y_n)\\
	&\propto \frac{b^a}{\Gamma(a)}({\sigma^2_A})^{-a-1}e^{-b/{\sigma^2_A}}\prod_{i=1}^n\frac{1}{\sqrt{2\pi {\sigma^2_A}}}e^{-\frac{(\theta_i-\mu)^2}{2{\sigma^2_A}}}\frac{1}{\sqrt{2\pi {\sigma_V^2}}}e^{-\frac{(Y_i-\theta_i)^2}{2{\sigma_V^2}}}.
	\end{split}
	\]
	
	A Gibbs sampler for the posterior distribution of this model has been originally analyzed in \cite{Rosenthal1996}. A quantitative bound has been derived by \citet{Rosenthal1996} using the drift-and-minorization method with a drift function $f(x)=\sum_{i=1}^n (\theta_i-\bar{Y})^2$ {where $\bar{Y}=\frac{1}{n}\sum_{i=1}^n Y_i$}. We first observe that this drift function doesn't lead to a fitted family of drift functions in high-dimensional setting. For example, select a ``typical'' state $\tilde{x}=({{\tilde{\sigma}^2_A}},\tilde{\mu},\tilde{\theta}_1,\dots,\tilde{\theta}_n)$ such that $\tilde{\theta}_i=Y_i$, we get $f(\tilde{x})=\sum_{i=1}^n(Y_i-\bar{Y})^2$. Under reasonable assumptions on the observed data $\{Y_i\}$, we can get the properly scaled drift function {$\frac{1}{n}f(\tilde{x})=\frac{1}{n}\sum_{i=1}^n(Y_i-\bar{Y})^2=\Theta(1)$}.
	Then {$b/n=\frac{1}{n}\sum_{i=1}^n(Y_i-\bar{Y})^2+\frac{n+1/4}{n}{\sigma_V^2}=\Theta(1)$} in \cite{Rosenthal1996}. Therefore, the definition of fitted family of drift functions is not satisfied. Furthermore, the established $\lambda$ in \cite{Rosenthal1996} converges to $1$ very fast, satisfying $1/(1-\lambda)=\Omega(n)$. Therefore, if we translate the quantitative bound in \cite{Rosenthal1996} into complexity orders, it requires the size of the ``small set'' to be $\Omega(n^2)$, which makes the minorization volume $\epsilon$ be exponentially small. This leads to upper bounds on the distance to stationarity which require exponentially large number of iterations to become small. This result also coincides with the observations by \citet{Rajaratnam2015} when translating the work of \citet{Khare2013,Choi2013}.
	
	\begin{remark}\label{key_remark}
	    When the dimension of the state space is fixed, \citet[Appendix C]{jones2006fixed} states a way to use rejection sampler to obtain samples from the posterior of this model. However, it is easy to show that the acceptance probability of the rejection sampler in \cite[Appendix C]{jones2006fixed} decreases very fast as the dimension increases. See \cref{proof_key_remark} for more details. As we will show the mixing time of our Gibbs sampler for this model is $\bigO(1)$, the rejection sampler in \cite[Appendix C]{jones2006fixed} is not as efficient in high dimensions as our Gibbs sampler. 
	\end{remark}

	We demonstrate the use of the modified drift-and-minorization approach by analyzing a Gibbs sampler for this MCMC model. Defining $x^{(k)}=(({\sigma^2_A})^{(k)},\mu^{(k)},\theta_1^{(k)},\dots,\theta_n^{(k)})$ to be the state of the Markov chain at the $k$-th iteration, we consider the following order of Gibbs sampling for computing the posterior distribution:
	\[
	\begin{split}
	\mu^{(k+1)}&\sim \mathcal{N}\left(\bar{\theta}^{(k)},\frac{({\sigma^2_A})^{(k)}}{n}\right),\\
	\theta_i^{(k+1)}&\sim \mathcal{N}\left(\frac{\mu^{(k+1)}{\sigma_V^2}+Y_i ({\sigma^2_A})^{(k)}}{{\sigma_V^2}+ ({\sigma^2_A})^{(k)}},\frac{({\sigma^2_A})^{(k)}{\sigma_V^2}}{{\sigma_V^2}+({\sigma^2_A})^{(k)}}\right),\quad i=1,\dots,n,\\
	({\sigma^2_A})^{(k+1)}&\sim \IG\left(a+\frac{n-1}{2},b+\frac{1}{2}\sum_{i=1}^n (\theta_i^{(k+1)}-\bar{\theta}^{(k+1)})^2\right).
	\end{split}
	\]
	Note that, in the language of \cite{jin2021convergence}, this is an ``out-of-order'' block Gibbs sampler, so inferences for the posterior distribution should
be based on a ``shifted'' output sample $((\sigma^2_A)^{(k)}, \mu^{(k+1)}, \{\theta_i^{(k+1)}\})$.  In any case, it still has the
same rate of convergence \cite[Proposition 3]{jin2021convergence} so our
convergence analysis applies to both our version and the original
block Gibbs version of \cite{Rosenthal1996}.
	
	We prove that convergence of the Gibbs sampler is actually very fast: the number of iterations required is $\bigO(1)$. More precisely, we first make the following assumptions on the observed data $\{Y_i\}$: there exists {$\delta>0$, ${\bar{\sigma}_V^2}<\infty$,} and a positive integer $N_0$, such that, {almost surely with respect to the randomness of $\{Y_i\}$}:
	{
	\[\label{eq_assumption}
	{\sigma_V^2}+\delta\le \frac{\sum_{i=1}^n (Y_i-\bar{Y})^2}{n-1}\le {\bar{\sigma}_V^2},\quad \forall n\ge N_0.
	\]
	}
	
		The {assumption} in \cref{eq_assumption} {is} quite natural. {For example, if the data is indeed generated from the model with a ``true'' variance $\sigma^2_A>0$ then \cref{eq_assumption} obviously holds}. {More generally, the upper bound is just to ensure $\sum_{i=1}^n (Y_i-\bar{Y})^2=\bigO(n)$.} For {the lower bound}, note that our MCMC model implies that the variance of $Y_i$ is larger than ${\sigma_V^2}$ because of the uncertainty of $\theta_i$. Actually, under the MCMC model, conditional on the parameter $\sigma^2_A$, the variance of the data $\{Y_i\}$ equals ${\sigma_V^2}+{\sigma^2_A}$. Therefore, the assumption in \cref{eq_assumption} is just to assume the observed data is not abnormal under the MCMC model when $n$ is large enough.    Note that only the existence of $\delta$ is required for establishing our main results. More precisely, the existence of $\delta$ is needed to obtain an upper bound for $\pi(R_0^c)$. If such  $\delta$ does not exist, the MCMC model is {(seriously)} misspecified so the posterior distribution of the parameter ${\sigma^2_A}$, which corresponds to the variance of a Normal distribution, may concentrate on $0$. In that case, our upper bound on $\pi(R_0^c)$ does not hold.

	Then we show that, under the assumption \cref{eq_assumption}, with initial state
	\[\label{initial_state}
	\bar{\theta}^{(0)}=\bar{Y},\quad 
	({\sigma^2_A})^{(0)}=\begin{cases}
	\frac{\sum_{i=1}^n (Y_i-\bar{Y})^2}{n-1}-{\sigma_V^2}, & \textrm{if }\frac{\sum_{i=1}^n (Y_i-\bar{Y})^2}{n-1}> {\sigma_V^2},\\
	\frac{\sum_{i=1}^n (Y_i-\bar{Y})^2}{n-1}, & \textrm{otherwise},
	\end{cases}
	\] 
	and $\mu^{(0)}$ arbitrary (since $\mu^{(0)}$ will be updated in the first step of the Gibbs sampler),
	the mixing time of the Gibbs sampler to guarantee small total variation distance to stationarity is bounded by some constant when $n$ is large enough.
\subsubsection{Main Results}
First, we obtain a quantitative bound for large enough $n$, which is given in the following theorem.
\begin{theorem}\label{thm_Gibbs}
	{Under the assumption \cref{eq_assumption},} with initial state \cref{initial_state}, there exists a positive integer $N$ which does not depend on $k$, some constants $C_1>0,C_2>0,C_3>0$ and $0<\gamma<1$, such that for all $n\ge N$ and for all $k$, we have
	\[\label{eq_bound_Gibbs}
	\|\mathcal{L}(X^{(k)})-\pi\|_{\var}\le C_1\gamma^k+C_2\frac{k(1+k)}{n}+C_3\frac{k}{\sqrt{n}}.
	\]
\end{theorem}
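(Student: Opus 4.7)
The plan is to apply \cref{thm_new_drift} with a carefully chosen ``centered'' drift function and a ``large set'' $R_0$ that excludes the ``bad'' states. Based on the discussion in \cref{section_drift}, the ``bad'' states in this model are those with the variance parameter $A$ close to $0$: when $A$ is small, the $\mu$-update (with variance $A/n$) and the $\theta_i$-updates (with variance $AV/(V+A)$) both nearly freeze, which forces $\lambda \to 1$ in any drift condition required to hold for all $x$. I would therefore take $R_0 = \{x : A \geq \eta\}$ for a suitable constant $\eta > 0$. The deterministic-scan Gibbs sampler decomposes into three reversible conditional updates, so condition (C1') in \cref{new_drift_cond} is the natural route; an appropriate choice of $\eta$ should make the required monotonicity-type inequality straightforward.

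For the drift function, I would replace Rosenthal's non-centered $\sum(\theta_i - \bar Y)^2$ with a properly scaled, ``centered'' variant -- roughly $f(x) = \frac{1}{n}\sum_{i=1}^n (\theta_i - Y_i)^2$, possibly plus a term penalizing deviation of $A$ from its posterior modal value -- so that $f$ nearly vanishes at the posterior mode and $\mathbb{E}_\pi[f(x)] = o(1)$. With $A \geq \eta$ enforced on $R_0$, the explicit Gaussian forms of the $\mu$- and $\theta_i$-updates should yield $\mathbb{E}[f(X^{(1)}) \mid X^{(0)} = x] \leq \lambda f(x) + b$ for $x \in R_0$, with $\lambda$ bounded away from $1$ and $b = o(1)$, uniformly in $n \geq N$. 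The ``small set'' $R = \{f \leq d\}$ can then be taken with $d = \bigO(1)$ in $n$, so bounding a product of conditional Gaussian densities pointwise below delivers a minorization constant $\epsilon$ bounded away from $0$. Together these estimates collapse the first two terms of \cref{new_quant_bound} into the $C_1 \gamma^k$ term of \cref{eq_bound_Gibbs}.

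For the tail contributions $k\pi(R_0^c)$ and $\sum_{i=1}^k P^i(\nu, R_0^c)$, I would control $\pi(R_0^c) = \pi(A < \eta)$ using assumption \cref{eq_assumption}: marginalizing out $\theta$, the conditional posterior of $A$ is close to Inverse Gamma with shape growing as $n/2$ and scale of order $n(V+\delta)$, whose left-tail mass at $\eta$ shrinks as $\bigO(1/n)$. For $P^i(\nu, R_0^c) = \Pr(A^{(i)} < \eta \mid X^{(0)} = \nu)$, the initial condition \cref{initial_state} places the chain well inside $R_0$, and one needs to show $A^{(i)}$ stays above $\eta$ with high probability for every $i \leq k$. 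The combined scaling $C_2 k(1+k)/n + C_3 k/\sqrt n$ in \cref{eq_bound_Gibbs} would correspond to a linear-in-$i$ contribution from iterated failure probabilities (feeding into $k\pi(R_0^c)$ and a cumulative union-bound term of order $k^2/n$) together with a $1/\sqrt n$-per-step contribution from the Inverse Gamma concentration of the $A$-update.

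The main obstacle will be the trajectory-level estimate on $P^i(\nu, R_0^c)$, since a single-step drift argument is insufficient here: we need $A^{(i)} \geq \eta$ to hold simultaneously for every $i \leq k$. The key step is a two-stage argument. First, conditional on $A^{(k)} \geq \eta$, one shows $\bar\theta^{(k+1)} \approx \bar Y$ and $\frac{1}{n}\sum_{j=1}^n (\theta_j^{(k+1)} - \bar\theta^{(k+1)})^2 \approx \frac{1}{n}\sum_{j=1}^n (Y_j - \bar Y)^2 \geq V + \delta$ with high probability, using Gaussian concentration in the $\mu$- and $\theta$-updates. Second, plug this estimate into a lower-tail bound for the Inverse Gamma update of $A^{(k+1)}$ to conclude $A^{(k+1)} \geq \eta$ with probability $1 - \bigO(1/\sqrt n)$. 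A union bound over iterations then yields the desired tail estimate and reproduces the $k(1+k)/n + k/\sqrt n$ scaling in the theorem.
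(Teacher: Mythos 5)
Your high-level architecture---apply \cref{thm_new_drift} with a centered drift function, a large set that excludes states with $A$ near $0$, and separate bounds for the two tail terms---is the paper's, and your identification of the ``bad'' states is exactly right. But the specific drift function you propose would not work. The paper uses $f(x)=n(\bar\theta-\bar Y)^2+n\left[\left(\frac{\Delta}{n-1}-V\right)-A\right]^2$ (\cref{eq_drift_function_Gibbs}), a function of the two summary statistics $\bar\theta$ and $A$ only. Your candidate $f(x)=\frac{1}{n}\sum_i(\theta_i-Y_i)^2$ is not centered: under $\pi$ the $\theta_i$ concentrate at the shrinkage points $\frac{\mu V+Y_iA}{V+A}$, not at $Y_i$, so $f$ concentrates near $\frac{AV}{V+A}+\frac{V^2}{(V+A)^2}\frac{\Delta}{n}=\Theta(1)$, the same order as its typical values, forcing $b=\Omega\bigl((1-\lambda)\EE_\pi[f]\bigr)=\Theta(1)$ on the scale of $f$ and violating property P2. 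More fatally, it destroys the minorization: the small set $\{f\le d\}$ with $d=\Theta(1)$ constrains $\bar\theta$ only to $|\bar\theta-\bar Y|\le\sqrt d=\Theta(1)$, whereas the first Gibbs update draws $\mu^{(1)}\sim\Normal(\bar\theta^{(0)},A^{(0)}/n)$ with standard deviation $\Theta(1/\sqrt n)$, so the overlap of the transition kernels over the small set is exponentially small in $n$---precisely the failure mode of \cite{Rosenthal1996} that the theorem is designed to avoid. The factor $n$ multiplying $(\bar\theta-\bar Y)^2$ in \cref{eq_drift_function_Gibbs} is what pins $\bar\theta$ to within $\bigO(1/\sqrt n)$ of $\bar Y$ on the small set and makes $\epsilon=\Theta(1)$ attainable (\cref{lemma_epsilon}).

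There is also an error in your trajectory estimate. You assert that conditional on $A^{(k)}\ge\eta$ alone, $\frac{1}{n}\sum_j(\theta_j^{(k+1)}-\bar\theta^{(k+1)})^2\approx\frac{1}{n}\sum_j(Y_j-\bar Y)^2\ge V+\delta$. By \cref{temp_lemma} the sample variance actually concentrates around $\frac{A^{(k)}V}{V+A^{(k)}}+\bigl(\frac{A^{(k)}}{V+A^{(k)}}\bigr)^2\frac{\Delta}{n-1}$, which equals $\frac{\Delta}{n-1}$ only at the fixed point $A^{(k)}=\frac{\Delta}{n-1}-V$; for $A^{(k)}$ near $\eta$ it is of order $\eta$, far below $V+\delta$. (The conclusion can be repaired, since the map $A\mapsto\EE[A^{(k+1)}\mid A^{(k)}=A]$ sends $[\eta,\infty)$ into $[\eta+c,\infty)$ when $\eta<\delta$, but not by the argument you give.) The paper avoids trajectory and union-bound arguments altogether: because $f(x^{(0)})=0$ under \cref{initial_state}, iterating the one-step bound of \cref{key_thm} gives $\EE[f(x^{(i)})]\le ib$, and Markov's inequality yields $\sum_{i\le k}P^i(x^{(0)},R_T^c)=\bigO(k^2/n)$; the stationary term $\pi(R_T^c)=\bigO(1/\sqrt n)$ comes from taking expectations of the drift inequality under $\pi$, a reverse H\"older inequality, and a Laplace-approximation bound $\EE_\pi(1/A)\le 2/\delta$ (\cref{lemma_R0}). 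Your alternative of bounding $\pi(A<\eta)$ directly from the marginal posterior of $A$ is viable, but it requires essentially that same Laplace-type computation, which your proposal leaves unaddressed.
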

\begin{proof} 
	Let $\Delta=\sum_{i=1}^n (Y_i-\bar{Y})^2$ and $x=({\sigma^2_A},\mu,\theta_1,\dots,\theta_n)$. Define the fitted family of drift functions $\{f_n(x)\}$ by
	\[\label{eq_drift_function_Gibbs}
	f_n(x):=n(\bar{\theta}-\bar{Y})^2+n\left[\left(\frac{\Delta}{n-1}-{\sigma_V^2}\right)-{\sigma^2_A}\right]^2.
	\] 
	Let $x^{(k)}=(({\sigma^2_A})^{(k)},\mu^{(k)},\theta^{(k)}_1,\dots,\theta^{(k)}_n)$ be the state of the Markov chain at the $k$-th iteration, then we show in \cref{key_thm} (see \cref{proof_key_thm}) that
	\[\label{orig_drift_func}
	\EE[f_n(x^{(k+1)})\,|\,x^{(k)}]\le \left(\frac{({\sigma_V^2})^2+2{\sigma_V^2}({\sigma^2_A})^{(k)}}{({\sigma_V^2})^2+2{\sigma_V^2}({\sigma^2_A})^{(k)}+(({\sigma^2_A})^{(k)})^2}\right)^2 f_n(x^{(k)})+b,\quad\forall x^{(k)}\in\mathcal{X}
	\]
	where $b=\bigO(1)$.
	
Note that in \cref{orig_drift_func}, the term $\left(\frac{({\sigma_V^2})^2+2{\sigma_V^2}({\sigma^2_A})^{(k)}}{({\sigma_V^2})^2+2{\sigma_V^2}({\sigma^2_A})^{(k)}+(({\sigma^2_A})^{(k)})^2}\right)^2$ depends on the coordinate $A^{(k)}$ of the state $x^{(k)}$ and is not bounded away from $1$, since $({\sigma^2_A})^{(k)}$ can be arbitrarily close to $0$. Therefore, $\left(\frac{({\sigma_V^2})^2+2({\sigma_V^2})({\sigma^2_A})^{(k)}}{V^2+2{\sigma_V^2}({\sigma^2_A})^{(k)}+(({\sigma^2_A})^{(k)})^2}\right)^2$ cannot be bounded by some $\lambda$ such that $0<\lambda<1$ and we cannot directly establish the traditional drift condition \cref{eq_old_drift} by \cref{orig_drift_func}. In the following, we establish the generalized drift condition \cref{new_drift_cond} using a ``large set''.

According to \cref{eq_assumption}, for large enough $n$, we have $\frac{\Delta}{n-1}>{\sigma_V^2}$. Then, we choose a threshold $T$ such that, for large enough $n$, we have $0<T<\frac{\Delta}{n-1}-{\sigma_V^2}$. Defining
$\lambda_T:=\left(\frac{({\sigma_V^2})^2+2{\sigma_V^2}T}{({\sigma_V^2})^2+2{\sigma_V^2}T+T^2}\right)^2<1$,
we get
\[
\EE[f_n(x^{(k+1)})\,|\,x^{(k)}]\le \lambda_T\, f_n(x^{(k)})+b,\quad\forall x\in R_T.
\]
where the ``large set'', $R_T$, is defined by
\[\label{eq_large_set}
R_T:=\left\{x\in\mathcal{X}: \left[\left(\frac{\Delta}{n-1}-{\sigma_V^2}\right)-{\sigma^2_A}\right]^2\le \left[\left(\frac{\Delta}{n-1}-{\sigma_V^2}\right)-T\right]^2\right\}.
\]
In order to satisfy the new drift condition  in \cref{new_drift_cond}, we verify (C1'). Note that in our example the transition kernel of the Gibbs sampler can be written as a composition of reversible steps and only the last step of the Gibbs sampler updates the parameter ${\sigma^2_A}$ which is used for defining the ``large set'' $R_T$. Therefore, in order to verify \cref{equation_C1prime}, it suffices to check the last step if the value of the drift function increases by updating $x^{(k)}\in R_T$ to $x^{(k+1)}\in R_T^c$.
By the definition of $R_T$, we have
\[
\begin{split}
\left[\left(\frac{\Delta}{n-1}-{\sigma_V^2}\right)-({\sigma^2_A})^{(k)}\right]^2\le \left[\left(\frac{\Delta}{n-1}-{\sigma_V^2}\right)-T\right]^2,\quad \forall x^{(k)}\in R_T\\
\left[\left(\frac{\Delta}{n-1}-{\sigma_V^2}\right)-({\sigma^2_A})^{(k+1)}\right]^2>\left[\left(\frac{\Delta}{n-1}-{\sigma_V^2}\right)-T\right]^2,\quad \forall x^{(k+1)}\notin R_T.
\end{split}
\]
This implies the value of $f_n(x)$ increases if the Markov chain is outside of the ``large set'' after updating ${\sigma^2_A}$. Therefore, the generalized drift condition in \cref{new_drift_cond} is satisfied.

Now we can use \cref{thm_new_drift} to derive a quantitative bound for the Gibbs sampler. We first show in \cref{lemma_epsilon} (see \cref{proof_lemma_epsilon}) that
	if $T=\Theta(1)$, by choosing the size of the ``small set'' $R=\{x\in\mathcal{X}: f_n(x)\le d\}$ to satisfy $d=\bigO(1)$ and $d> \frac{b}{1-\lambda_T}$, {there exists a probability measure $Q(\cdot)$ such that} the Markov chain satisfies a minorization condition in \cref{eq_minorization} with the minorization {volumne} $\epsilon=\Theta(1)$. 
	
Next, we show in \cref{lemma_R0} (see \cref{proof_lemma_R0}) that with the initial state given by \cref{initial_state}, there exists a positive integer $N$, which does not depend on $k$, such that for all $n\ge N$, we have
	\[\label{eq_tail_bound}
	\begin{split}
	&k\,\pi(R_T^c)+ \sum_{i=1}^k P^i (x^{(0)},R_T^c)\\
	&\le   \frac{k}{\sqrt{n}} \frac{\sqrt{b}(2{\sigma_V^2}/\delta+1)}{\left|\left(\frac{\Delta}{n-1}-{\sigma_V^2}\right)-T\right|} + \frac{k(1+k)}{2n}\frac{b}{\left[\left(\frac{\Delta}{n-1}-{\sigma_V^2}\right)-T\right]^2}.
	\end{split}
	\]

Now we derive a quantitative bound for the Gibbs sampler for large enough $n$ by combing results together. 
First, from \cref{orig_drift_func}, we have $b=\bigO(1)$. Recall that $\lambda_T=\left(\frac{({\sigma_V^2})^2+2{\sigma_V^2}T}{({\sigma_V^2})^2+2{\sigma_V^2}T+T^2}\right)^2$. We obtain $\frac{b}{1-\lambda_T}=\bigO(1)$ by choosing $T=\Theta(1)$. Since $d>\frac{b}{1-\lambda_T}$, we can choose the size of small set to be $d=\bigO(1)$. Then we have shown that the minorization volume $\epsilon=\Theta(1)$. For $Q(R_T)$, we know that $P(x^{(0)},R_T^c)=\bigO(1/n)$, where $x^{(0)}\in R$. This implies that $\epsilon Q(R_T^c)=\bigO(1/n)$. Since $\epsilon=\Theta(1)$, we have $\epsilon Q(R_T)=\epsilon- \epsilon Q(R_T^c)=\Theta(1)$. Furthermore, by definition $\alpha^{-1}=\frac{1+2b+\lambda_T d}{1+d}<1$, it can be verified that $\alpha^{-1}$ is bounded away from $0$ when $T=\Theta(1)$ and $d=\bigO(1)$. Next, since $\Lambda=1+2(\lambda_T d+b)=\Theta(1)$, {ignoring the term $\alpha^{rk}$ in \cref{new_quant_bound},} we choose $r=\log(\alpha)/\log(\alpha\Lambda/(1-\epsilon Q(R_T)))$ to balance the order of $(1-\epsilon Q(R_T))^r$ and $\alpha^{-1}(\alpha\Lambda)^r$ and define $\gamma:=(1-\epsilon Q(R_T))^r=\alpha^{-1}(\alpha\Lambda)^r$. Then we have $\gamma=\Theta(1)$ and $0<\gamma<1$. Furthermore, since  $f_n(x^{(0)})=0$ for large enough $n$ and $\frac{b}{1-\lambda_T}=\bigO(1)$, we can pick a constant $C_1$ such that $C_1\ge 2+\frac{b}{1-\lambda_T}$ for large enough $n$.
Finally, we have $k\pi(R_T^c)+\sum_{i=1}^k P^i (x^{(0)},R_T^c)\le C_2\frac{k(1+k)}{n}+C_3\frac{k}{\sqrt{n}}$ by \cref{eq_tail_bound}, then \cref{thm_Gibbs} follows from \cref{thm_new_drift}.
\end{proof}

Next, we translate the quantitative bound in \cref{thm_Gibbs} into the convergence complexity in terms of mixing time using similar arguments as \cref{new_coro1} and \cref{new_coro2}. We show the convergence complexity is $\bigO(1)$. Intuitively, to make the term $C_1\gamma^k$  in \cref{eq_bound_Gibbs} arbitrarily small, $k$ needs to have a complexity order of $\bigO(1)$ since $\gamma$ does not depend on $n$. The residual terms $C_2\frac{k(1+k)}{n}+C_3\frac{k}{\sqrt{n}}\to 0$ when $k=o(\sqrt{n})$. Therefore, the complexity bound on the mixing time of the Gibbs sampler equals the smaller complexity order between $\bigO(1)$ and $o(\sqrt{n})$, which is $\bigO(1)$. The formal result is given in the following. 

\begin{theorem}\label{coro_mixing_time}
	For any $0<c<1$, recall the definition of the mixing time {$K_{c,x}$} in \cref{def_mixingtime}. We write {$K_{c,x}$} as {$K_{c,x}(n)$} to emphasize its dependence on $n$. Under the assumptions of \cref{thm_Gibbs}, {with initial state $x^{(0)}$ given by \cref{initial_state},} 
	there exists $N_c=\Theta(1)$ and $\bar{K}_c=\Theta(1)$ such that 
	\[
	\begin{split}
	{K_{c,x^{(0)}}(n)}&\le \bar{K}_c, \quad\forall n\ge N_c.
	\end{split}
	\]
\end{theorem}
\begin{proof}
See \cref{proof_coro_mixing_time}.
\end{proof}

\subsubsection{Initial state}\label{discussion_initial_state} The main results in \cref{thm_Gibbs} and \cref{coro_mixing_time} hold for a particular initial state given in \cref{initial_state}. We discuss other initial states than the one given in \cref{initial_state}. Note that the new bound in \cref{key_thm} holds for any initial state that is in the ``large set''. Therefore, we can extend the results in   \cref{thm_Gibbs} to get bounds when the Markov chain starts from some other initial states in the ``large set''. Recall the assumption on the observed data $\{Y_i\}$ in \cref{eq_assumption}, we have assumed there exists $\delta>0$ such that $\frac{\sum_{i=1}^n (Y_i-\bar{Y})^2}{n-1}\ge {\sigma_V^2}+\delta$ for large enough $n$. Note that the existence of such $\delta$ is sufficient to obtain the results in \cref{thm_Gibbs} and \cref{coro_mixing_time}. In order to get bounds when the MCMC algorithm starts from other initial states, we assume $\delta$ is known and establish upper bounds using $\delta$ explicitly. We define the ``large set'' \cref{eq_large_set} using $T=\delta$ and the extension of \cref{thm_Gibbs} is given in the following.
\begin{theorem}\label{thm_Gibbs_extension} Let $\Delta=\sum_{i=1}^n (Y_i-\bar{Y})^2$. {Under the assumption \cref{eq_assumption},} 
	if the Markov chain starts with any initial state $x^{(0)}\in R_{\delta}$ {(defined in \cref{eq_large_set} with $T=\delta$),}
	there exists a positive integer $N$, which does not depend on $k$, some constants $C_1>0,C_2>0,C_3>0,C_4>0$ and $0<\gamma<1$, such that for all $n\ge N$ and for all $k$, we have
	\[
	\|\mathcal{L}(X^{(k)})-\pi\|_{\var}\le [C_1+f_n(x^{(0)})]\gamma^k+C_2\frac{k(1+k)}{n}+C_3\frac{k}{\sqrt{n}}+C_4 f_n(x^{(0)})\frac{k}{n},
	\]
	where $f_n(\cdot)$ is the fitted family of drift functions defined in \cref{eq_drift_function_Gibbs}.
\end{theorem}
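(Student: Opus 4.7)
The plan is to mirror the proof of \cref{thm_Gibbs} with $T=\delta$ throughout, tracking where the specific initial state in \cref{initial_state} was used and replacing those steps by ones that retain an explicit $f(x^{(0)})$ dependence. The three inputs to \cref{thm_new_drift} — the generalized drift condition, the minorization condition, and the tail bound on $k\pi(R_\delta^c)+\sum_{i=1}^k P^i(x^{(0)},R_\delta^c)$ — must each be re-examined, but only the third one actually changes in a nontrivial way.

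First, I would invoke \cref{key_thm} restricted to $R_\delta$. Because $\delta=\Theta(1)$ and $\frac{\Delta}{n-1}-V\ge\delta$ for $n$ large, the factor $\bigl(\frac{V^2+2VA}{V^2+2VA+A^2}\bigr)^2$ is bounded by $\lambda_\delta:=\bigl(\frac{V^2+2V\delta}{V^2+2V\delta+\delta^2}\bigr)^2<1$ for every $x\in R_\delta$, so the generalized drift condition holds with $\lambda_\delta=\Theta(1)$ and $b=\Theta(1)$. The verification of (C1$'$) is identical to that in the proof of \cref{thm_Gibbs}: only the $A$-update can move the chain out of $R_\delta$, and the definition of $R_\delta$ is tailored so that any such move increases $f$. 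The minorization step (\cref{lemma_epsilon}) then applies verbatim, producing $\epsilon=\Theta(1)$ on a small set of size $d=\Theta(1)$.

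Second, I would redo \cref{lemma_R0} for a general $x^{(0)}\in R_\delta$. The bound on $\pi(R_\delta^c)$ does not involve the initial state and is unchanged, giving the $C_3 k/\sqrt{n}$ term. For $\sum_{i=1}^k P^i(x^{(0)},R_\delta^c)$, the idea is to iterate the drift: by Markov's inequality applied to the coordinate $n[\frac{\Delta}{n-1}-V-A]^2\le f$, I have $P^i(x^{(0)},R_\delta^c)\le \EE[f(x^{(i)})\mid x^{(0)}]/(n[\frac{\Delta}{n-1}-V-\delta]^2)$. Using a first-exit decomposition so that the drift condition may be applied at each step while the chain lies in $R_\delta$, one obtains $\EE[f(x^{(i)})\mid x^{(0)}]\le \lambda_\delta^i f(x^{(0)}) + \frac{b}{1-\lambda_\delta} + (\text{contributions counted in } \pi(R_\delta^c))$. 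Summing over $i=1,\ldots,k$ yields an upper bound of the form $C_2\frac{k(1+k)}{n}+C_4\frac{k\,f(x^{(0)})}{n}$, where the $k(1+k)/n$ term is the constant-in-$i$ contribution already appearing in \cref{lemma_R0} and the new $k f(x^{(0)})/n$ term comes from the initial-state part.

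Third, I would apply \cref{thm_new_drift} with $\nu$ equal to the point mass at $x^{(0)}$, so $\EE_\nu[f]=f(x^{(0)})$. The first two terms of \cref{new_quant_bound} then read $(1-\epsilon Q(R_\delta))^{rk}+\frac{(\alpha\Lambda)^{rk}[1+f(x^{(0)})+b/(1-\lambda_\delta)]-\alpha^{rk}}{\alpha^k-\alpha^{rk}}$. As in the proof of \cref{thm_Gibbs}, the choice $r=\log\alpha/\log(\alpha\Lambda/(1-\epsilon Q(R_\delta)))$ balances the two exponential rates and collapses these two terms to a bound of the form $[C_1+f(x^{(0)})]\gamma^k$ with $\gamma:=(1-\epsilon Q(R_\delta))^r=\alpha^{-1}(\alpha\Lambda)^r\in(0,1)$ and $C_1,\gamma=\Theta(1)$. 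Adding the tail bound from the previous step gives exactly the inequality claimed in \cref{thm_Gibbs_extension}. The main obstacle is the adapted tail bound: one must argue cleanly that iterating the drift condition, which formally only holds on $R_\delta$, still produces the $\lambda_\delta^i f(x^{(0)})$ decay needed to obtain the clean $C_4 f(x^{(0)}) k/n$ term and not something worse in $k$; the first-exit decomposition (or, equivalently, bounding $f$ on the restricted chain constructed via (C1$'$)) is the natural way to do this.
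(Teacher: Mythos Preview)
Your treatment of the drift condition, the verification of (C1$'$), the minorization, and the collapse of the first two terms of \cref{new_quant_bound} into $[C_1+f(x^{(0)})]\gamma^k$ is essentially identical to the paper's, and correct.

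The one place where you diverge is the tail bound $\sum_{i=1}^k P^i(x^{(0)},R_\delta^c)$, and here you are working much harder than necessary and leave a loose end. You treat the drift with contraction $\lambda_\delta$ as holding only on $R_\delta$, and therefore propose a first-exit decomposition, with unspecified ``contributions counted in $\pi(R_\delta^c)$''. That last phrase is not a well-defined term in a bound on $\EE[f(X^{(i)})\mid X^{(0)}=x^{(0)}]$, and a first-exit argument would still have to control $f$ on excursions outside $R_\delta$, which is exactly what you are trying to avoid. Your accounting is also off: if you really had $\EE[f(X^{(i)})]\le \lambda_\delta^i f(x^{(0)})+b/(1-\lambda_\delta)$, the sums would be $\Theta(f(x^{(0)})/n)$ and $\Theta(k/n)$, not $k f(x^{(0)})/n$ and $k(1+k)/n$; the $k(1+k)/n$ in \cref{lemma_R0} came from summing the \emph{linear}-in-$i$ term $ib$, not a constant one.

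The paper's route is simpler and avoids all of this. The key observation you missed is that \cref{key_thm} holds for \emph{every} $x^{(k)}\in\mathcal{X}$, and the state-dependent factor $\bigl(\tfrac{V^2+2VA^{(k)}}{V^2+2VA^{(k)}+(A^{(k)})^2}\bigr)^2$ is always $\le 1$. Hence, without restricting to $R_\delta$ at all, one iterates $\EE[f(X^{(i)})]\le f(x^{(0)})+ib$, and Markov's inequality gives
\[
\sum_{i=1}^k P^i(x^{(0)},R_\delta^c)\le \frac{1}{n\bigl[\tfrac{\Delta}{n-1}-V-\delta\bigr]^2}\sum_{i=1}^k\bigl(f(x^{(0)})+ib\bigr)= C_4\frac{k\,f(x^{(0)})}{n}+C_2\frac{k(1+k)}{n},
\]
which is exactly the form claimed. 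No first-exit decomposition, no restricted chain, no appeal to $\pi(R_\delta^c)$ is needed for this term.
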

\begin{proof}
	Following the same proof of \cref{thm_Gibbs} by keeping the term $f_n(x^{(0)})$, the first two terms of the upper bound given in \cref{new_quant_bound} can be replaced by $[C_1+f_n(x^{(0)})]\gamma^k$ and the last term of the upper bound in \cref{new_quant_bound} can be replaced by $\sum_{i=1}^k P^i (x^{(0)},R_{\delta}^c)\le C_2\frac{k(1+k)}{n}+C_4 f_n(x^{(0)})\frac{k}{n}$.
\end{proof}
From \cref{thm_Gibbs_extension}, using similar arguments as \cref{new_coro1}, we can immediately obtain a complexity bound when the Markov chain starts within a subset of the ``large set'', which is given in the following. This result suggests that if the Markov chain starts from an initial state which is not ``too far'' from the state given in \cref{initial_state}, the Markov chain still mixes fast. The mixing time becomes $\bigO(\log n)$ instead of $\bigO(1)$.
\begin{corollary}
	{Under the assumption \cref{eq_assumption}, if the initial state of the Markov chain satisfies}  $x^{(0)}\in \left\{x\in R_{\delta}: f_n(x)=o(n/\log n) \right\}$, the mixing time of the Gibbs sampler {satisfies $K_{c,x^{(0)}}=\bigO(\log n)$ for any given $0<c<1$.}
\end{corollary}
Note that $\left\{x\in R_{\delta}: f_n(x)=o(n/\log n) \right\}$ defines a subset of the ``large set'' $R_{\delta}$, and the above result shows that the mixing time is $\bigO(\log n)$ if the initial state is in this subset. {The order $o(n/\log n)$ comes from a balance between $f_n(x^{(0)})\gamma^k$ and $f_n(x^{(0)})\frac{k}{n}$.} We conjecture the same complexity order of $\bigO(\log n)$ on the mixing time may hold even if the initial state is in a larger subset, for example $\left\{x^{(0)}\in R_{\delta}: f_n(x^{(0)})=\Theta(n) \right\}$. However, in order to prove this, we need to derive tighter upper bound of $\sum_{i=1}^k P^i (x^{(0)},R_{\delta}^c)$ which is a non-trivial task. We therefore leave it as an open problem.

Finally, we do not have upper bounds for the Markov chain when the initial state is outside of the ``large set'' since the new bound in \cref{thm_new_drift} requires the Markov chain starts within the ``large set''. For this particular Gibbs sampler example, numerical experiments suggest that, if the Markov chain starts from a ``bad'' state, the number of iterations required for the Markov chain to mix can be much larger than $\bigO(\log n)$. In high-dimensional {settings}, when the dimension of the state space goes to infinity, the Markov chain may not mix fast starting from any state. This observation is loosely consistent with various observations in \cite{Hairer2011}.

\subsubsection{Discussions}

We end this section by giving some further remarks and comments on the analysis of the Gibbs sampler.
\begin{itemize}
    \item \emph{Drift function:} In the proof of \cref{thm_Gibbs}, we actually used a fitted family of drift functions if we scale the drift functions in \cref{eq_drift_function_Gibbs} by $1/n$. To check this, we select a ``typical'' state $\tilde{x}=({{\tilde{\sigma}^2_A}},\tilde{\mu},\tilde{\theta}_1,\dots,\tilde{\theta}_n)$ such that $\tilde{\theta}_i=Y_i$ and ${\tilde{\sigma}^2_A}=\frac{\sum_{i=1}^n(Y_i-\bar{Y})^2}{n-1}$ then the scaled drift function $f_n(\tilde{x})/n=n({\sigma_V^2})^2/n=\Theta(1)$. We then hope to establish $b$ such that $b/n=o(1)$, or equivalently, $b=o(n)$. Indeed, the established generalized drift condition has $b=\bigO(1)=o(n)$, which implies the definition of fitted family of drift functions is satisfied for $\{f_n(x)/n\}$. 
    \item \emph{``Large set'':} The result in \cref{orig_drift_func} implies that those states whose value of ${\sigma^2_A}$ are close to zero are ``bad'' states. Therefore, the goal of choosing the ``large set'' in \cref{eq_large_set} is to {ruling out} those states. Note that we have applied the trick that ruling more states with ``high energy'' could make \cref{equation_C1prime} easier to establish. In the ``large set'' $R_T$ defined by \cref{eq_large_set}, we have also ruled out the states $x$ whose value of ${\sigma^2_A}$ are larger than $\left|\left(\frac{\Delta}{n-1}-{\sigma_V^2}\right)-T\right|+\left(\frac{\Delta}{n-1}-{\sigma_V^2}\right)$. Note that these states are not ``bad'' states. However, by ruling them out, {it is easy to establish \cref{equation_C1prime} as shown in the proof of \cref{thm_Gibbs}.}
    \item \emph{The upper bound in \cref{eq_tail_bound}:} Although the upper bound of $k\,\pi(R_T^c)+ \sum_{i=1}^k P^i (x^{(0)},R_T^c)$ shown in \cref{eq_tail_bound} is loose, it is already enough for showing the mixing time of the Gibbs sampler is $\bigO(1)$. The proof of \cref{lemma_R0} only makes use of the form of drift function and the definition of ``large set'', and does not depend on the particular form of the transition kernel of the Gibbs sampler. We expect that, in general, tighter upper bounds on $k\,\pi(R_T^c)+ \sum_{i=1}^k P^i (x^{(0)},R_T^c)$ could be obtained, depending on the choice of ``large set'' and the MCMC algorithm to be analyzed. This may involve carefully bounding the tail probability of the transition kernel.
    \item \emph{The constants in \cref{thm_Gibbs}:}
In \cref{thm_Gibbs}, we do not compute the constants $N$, $C_1$, $C_2$, and $C_3$ explicitly. Actually, $C_2$ is given explicitly in \cref{lemma_R0}. $C_3$ is given in \cref{lemma_R0} but it depends on the unknown constant $\delta>0$ from the assumption \cref{eq_assumption}. Furthermore, $C_1$ can be explicitly computed under much more tedious computations. Finally, $N$ depends on the unknown constant $N_0$ in \cref{eq_assumption} and the resulting concentration property of the posterior distribution for parameter ${\sigma^2_A}$ by \cref{eq_assumption}. Therefore, if we make stronger assumptions on the observed data $\{Y_i\}$, it is then possible to compute all the constants in \cref{thm_Gibbs} explicitly under tedious computations, though we do not pursue that here.
\end{itemize}


\begin{appendix}
	\section{Proof of Theorem \ref{thm_new_drift}}\label{proof_thm_new_drift}

Recall that $R$ denotes the ``small set'' and $R_0$ denotes the ``large set''. We first construct a  transition kernel for a ``restricted'' chain define on $R_0$, $\tilde{P}(x,\cdot), \forall x\in R_0$. One goal of this construction is that the stationary distribution of the kernel $\tilde{P}$ equals to the $\pi(\cdot)$ restricted on the ``large set'' $R_0$, i.e., $\pi'(\dee x):=\pi(\dee x)/\pi(R_0), \forall x\in R_0$. We consider two different constructions depending on (C1) or (C1') in \cref{new_drift_cond} holds.
\begin{itemize}
	\item If (C1) in \cref{new_drift_cond} holds, then we define the kernel $\tilde{P}$ as the transition kernel of the ``trace chain'' constructed as follows. Let $X^{(m)}$ be a Markov chain with kernel $P$, we define a sequence of random entrance time $\{m_i\}_{i\in\Nats}$ by $m_0:=\min\{m\ge 0: X^{(m)}\in R_0 \}$, $m_i:=\min\{m > m_{i-1}: X^{(m)}\in R_0 \}$. Then $\{X^{(m_i)}\}_{i\in \Nats}$ is the ``trace chain'' and the transition kernel $\tilde{P}(x,B):=\Pr(X^{(m_1)}\in B\,|\, X^{(m_0)}=x),\forall x\in R_0$. It is clear that the ``trace chain'' is obtained by ``stopping the clock'' when the original chain is outside $R_0$, the constructed $\tilde{P}$ is a valid transition kernel. It can be verified that the stationary distribution of this ``trace chain'' is $\pi'$.
	\item If (C1') in \cref{new_drift_cond} holds, then we construct the ``restricted chain'' using the kernel $\tilde{P}=\prod_{i=1}^I \tilde{P}_i$ where $\tilde{P}_i(x,\dee y):=P_i(x,\dee y)$ for $x,y\in R_0, x\neq y$, and $\tilde{P}_i(x,x):=1- P_i(x,R_0 \backslash \{x\}),\forall x\in R_0$. Note that since each $P_i$ is reversible, one can easily verify that each $\tilde{P}_i$ is also reversible and the stationary distribution of $\tilde{P}$ is $\pi'$.
\end{itemize}

Suppose that $X^{(m)}$ and $Y^{(m)}$ are two realizations of the Markov chain, where $X^{(m)}$ starts with the initial distribution $\nu(\cdot)$ and  $Y^{(m)}$ starts with the stationary distribution $\pi(\cdot)$. We define $\tilde{X}^{(m)}$ and $\tilde{Y}^{(m)}$ to be two realizations of a constructed ``restricted'' Markov chain on the ``large set'' with the transition kernel $\tilde{P}(x,\cdot), \forall x\in R_0$. 
We assume $\tilde{X}^{(m)}$ starts with the same initial distribution $\nu(\cdot)$ as $X^{(m)}$ and $\tilde{Y}^{(m)}$ starts with $\pi'(\cdot)$. {Since $\nu(R_0)=1$, we assume $X^{(0)}=\tilde{X}^{(0)}$.}
This rest of the proof is a modification of the original proof of the drift-and-minorization method using coupling in \cite{Rosenthal1995a}. 

We define the hitting times of $(\tilde{X}^{(m)},\tilde{Y}^{(m)})$ to $R\times R$ as follows.
\[
\begin{split}
t_1:&=\inf\{m\ge 0: (\tilde{X}^{(m)},\tilde{Y}^{(m)})\in R\times R\},\\
t_i:&=\inf\{m\ge t_{i-1}+1: (\tilde{X}^{(m)},\tilde{Y}^{(m)})\in R\times R\},\quad \forall i>1.
\end{split}
\]
Let $N_k:=\max\{i: t_i<k\}$. Then $N_k$ denotes the number of $(\tilde{X}^{(m)},\tilde{Y}^{(m)})$ to hit $R\times R$ in the first $k$ iterations. The following result gives an upper bound for $\|\mathcal{L}(X^{(k)})-\mathcal{L}(Y^{(k)})\|_{\var}$.
\begin{lemma}\label{lemma_tmp1} When the Markov chain satisfies the minorization condition in \cref{eq_minorization}, for any $j>0$, we have
	\[
	\begin{split}
	\|\mathcal{L}(X^{(k)})-\mathcal{L}(Y^{(k)})\|_{\var}\le& (1-\epsilon Q(R_0))^j+\Pr(N_k<j)\\
	&\quad +k\,\pi(R_0^c)+{\sum_{i=1}^k \nu P^i (R_0^c)}.
	\end{split}
	\]
\end{lemma}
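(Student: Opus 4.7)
The plan is to apply a triangle inequality that inserts both restricted chains between $X^{(k)}$ and $Y^{(k)}$:
\[
\|\mathcal{L}(X^{(k)})-\mathcal{L}(Y^{(k)})\|_{\var} \;\le\; T_1 + T_2 + T_3,
\]
where $T_1 = \|\mathcal{L}(X^{(k)})-\mathcal{L}(\tilde{X}^{(k)})\|_{\var}$, $T_2 = \|\mathcal{L}(\tilde{X}^{(k)})-\mathcal{L}(\tilde{Y}^{(k)})\|_{\var}$, and $T_3 = \|\mathcal{L}(\tilde{Y}^{(k)})-\mathcal{L}(Y^{(k)})\|_{\var}$. The two outer terms will contribute the tail-probability parts of the bound, and the middle term will produce the drift-and-minorization parts.

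For $T_1$, I will couple $X^{(m)}$ and $\tilde{X}^{(m)}$ so that, under either construction of $\tilde{P}$ from (C1) or (C1'), they share trajectories up until the first moment that $X^{(m)}$ exits $R_0$. Since $\nu(R_0)=1$ the chain starts inside $R_0$, and a union bound on the marginal exit events yields $T_1 \le \sum_{i=1}^k P^i(\nu, R_0^c)$. For $T_3$, I will first couple $Y^{(0)}$ and $\tilde{Y}^{(0)}$ optimally (the initial mismatch probability is $\|\pi-\pi'\|_{\var} = \pi(R_0^c)$), and then employ the same ``stay coupled until $Y$ first exits $R_0$'' coupling. Using the stationarity $Y^{(m)} \sim \pi$, the union-bound argument over $k$ potential exit events yields $T_3 \le k\,\pi(R_0^c)$ after merging the initial mismatch into one of the $k$ terms.

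For $T_2$, I will apply the classical drift-and-minorization coupling of \cite{Rosenthal1995a} to the restricted pair. The key input is that the minorization of $P$ on $R$ transfers to $\tilde{P}$ with strength at least $\epsilon Q(R_0)$: for $x \in R \subseteq R_0$, both the trace-chain construction in (C1) and the composition construction in (C1') give $\tilde{P}(x, B) \ge \epsilon Q(B \cap R_0)$ for measurable $B \subseteq R_0$, and hence $\tilde{P}(x,\cdot) \ge \epsilon Q(R_0) \cdot [Q(\cdot \cap R_0)/Q(R_0)]$. Using the hitting times $t_i$ of the joint chain into $R \times R$, we run an independent coupling attempt at each visit: after $j$ attempts, the probability that none succeeds is at most $(1-\epsilon Q(R_0))^j$, and the event that fewer than $j$ attempts have occurred by time $k$ is exactly $\{N_k < j\}$. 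Combining,
\[
T_2 \;\le\; (1-\epsilon Q(R_0))^j + \Pr(N_k < j).
\]

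The main obstacle is the careful transfer of the minorization from $P$ on $R$ to $\tilde{P}$ under both cases (C1) and (C1'), and defining the $T_1$/$T_3$ couplings so that trajectories really do coincide until the first excursion of the original chain outside $R_0$. The stationarity of $\pi'$ for $\tilde{P}$ (already noted in the construction at the start of the proof) is what allows the standard coupling argument to close for $T_2$; once these ingredients are in place, summing the three bounds yields the claim.
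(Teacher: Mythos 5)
Your proposal is correct and follows essentially the same route as the paper: the same triangle inequality through the restricted chains $\tilde{X}^{(k)},\tilde{Y}^{(k)}$, a coupling/union bound over exit events from $R_0$ for the two outer terms, and the classical drift-and-minorization coupling (with the minorization transferred to $\tilde{P}$ at strength $\epsilon Q(R_0)$) for the middle term. The only difference is that you spell out the couplings behind the ``coupling inequality'' step more explicitly than the paper does.
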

\begin{proof}
	First, by triangle inequality
	\begin{equation}
	\begin{split}
	\|\mathcal{L}(X^{(k)})-\mathcal{L}(Y^{(k)})\|_{\var}&\le \|\mathcal{L}(\tilde{X}^{(k)})-\mathcal{L}(\tilde{Y}^{(k)})\|_{\var} + \|\mathcal{L}(X^{(k)})-\mathcal{L}(\tilde{X}^{(k)})\|_{\var}\\
	& \quad+ \|\mathcal{L}(Y^{(k)})-\mathcal{L}(\tilde{Y}^{(k)})\|_{\var}.
	\end{split}
	\end{equation}
	{By the coupling inequality $\|\mathcal{L}(X^{(k)})-\mathcal{L}(\tilde{X}^{(k)})\|_{\var}\le \Pr(X^{(k)}\neq \tilde{X}^{(k)})\le \sum_{m=1}^k\Pr(X^{(m)}\notin R_0)$, we have}
	\begin{equation}
	\begin{split}
	& \|\mathcal{L}(Y^{(k)})-\mathcal{L}(\tilde{Y}^{(k)})\|_{\var}+\|\mathcal{L}(X^{(k)})-\mathcal{L}(\tilde{X}^{(k)})\|_{\var}\\
	&\le \sum_{m=1}^k\Pr\left(Y^{(m)}\notin R_0\right)+\sum_{m=1}^k\Pr\left(X^{(m)}\notin R_0\right)\\
	&\le k\,\pi(R_0^c)+\sum_{i=1}^k {\nu P^i (R_0^c)}.
	\end{split}
	\end{equation}
	Finally, the Markov chain with kernel $\tilde{P}(x,\cdot)$ satisfies both drift condition
	\[\label{Ptilde_drift}
	\EE(f(\tilde{X}^{(1)})\,|\, \tilde{X}^{(0)}=x)\le \lambda f(x)+b, \quad \forall x\in R_0,
	\]
	and minorization condition
	\[
	\tilde{P}(x,\dee y)\ge [\epsilon Q(R_0)]\frac{Q(\dee y)}{Q(R_0)},\quad \forall x, y\in R_0.
	\]
	Using the result from \cite[Theorem 1]{Rosenthal1995a}, we have
	\[
	\|\mathcal{L}(\tilde{X}^{(k)})-\mathcal{L}(\tilde{Y}^{(k)})\|_{\var}\le (1-\epsilon Q(R_0))^j +\Pr(N_k<j).
	\]
\end{proof}

Next, we further upper bound the term $\Pr(N_k<j)$ slightly tighter than \cite{Rosenthal1995a}. Define the $i$-th gap of return times by
$r_i:=t_i-t_{i-1}, \forall i>1$, 
then 
\begin{lemma} \label{lemma_tmp2}
For any $\alpha>1$ and $j>0$, and $k>j$,
	\[
	\Pr(N_k<j)\le \frac{1}{\alpha^k-\alpha^j}\left[\EE\left(\prod_{i=1}^j \alpha^{r_i}\right)-\alpha^j\right].
	\]
\end{lemma}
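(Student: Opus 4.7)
The plan is to translate the event $\{N_k < j\}$ into a statement about a product of exponentials and then apply a \emph{shifted} Markov inequality, rather than the naive $\Pr(W \geq d) \leq \EE[W]/d$; the shift is precisely what accounts for the extra $\alpha^j$ appearing in both numerator and denominator of the claimed bound, and is what makes the resulting quantitative bound slightly tighter than the original Rosenthal-style one.

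First, I would rewrite the event. By the definition of $N_k$, the event $\{N_k < j\}$ is precisely $\{t_j \geq k\}$, i.e., the $j$-th joint visit to $R \times R$ has not occurred by step $k$. Adopting the natural convention $r_1 := t_1$ (extending the definition of $r_i$ to $i = 1$), we have $t_j = \sum_{i=1}^j r_i$, so
\[
\Of{N_k < j} = \Ooof{\sum_{i=1}^j r_i \geq k} = \Ooof{\prod_{i=1}^j \alpha^{r_i} \geq \alpha^k},
\]
where the last equivalence uses $\alpha > 1$.

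Next, I would record an almost-sure lower bound on the product. Because the hitting times satisfy $t_i \geq t_{i-1} + 1$ by the definition of $t_i$ for $i > 1$, each gap obeys $r_i \geq 1$ for $i \geq 2$; combined with the mild convention $r_1 \geq 1$ (the degenerate case $t_1 = 0$, if it occurs with positive probability, can be split off and only improves the bound), this gives $\prod_{i=1}^j \alpha^{r_i} \geq \alpha^j$ almost surely.

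Finally, I would apply the following standard sharpening of Markov's inequality: for any nonnegative random variable $W$ with $W \geq c$ almost surely and any $d > c$,
\[
\Pr(W \geq d) = \Pr(W - c \geq d - c) \leq \frac{\EE[W] - c}{d - c},
\]
obtained by applying ordinary Markov to $W - c \geq 0$. Taking $W = \prod_{i=1}^j \alpha^{r_i}$, $c = \alpha^j$, and $d = \alpha^k$ (which satisfies $d > c$ since $k > j$ and $\alpha > 1$) yields exactly the claimed bound. The only mild obstacle is securing the almost-sure lower bound $W \geq \alpha^j$, which is essentially a bookkeeping matter about the first return time $r_1$; once that convention is in place, the remainder of the argument is a one-line application of the shifted Markov inequality.
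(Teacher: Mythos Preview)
Your proposal is correct and follows essentially the same route as the paper: rewrite $\{N_k<j\}=\{t_j\ge k\}=\{\sum_{i=1}^j r_i\ge k\}$, use the almost-sure lower bound $\sum_{i=1}^j r_i\ge j$ to subtract $\alpha^j$ from both sides before applying Markov's inequality, and read off the stated bound. The paper also glosses over the $r_1\ge 1$ bookkeeping that you flag, so your treatment is, if anything, slightly more careful.
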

\begin{proof}
	Note that $\{N_k<j\}=\{t_j\ge k\}=\{r_1+\dots+r_j\ge k\}$ and $r_1+\dots+r_j\ge j$ by definition.
	Then the result comes from Markov's inequality
	\[
	\begin{split}
	\Pr(N_k<j)&=\Pr(r_1+\dots+r_j\ge k)\\
	&=\Pr(\alpha^{r_1+\dots+r_j}-\alpha^j\ge\alpha^k-\alpha^j)\\
	&\le \frac{1}{\alpha^k-\alpha^j}\left[\EE\left(\prod_{i=1}^j \alpha^{r_i}\right)-\alpha^j\right].
	\end{split}
	\]
\end{proof}

Next, we bound  $\EE\left(\prod_{i=1}^j \alpha^{r_i}\right)$ following the exact same arguments as in \cite[Proof of Lemma 4 and Theorem 12]{Rosenthal1995a}, which gives
\[
\begin{split}
\EE\left(\prod_{i=1}^{j} \alpha^{r_i}\right)\le (\alpha \Lambda)^{j-1}\left[1+\EE_{\nu}(f(x))+\EE_{\pi'}(f(x))\right].
\end{split}
\]
By the drift condition for $\tilde{P}(x,\cdot)$ in \cref{Ptilde_drift},
taking expectations on both sides of \cref{Ptilde_drift} leads to
$\EE_{\pi'}(f(x))\le \frac{b}{1-\lambda}$. Therefore, setting $j=rk+1$ and combining all results together yields
\[
\begin{split}
\|\mathcal{L}(X^{(k)})-\pi\|_{\var}
&\le(1-\epsilon Q(R_0))^{rk+1} +\frac{(\alpha \Lambda)^{rk}\left[1+\EE_{\nu}(f(x))+\frac{b}{1-\lambda}\right]-\alpha^{rk+1}}{\alpha^k-\alpha^{rk+1}}\\
&\quad +k\,\pi(R_0^c)+\sum_{i=1}^k {\nu P^i (R_0^c)}.
\end{split}
\]
{Finally, we slightly relax the upper bound by replacing $\alpha^{rk+1}$ with $\alpha^{rk}$ in both the denominator and numerator. Then \cref{thm_new_drift} is proved by further relaxing $(1-\epsilon)^{rk+1}$ to $(1-\epsilon)^{rk}$.}

	\section{Proof of Theorem \ref{coro_mixing_time}}\label{proof_coro_mixing_time}
	
		Using \cref{thm_Gibbs}, one sufficient condition for 
	\[
	\|\mathcal{L}(X^{(k)})-\pi\|_{\var}\le c
	\] 
	is that $n\ge N$ and
\[
C_1\gamma^k\le \frac{c}{3}, \quad C_2\frac{(1+k)^2}{n}\le \frac{c}{3},\quad C_3\frac{k}{\sqrt{n}}\le \frac{c}{3}.
\]
This requires the number of iterations, $k$, satisfies
\[
\frac{\log (C_1)-\log (c/3)}{\log(1/\gamma)}\le k\le \min\left\{ \sqrt{\frac{c/3}{C_3}}\sqrt{n}-1,
\frac{c/3}{C_3}\sqrt{n}
\right\}.
\]
Note that any $k$ (if exists) satisfying the above equation provides an upper bound for the mixing time {$K_{c,x^{(0)}}(n)$}.

That is, for any $n\ge N$ such that
\[
\frac{\log (C_1)-\log (c/3)}{\log(1/\gamma)}\le
\min\left\{ \sqrt{\frac{c/3}{C_3}}\sqrt{n}-1,
\frac{c/3}{C_3}\sqrt{n}
\right\},
\]
which is equivalent to
\[
n\ge \max\left\{N, \left[\bar{K}_c\frac{3C_3}{c}\right]^2, \left[\left(\bar{K}_c+1\right)\sqrt{\frac{3C_3}{c}}\right]^2\right\}=:N_c,
\]
we have
$\bar{K}_c:= \frac{\log (C_1)-\log (c)+\log(3)}{\log(1/\gamma)}$ is an upper bound of the mixing time.

Finally, it can be seen that both $\bar{K}_c=\Theta(1)$ and $N_c=\Theta(1)$.

	\section{Proof of Lemma \ref{key_thm}}\label{proof_key_thm}
	
	\begin{lemma}\label{key_thm} Under the assumptions of \cref{thm_Gibbs},
	let $\Delta=\sum_{i=1}^n (Y_i-\bar{Y})^2$ and $x=({\sigma^2_A},\mu,\theta_1,\dots,\theta_n)$. Define the fitted family of drift functions $\{f_n(x)\}$ by
	\[\label{eq_drift_function_Gibbs2}
	f_n(x):=n(\bar{\theta}-\bar{Y})^2+n\left[\left(\frac{\Delta}{n-1}-{\sigma_V^2}\right)-{\sigma^2_A}\right]^2.
	\] 
	Let $x^{(k)}=(({\sigma^2_A})^{(k)},\mu^{(k)},\theta^{(k)}_1,\dots,\theta^{(k)}_n)$ be the state of the Markov chain at the $k$-th iteration, then we have
	\[\label{orig_drift_func2}
	\EE[f_n(x^{(k+1)})\,|\,x^{(k)}]\le \left(\frac{({\sigma_V^2})^2+2{\sigma_V^2}({\sigma^2_A})^{(k)}}{({\sigma_V^2})^2+2{\sigma_V^2}({\sigma^2_A})^{(k)}+(({\sigma^2_A})^{(k)})^2}\right)^2 f_n(x^{(k)})+b,\quad\forall x^{(k)}\in\mathcal{X}
	\]
	where $b=\bigO(1)$.
\end{lemma}

\begin{proof}
    
{In this proof, we write $f_n(x)$ as $f(x)$ for simplicity.} Recall that 
the order of Gibbs sampling for computing the first scan is:
\[
\begin{split}
\mu^{(1)}&\sim \mathcal{N}\left(\bar{\theta}^{(0)},\frac{({\sigma^2_A})^{(0)}}{n}\right),\\
\theta_i^{(1)}&\sim \mathcal{N}\left(\frac{\mu^{(1)}{\sigma_V^2}+Y_i ({\sigma^2_A})^{(0)}}{{\sigma_V^2}+ ({\sigma^2_A})^{(0)}},\frac{({\sigma^2_A})^{(0)}{\sigma_V^2}}{{\sigma_V^2}+({\sigma^2_A})^{(0)}}\right),\\
({\sigma^2_A})^{(1)}&\sim \IG\left(a+\frac{n-1}{2},b+\frac{1}{2}\sum_{i=1}^n (\theta_i^{(1)}-\bar{\theta}^{(1)})^2\right).
\end{split}
\]
It suffices to show that for
$\Delta=\sum_{i=1}^n (Y_i-\bar{Y})^2$ and \[
f(x)=n(\bar{\theta}-\bar{Y})^2+n\left[\left(\frac{\Delta}{n-1}-{\sigma_V^2}\right)-{\sigma^2_A}\right]^2,
\] 
we have
\[
\EE[f(x^{(1)})\,|\,x^{(0)}]\le \left(\frac{({\sigma_V^2})^2+2{\sigma_V^2}({\sigma^2_A})^{(0)}}{({\sigma_V^2})^2+2{\sigma_V^2}({\sigma^2_A})^{(0)}+(({\sigma^2_A})^{(0)})^2}\right)^2 f(x^{(0)})+b,
\]
where $b=\bigO(1)$.

Note that we can compute the expectation in $\EE[f(x^{(1)})\,|\,x^{(0)}]$ by three steps, according to the reverse order of the Gibbs sampling. To simplify the notation, we define $\sigma$-algebras that we condition on:
\[
\begin{split}
\mathcal{G}_{A}:&=\sigma(({\sigma^2_A})^{(0)},\{\theta_i^{(1)}\},\mu^{(1)}),\\
\mathcal{G}_{\theta}:&=\sigma(({\sigma^2_A})^{(0)},\{\theta_i^{(0)}\},\mu^{(1)}),\\
\mathcal{G}_{\mu}:&=\sigma(({\sigma^2_A})^{(0)},\{\theta_i^{(0)}\},\mu^{(0)}).
\end{split}
\]
Then we have
\[
\begin{split}
\EE[f(x^{(1)})\,|\,x^{(0)}]&=\EE[f(x^{(1)})\,|\,\mathcal{G}_{\mu}]
=\EE[\EE[\EE[f(x^{(1)})\,|\,\mathcal{G}_{A}]\,|\,\mathcal{G}_{\theta}]\,|\,\mathcal{G}_{\mu}].
\end{split}
\]
The three steps are as follows:
\begin{enumerate}
	\item Compute the expectation over $({\sigma^2_A})^{(1)}$ given $\{\theta_i^{(1)}\}$ and $\mu^{(1)}$. This is to compute the conditional expectation
	\[
	f'(x^{(1)}):=\EE[f(x^{(1)})\,|\, \mathcal{G}_A],
	\]
	where we write $\EE[\cdot\,|\,\mathcal{G}_A]$ to denote the the expectation is over (recall that $a$ and $b$ are constants from the prior $\IG(a,b)$)
	\[
	({\sigma^2_A})^{(1)}&\sim \IG\left(a+\frac{n-1}{2},b+\frac{1}{2}\sum_{i=1}^n (\theta_i^{(1)}-\bar{\theta}^{(1)})^2\right)
	\] 
	for given $\theta^{(1)}$ and $\mu^{(1)}$.
	\item Compute the expectation over $\{\theta_i^{(1)}\}$ given $\mu^{(1)}$. This is to compute the conditional expectation
	\[
	f''(x^{(1)}):=\EE[f'(x^{(1)})\,|\, \mathcal{G}_{\theta}],
	\]
	where we use $\EE[\cdot\,|\,\mathcal{G}_{\theta}]$ to denote the expectation is over
	\[
	\theta_i^{(1)}&\sim \mathcal{N}\left(\frac{\mu^{(1)}{\sigma_V^2}+Y_i ({\sigma^2_A})^{(0)}}{{\sigma_V^2}+ ({\sigma^2_A})^{(0)}},\frac{({\sigma^2_A})^{(0)}{\sigma_V^2}}{{\sigma_V^2}+({\sigma^2_A})^{(0)}}\right),\quad i=1,\dots,n,
	\]
	for given $\mu^{(1)}$ and $({\sigma^2_A})^{(0)}$.
	\item Compute the expectation over $\mu^{(1)}$. This is to compute the conditional expectation
	\[
	\EE[f(x^{(1)})\,|\, x^{(0)}]=\EE[f''(x^{(1)})\,|\, \mathcal{G}_{\mu}],
	\]
	where we have used $\EE[\cdot\,|\,\mathcal{G}_{\mu}]$ to denote the expectation is over
	\[
	\mu^{(1)}&\sim \mathcal{N}\left(\bar{\theta}^{(0)},\frac{({\sigma^2_A})^{(0)}}{n}\right)
	\]
	for given $\{\theta_i^{(0)}\}$ and $({\sigma^2_A})^{(0)}$.
\end{enumerate}
In the following, we compute the three steps, respectively. We use $\bigO(1)$ to denote terms that can be upper bounded by some constant that does not depend on the state.
\subsection{Compute $f'(x^{(1)})=\EE[f(x^{(1)})\,|\, \mathcal{G}_{A}]$}

{The first term of $f(x^{(1)})$ is $n(\bar{\theta}^{(1)}-\bar{Y})^2$, which is $\mathcal{G}_{A}$-measurable by construction. Thus, $\EE[n(\bar{\theta}^{(1)}-\bar{Y})^2\,|\,\mathcal{G}_{A}]=n(\bar{\theta}^{(1)}-\bar{Y})^2$. Then}
\[
\begin{split}
f'(x^{(1)})&=\EE[f(x^{(1)})\,|\, \mathcal{G}_{A}]\\
&=n(\bar{\theta}^{(1)}-\bar{Y})^2+n\EE\left\{\left[\left(\frac{\Delta}{n-1}-{\sigma_V^2}\right)-({\sigma^2_A})^{(1)}\right]^2\,|\,\mathcal{G}_A\right\}.
\end{split}
\]
Note that
\[
\begin{split}
&n\EE\left\{\left[\left(\frac{\Delta}{n-1}-{\sigma_V^2}\right)-({\sigma^2_A})^{(1)}\right]^2\,|\,\mathcal{G}_A\right\}\\
&=n\left(\frac{\Delta}{n-1}-{\sigma_V^2}\right)^2 +n \EE[(({\sigma^2_A})^{(1)})^2\,|\,\mathcal{G}_A]-2n\left(\frac{\Delta}{n-1}-{\sigma_V^2}\right)\EE[({\sigma^2_A})^{(1)}\,|\,\mathcal{G}_A].
\end{split}
\]
Recall that $\EE[\cdot\,|\,\mathcal{G}_A]$ denotes that the expectation is over
\[
({\sigma^2_A})^{(1)}&\sim \IG\left(a+\frac{n-1}{2},b+\frac{1}{2}\sum_{i=1}^n (\theta_i^{(1)}-\bar{\theta}^{(1)})^2\right),
\] 
where $a$ and $b$ are constants from the prior $\IG(a,b)$. The mean and variance of $({\sigma^2_A})^{(1)}$ can be written in closed forms since $({\sigma^2_A})^{(1)}$ follows from an inverse Gamma distribution. Denoting $S:=\frac{\sum_i(\theta_i^{(1)}-\bar{\theta}^{(1)})^2}{n-1}$, we can write the mean of $({\sigma^2_A})^{(1)}$ using $S$ as follows:
\[
\begin{split}
\EE[({\sigma^2_A})^{(1)}\,|\,\mathcal{G}_A]&=\frac{\sum_i(\theta_i^{(1)}-\bar{\theta}^{(1)})^2+2b}{n-1+2(a-1)}\\
&=\frac{\sum_i(\theta_i^{(1)}-\bar{\theta}^{(1)})^2}{n-1}+\frac{2b}{n-1+2(a-1)}\\
&\quad-\left(\frac{\sum_i(\theta_i^{(1)}-\bar{\theta}^{(1)})^2}{n-1}\right)\left(\frac{2(a-1)}{n-1+2(a-1)}\right)\\
&=S+\bigO(1/n)+\bigO(1/n)S.
\end{split}
\]
Similarly, the variance of $({\sigma^2_A})^{(1)}$ can be written in terms of $S$ as well:
\[
\begin{split}
\var[({\sigma^2_A})^{(1)}\,|\,\mathcal{G}_A]&=\frac{(\sum_i(\theta_i^{(1)}-\bar{\theta}^{(1)})^2/2+b)^2}{[(n-1)/2+(a-1)]^2[(n-1)/2+(a-2)]}\\
&=\frac{1}{(n-1)/2+(a-2)}\left(\EE[({\sigma^2_A})^{(1)}\,|\,\mathcal{G}_A]\right)^2\\
&=\bigO(1/n)\left(S+\bigO(1/n)+\bigO(1/n)S\right)^2\\
&=\bigO(1/n)S^2+\bigO(1/n^2)S+\bigO(1/n^3).
\end{split}
\]
Substituting the mean and variance of $({\sigma^2_A})^{(1)}$ in terms of $S$, we have
\[
\begin{split}
f'(x^{(1)})&=\EE[f(x^{(1)})\,|\,\mathcal{G}_A]\\
&=n(\bar{\theta}^{(1)}-\bar{Y})^2+n\left(\frac{\Delta}{n-1}-{\sigma_V^2}\right)^2 +n S^2-2n\left(\frac{\Delta}{n-1}-{\sigma_V^2}\right)S\\
&\quad+\bigO(1)+\bigO(1)S+\bigO(1)S^2.
\end{split}
\]
\subsection{Compute $	f''(x^{(1)})=\EE[f'(x^{(1)})\,|\, \mathcal{G}_{\theta}]$}

Note that the terms in $f'(x^{(1)})$ involving $\{\theta_i^{(1)}\}$ are $(\bar{\theta}^{(1)}-\bar{Y})^2$ and $S=\frac{\sum_i(\theta_i^{(1)}-\bar{\theta}^{(1)})^2}{n-1}$. Then 
\[
\begin{split}
f''(x^{(1)})&=\EE[f'(x^{(1)})\,|\, \mathcal{G}_{\theta}]\\
&=n\EE\left[(\bar{\theta}^{(1)}-\bar{Y})^2\,|\,\mathcal{G}_{\theta}\right]+n\left(\frac{\Delta}{n-1}-{\sigma_V^2}\right)^2 \\
&\quad+n \EE[S^2\,|\,\mathcal{G}_{\theta}]-2n\left(\frac{\Delta}{n-1}-{\sigma_V^2}\right)\EE[S\,|\,\mathcal{G}_{\theta}]\\
&\quad+\bigO(1)+\bigO(1)\EE[S\,|\,\mathcal{G}_{\theta}]+\bigO(1)\EE[S^2\,|\,\mathcal{G}_{\theta}].
\end{split}
\]
Therefore, it suffices to compute the following terms
\[
\EE\left[ (\bar{\theta}^{(1)}-\bar{Y})^2\,|\,\mathcal{G}_{\theta}\right],\quad \EE[S\,|\,\mathcal{G}_{\theta}],\quad \EE[S^2\,|\,\mathcal{G}_{\theta}].
\]

Note that $\{\theta_i^{(1)}\}$ are independent (but not identically distributed) conditional on $\mathcal{G}_{\theta}$. For the first term $\EE\left[ (\bar{\theta}^{(1)}-\bar{Y})^2\,|\,\mathcal{G}_{\theta}\right]$, we have
\[
\begin{split}
\EE\left[ (\bar{\theta}^{(1)}-\bar{Y})^2\,|\,\mathcal{G}_{\theta}\right]&=\EE\left[ \left(\bar{\theta}^{(1)}-\frac{\mu^{(1)}{\sigma_V^2}+\bar{Y} ({\sigma^2_A})^{(0)}}{{\sigma_V^2}+ ({\sigma^2_A})^{(0)}}+\frac{\mu^{(1)}{\sigma_V^2}+\bar{Y} ({\sigma^2_A})^{(0)}}{{\sigma_V^2}+ ({\sigma^2_A})^{(0)}}-\bar{Y}\right)^2\,|\,\mathcal{G}_{\theta}\right]\\
&=\EE\left[ \left(\bar{\theta}^{(1)}-\frac{\mu^{(1)}{\sigma_V^2}+\bar{Y} ({\sigma^2_A})^{(0)}}{{\sigma_V^2}+ ({\sigma^2_A})^{(0)}}\right)^2\,|\,\mathcal{G}_{\theta}\right]+\left(\frac{\mu^{(1)}{\sigma_V^2}+\bar{Y} ({\sigma^2_A})^{(0)}}{{\sigma_V^2}+ ({\sigma^2_A})^{(0)}}-\bar{Y}\right)^2\\
&\quad+2\left(\frac{\mu^{(1)}{\sigma_V^2}+\bar{Y} ({\sigma^2_A})^{(0)}}{{\sigma_V^2}+ ({\sigma^2_A})^{(0)}}-\bar{Y}\right)\EE\left[ \left(\bar{\theta}^{(1)}-\frac{\mu^{(1)}{\sigma_V^2}+\bar{Y} ({\sigma^2_A})^{(0)}}{{\sigma_V^2}+ ({\sigma^2_A})^{(0)}}\right)\,|\,\mathcal{G}_{\theta}\right]\\
&=\var[\bar{\theta}^{(1)}\,|\,\mathcal{G}_{\theta}]+\left(\frac{{\sigma_V^2}}{{\sigma_V^2}+ ({\sigma^2_A})^{(0)}}\right)^2\left(\mu^{(1)}-\bar{Y}\right)^2\\
&=\frac{1}{n}\frac{({\sigma^2_A})^{(0)}{\sigma_V^2}}{{\sigma_V^2}+({\sigma^2_A})^{(0)}}+\left(\frac{{\sigma_V^2}}{{\sigma_V^2}+ ({\sigma^2_A})^{(0)}}\right)^2\left(\mu^{(1)}-\bar{Y}\right)^2
\end{split}
\]
For the other two terms involving $S$, we have the following lemma.

\begin{lemma}\label{temp_lemma} For $S=\frac{\sum_i(\theta_i^{(1)}-\bar{\theta}^{(1)})^2}{n-1}$, we have
	\[
	\EE[S\,|\,\mathcal{G}_{\theta}]=\frac{({\sigma^2_A})^{(0)}{\sigma_V^2}}{{\sigma_V^2}+({\sigma^2_A})^{(0)}}+\left(\frac{({\sigma^2_A})^{(0)}}{{\sigma_V^2}+({\sigma^2_A})^{(0)}}\right)^2\frac{\Delta}{n-1},\quad
	\var[S\,|\,\mathcal{G}_{\theta}]=\bigO(1/n).
	\]
\end{lemma}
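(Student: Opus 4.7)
The plan is to exploit that, conditional on $\mathcal{G}_\theta$, the coordinates $\theta_i^{(1)}$ are independent Gaussians with common variance $\sigma^2 := A^{(0)}V/(V+A^{(0)})$ and means $m_i := (\mu^{(1)}V + Y_i A^{(0)})/(V+A^{(0)})$. Writing $\theta_i^{(1)} = m_i + \xi_i$ with $\xi_i\distiid \mathcal{N}(0,\sigma^2)$, the key structural identity is
\[
m_i - \bar m = \frac{A^{(0)}}{V+A^{(0)}}(Y_i - \bar Y), \qquad \text{so} \qquad \sum_{i=1}^n (m_i - \bar m)^2 = \Bigl(\frac{A^{(0)}}{V+A^{(0)}}\Bigr)^2 \Delta.
\]
With this in hand, $(n-1)S/\sigma^2$ is recognized as a noncentral chi-squared random variable on $n-1$ degrees of freedom with noncentrality parameter $\lambda := \sum_i(m_i-\bar m)^2/\sigma^2$, and both asserted formulas follow from standard moment identities. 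Alternatively, one can stay elementary and decompose $\theta_i^{(1)}-\bar\theta^{(1)} = (m_i - \bar m) + (\xi_i-\bar\xi)$ and expand the square.

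For the mean, I would apply $\EE[(n-1)S/\sigma^2 \mid \mathcal{G}_\theta] = (n-1) + \lambda$ (equivalently, note that the cross term $\sum_i (m_i-\bar m)(\xi_i-\bar\xi)=\sum_i(m_i-\bar m)\xi_i$ has zero mean and that $\EE[\sum_i(\xi_i-\bar\xi)^2]=(n-1)\sigma^2$) and substitute the identity above to obtain the stated expression
\[
\EE[S\mid \mathcal{G}_\theta] = \sigma^2 + \frac{\sum_i(m_i-\bar m)^2}{n-1} = \frac{A^{(0)}V}{V+A^{(0)}} + \Bigl(\frac{A^{(0)}}{V+A^{(0)}}\Bigr)^2 \frac{\Delta}{n-1}.
\]

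For the variance, the companion identity $\var[(n-1)S/\sigma^2 \mid \mathcal{G}_\theta] = 2(n-1) + 4\lambda$ yields
\[
\var[S\mid \mathcal{G}_\theta] = \frac{2\sigma^4}{n-1} + \frac{4\sigma^2}{(n-1)^2}\Bigl(\frac{A^{(0)}}{V+A^{(0)}}\Bigr)^2 \Delta.
\]
If one prefers a direct derivation, the two terms correspond to $\var(\sum_i(\xi_i-\bar\xi)^2)=2(n-1)\sigma^4$ and $\var(2\sum_i(m_i-\bar m)\xi_i)=4\sigma^2\sum_i(m_i-\bar m)^2$, and the covariance between these two pieces vanishes because the ``linear'' term is odd while the ``quadratic'' term is even in $\xi$. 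The main obstacle, and the reason the statement is not completely automatic, is that the $\bigO(1/n)$ claim must hold \emph{uniformly} in the state variable $A^{(0)}$, which can be arbitrarily small or arbitrarily large. I would resolve this by noting the state-free bounds $\sigma^2 \le V$ and $A^{(0)}/(V+A^{(0)}) \le 1$, which hold for every $A^{(0)}>0$; combined with $\Delta/(n-1) = \Theta(1)$ from \cref{eq_assumption}, they give $\var[S\mid \mathcal{G}_\theta] \le 2V^2/(n-1) + 4V\,[\Delta/(n-1)^2] = \bigO(1/n)$, with implicit constant depending only on $V$ and the data, consistent with the meaning of $\bigO(\cdot)$ already in force in the proof of \cref{key_thm}.
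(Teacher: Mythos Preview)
Your argument is correct and rests on the same decomposition as the paper: write $\theta_i^{(1)}=m_i+\xi_i$ (the paper's $\eta_i$ differ from your $\xi_i$ only by the constant $\mu^{(1)}V/(V+A^{(0)})$, so $\eta_i-\bar\eta=\xi_i-\bar\xi$), observe that $m_i-\bar m=\tfrac{A^{(0)}}{V+A^{(0)}}(Y_i-\bar Y)$, and expand the square. The mean computation is identical. The difference is in the variance: the paper bounds it via the inequality $(a+b)^2\le 2a^2+2b^2$ applied to the two random pieces and then estimates each separately, whereas you recognize $(n-1)S/\sigma^2$ as a noncentral $\chi^2_{n-1}(\lambda)$ and read off the exact formula $\var[S\mid\mathcal G_\theta]=2\sigma^4/(n-1)+4\sigma^2\lambda/(n-1)^2$. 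Your route is shorter and yields an explicit expression rather than just an upper bound; your uniformity remark (using $\sigma^2\le V$ and $A^{(0)}/(V+A^{(0)})\le 1$) is also more transparent than the paper's handling of the dependence on $A^{(0)}$.
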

\begin{proof}
	Define $\eta_i:=\theta_i^{(1)}-\frac{Y_i ({\sigma^2_A})^{(0)}}{{\sigma_V^2}+({\sigma^2_A})^{(0)}}$ then $\bar{\eta}=\bar{\theta}^{(1)}-\frac{\bar{Y} ({\sigma^2_A})^{(0)}}{{\sigma_V^2}+({\sigma^2_A})^{(0)}}$. Note that $\{\eta_i\}$ are i.i.d. conditional on $\mathcal{G}_{\theta}$ with
	\[
	\eta_i\sim \mathcal{N}\left(\frac{\mu^{(1)}{\sigma_V^2}}{{\sigma_V^2}+({\sigma^2_A})^{(0)}},\frac{({\sigma^2_A})^{(0)}{\sigma_V^2}}{{\sigma_V^2}+({\sigma^2_A})^{(0)}}\right),\quad 	\bar{\eta}\sim \mathcal{N}\left(\frac{\mu^{(1)}{\sigma_V^2}}{{\sigma_V^2}+({\sigma^2_A})^{(0)}},\frac{1}{n}\frac{({\sigma^2_A})^{(0)}{\sigma_V^2}}{{\sigma_V^2}+({\sigma^2_A})^{(0)}}\right).
	\]
	Next, we decompose $\sum_{i=1}^n(\theta_i^{(1)}-\bar{\theta}^{(1)})^2$ by
	\[\label{tmp_eq2}
	\begin{split}
	\sum_{i=1}^n(\theta_i^{(1)}-\bar{\theta}^{(1)})^2&=\sum_{i=1}^n\left(\eta_i+\frac{Y_i ({\sigma^2_A})^{(0)}}{{\sigma_V^2}+({\sigma^2_A})^{(0)}}-\bar{\eta}-\frac{\bar{Y} ({\sigma^2_A})^{(0)}}{{\sigma_V^2}+({\sigma^2_A})^{(0)}}\right)^2\\
	&=\sum_{i=1}^n\left((\eta_i-\bar{\eta})^2+\left(\frac{({\sigma^2_A})^{(0)}}{{\sigma_V^2}+({\sigma^2_A})^{(0)}}\right)^2(Y_i-\bar{Y})^2+\frac{2(\eta_i-\bar{\eta})(Y_i-\bar{Y})({\sigma^2_A})^{(0)}}{{\sigma_V^2}+({\sigma^2_A})^{(0)}}\right).
	\end{split}
	\]
	Then we can obtain $\EE[S\,|\,\mathcal{G}_{\theta}]$ by
	\[
	\begin{split}
		\EE[S\,|\,\mathcal{G}_{\theta}]&=\EE\left\{\left[\frac{\sum_i(\theta_i^{(1)}-\bar{\theta}^{(1)})^2}{n-1}\right]\,|\,\mathcal{G}_{\theta}\right\}\\
		&=\EE\left\{\left[\frac{\sum_i(\eta_i-\bar{\eta})^2}{n-1}\right]\,|\,\mathcal{G}_{\theta}\right\}+\left(\frac{({\sigma^2_A})^{(0)}}{{\sigma_V^2}+({\sigma^2_A})^{(0)}}\right)^2\frac{\sum_{i=1}^n(Y_i-\bar{Y})^2}{n-1}\\
		&=\frac{({\sigma^2_A})^{(0)}{\sigma_V^2}}{{\sigma_V^2}+({\sigma^2_A})^{(0)}}+\left(\frac{({\sigma^2_A})^{(0)}}{{\sigma_V^2}+({\sigma^2_A})^{(0)}}\right)^2\frac{\Delta}{n-1}.
	\end{split}
	\]

	For $\var[S\,|\,\mathcal{G}_{\theta}]$, using the Cauchy-Schwartz inequality
	\[
	\begin{split}
	&\var[S\,|\,\mathcal{G}_{\theta}]=\EE\left[(S-\EE[S\,|\,\mathcal{G}_{\theta}])^2\,|\,\mathcal{G}_{\theta}\right]\\
	&=\EE\left[\left(\frac{\sum_{i=1}^n(\eta_i-\bar{\eta})^2}{n-1}-\EE_{\{\eta_i\}}\left[\frac{\sum_i(\eta_i-\bar{\eta})^2}{n-1}\right]+2\frac{({\sigma^2_A})^{(0)}}{{\sigma_V^2}+({\sigma^2_A})^{(0)}}\frac{\sum_{i=1}^n(\eta_i-\bar{\eta})(Y_i-\bar{Y})}{n-1}\right)^2\,|\,\mathcal{G}_{\theta}\right]\\
	&\le 2\var\left[\frac{\sum_i(\eta_i-\bar{\eta})^2}{n-1}\,|\,\mathcal{G}_{\theta}\right] +8\left(\frac{({\sigma^2_A})^{(0)}}{{\sigma_V^2}+({\sigma^2_A})^{(0)}}\right)^2\frac{\EE\left\{\left[\sum_i (\eta_i-\bar{\eta})(Y_i-\bar{Y})\right]^2\,|\,\mathcal{G}_{\theta}\right\}}{(n-1)^2}.
	\end{split}
	\]
	Note that  $\{\eta_i\}$ are i.i.d conditional on $\mathcal{G}_{\theta}$, we know
	\[
	\EE\left\{\left[\frac{\sum_i(\eta_i-\bar{\eta})^2}{n-1}\right]^2\,|\,\mathcal{G}_{\theta}\right\}=\left\{\EE\left[\frac{\sum_i(\eta_i-\bar{\eta})^2}{n-1}\,|\,\mathcal{G}_{\theta}\right]\right\}^2+\bigO(1/n).
	\] That is, $\var\left[\frac{\sum_i(\eta_i-\bar{\eta})^2}{n-1}\,|\,\mathcal{G}_{\theta}\right]=\bigO(1/n)$.
	Finally, the term
	\[\label{tmp_eq3}
	\begin{split}
	&\frac{\EE\left\{\left[\sum_i (\eta_i-\bar{\eta})(Y_i-\bar{Y})\right]^2\,|\,\mathcal{G}_{\theta}\right\}}{(n-1)^2}\\
	&=\frac{\EE\left\{\left[\sum_i (\eta_i-\bar{\eta})^2(Y_i-\bar{Y})^2\right]\,|\,\mathcal{G}_{\theta}\right\}+\EE[\bar{\eta}^2\,|\,\mathcal{G}_{\theta}]\sum_{i\neq j}(Y_i-\bar{Y})(Y_j-\bar{Y})}{(n-1)^2}\\
	&=\frac{\sum_i(Y_i-\bar{Y})^2}{(n-1)^2}\EE\left[(\eta_1-\bar{\eta})^2\,|\,\mathcal{G}_{\theta}\right]+\bigO(1/n)\\
	&=\frac{\Delta}{(n-1)^2}\frac{(n-1)\frac{({\sigma^2_A})^{(0)}{\sigma_V^2}}{{\sigma_V^2}+({\sigma^2_A})^{(0)}}}{n}+\bigO(1/n)=\bigO(1/n).
	\end{split}
	\]
	Therefore, we have
	$\var[S\,|\,\mathcal{G}_{\theta}]=\bigO(1/n)$.
\end{proof}

Next, using the following results
\[
\begin{split}
\EE[S\,|\,\mathcal{G}_{\theta}]&=\frac{({\sigma^2_A})^{(0)}{\sigma_V^2}}{{\sigma_V^2}+({\sigma^2_A})^{(0)}}+\left(\frac{({\sigma^2_A})^{(0)}}{{\sigma_V^2}+({\sigma^2_A})^{(0)}}\right)^2\frac{\Delta}{n-1}\\
&\le {\sigma_V^2}+\left(\frac{({\sigma^2_A})^{(0)}}{{\sigma_V^2}+({\sigma^2_A})^{(0)}}\right)^2\frac{\Delta}{n-1}=\bigO(1),\\
\EE[S^2\,|\,\mathcal{G}_{\theta}]&=\left(\EE[S\,|\,\mathcal{G}_{\theta}]\right)^2+\bigO(1/n)=\bigO(1),
\end{split}
\]
we can first write $f''(x^{(1)})$ by
\[
\begin{split}
f''(x^{(1)})=
&n\EE\left[(\bar{\theta}^{(1)}-\bar{Y})^2\,|\,\mathcal{G}_{\theta}\right]+n\left(\frac{\Delta}{n-1}-{\sigma_V^2}\right)^2 \\
&\quad+n \EE[S^2\,|\,\mathcal{G}_{\theta}]-2n\left(\frac{\Delta}{n-1}-{\sigma_V^2}\right)\EE[S\,|\,\mathcal{G}_{\theta}]+\bigO(1).
\end{split}
\]
Then, using
\[
\begin{split}
n\EE\left[ (\bar{\theta}^{(1)}-\bar{Y})^2\,|\,\mathcal{G}_{\theta}\right]
&=\frac{({\sigma^2_A})^{(0)}{\sigma_V^2}}{{\sigma_V^2}+({\sigma^2_A})^{(0)}}+n\left(\frac{{\sigma_V^2}}{{\sigma_V^2}+ ({\sigma^2_A})^{(0)}}\right)^2\left(\mu^{(1)}-\bar{Y}\right)^2\\
&\le {\sigma_V^2} +\frac{n({\sigma_V^2})^2\left(\mu^{(1)}-\bar{Y}\right)^2}{({\sigma_V^2}+ ({\sigma^2_A})^{(0)})^2}
\end{split}
\]
we further bound the terms
\[
\begin{split}
&n\EE\left[(\bar{\theta}^{(1)}-\bar{Y})^2\,|\,\mathcal{G}_{\theta}\right]+n\left(\frac{\Delta}{n-1}-{\sigma_V^2}\right)^2 \\
&\quad+n \EE[S^2\,|\,\mathcal{G}_{\theta}]-2n\left(\frac{\Delta}{n-1}-{\sigma_V^2}\right)\EE[S\,|\,\mathcal{G}_{\theta}]\\
&\le \frac{n({\sigma_V^2})^2\left(\mu^{(1)}-\bar{Y}\right)^2}{({\sigma_V^2}+ ({\sigma^2_A})^{(0)})^2}+n\left[\left(\frac{\Delta}{n-1}-{\sigma_V^2}\right)-\EE[S\,|\,\mathcal{G}_{\theta}]\right]^2\\
&=\frac{n({\sigma_V^2})^2\left(\mu^{(1)}-\bar{Y}\right)^2}{({\sigma_V^2}+ ({\sigma^2_A})^{(0)})^2}+n\left[\frac{({\sigma^2_A})^{(0)}{\sigma_V^2}}{{\sigma_V^2}+({\sigma^2_A})^{(0)}}+\left(\frac{({\sigma^2_A})^{(0)}}{{\sigma_V^2}+({\sigma^2_A})^{(0)}}\right)^2\frac{\Delta}{n-1}-\left(\frac{\Delta}{n-1}-{\sigma_V^2}\right)\right]^2\\
&=\frac{n({\sigma_V^2})^2\left(\mu^{(1)}-\bar{Y}\right)^2}{({\sigma_V^2}+ ({\sigma^2_A})^{(0)})^2}+n\left[\frac{\Delta}{n-1}\left[\left(\frac{({\sigma^2_A})^{(0)}}{{\sigma_V^2}+({\sigma^2_A})^{(0)}}\right)^2-1\right]+\left(\frac{({\sigma^2_A})^{(0)}{\sigma_V^2}}{{\sigma_V^2}+({\sigma^2_A})^{(0)}}+{\sigma_V^2}\right)\right]^2\\
&=\frac{n({\sigma_V^2})^2\left(\mu^{(1)}-\bar{Y}\right)^2}{({\sigma_V^2}+ ({\sigma^2_A})^{(0)})^2}+n\left(\frac{({\sigma^2_A})^{(0)}}{{\sigma_V^2}+({\sigma^2_A})^{(0)}}+1\right)^2\left[\frac{\Delta}{n-1}\left(\frac{-{\sigma_V^2}}{{\sigma_V^2}+({\sigma^2_A})^{(0)}}\right)+{\sigma_V^2}\right]^2\\
&=\frac{n({\sigma_V^2})^2\left(\mu^{(1)}-\bar{Y}\right)^2}{({\sigma_V^2}+ ({\sigma^2_A})^{(0)})^2}+\frac{n({\sigma_V^2})^2({\sigma_V^2}+2({\sigma^2_A})^{(0)})^2}{({\sigma_V^2}+({\sigma^2_A})^{(0)})^4}\left[\frac{\Delta}{n-1}-(({\sigma^2_A})^{(0)}+{\sigma_V^2})\right]^2.
\end{split}
\]
Finally, combing all the results yields
\[
f''(x^{(1)})
&=\frac{n({\sigma_V^2})^2\left(\mu^{(1)}-\bar{Y}\right)^2}{({\sigma_V^2}+ ({\sigma^2_A})^{(0)})^2}+\frac{n({\sigma_V^2})^2({\sigma_V^2}+2({\sigma^2_A})^{(0)})^2}{({\sigma_V^2}+({\sigma^2_A})^{(0)})^4}\left[\frac{\Delta}{n-1}-(({\sigma^2_A})^{(0)}+{\sigma_V^2})\right]^2+\bigO(1).
\]
\subsection{Compute $\EE[f(x^{(1)})\,|\, x^{(0)}]=\EE[f''(x^{(1)})\,|\, \mathcal{G}_{\mu}]$}
Recall that the expectation $\EE[\cdot\,|\,\mathcal{G}_{\mu}]$ is over
\[
\mu^{(1)}&\sim \mathcal{N}\left(\bar{\theta}^{(0)},\frac{({\sigma^2_A})^{(0)}}{n}\right).
\]
In the obtained expression of $f''(x^{(1)})$ from previous step, the
only term involves $\mu^{(1)}$ is $\frac{n({\sigma_V^2})^2\left(\mu^{(1)}-\bar{Y}\right)^2}{({\sigma_V^2}+ ({\sigma^2_A})^{(0)})^2}$.
Since
\[
\EE\left[(\mu^{(1)}-\bar{Y})^2\,|\,\mathcal{G}_{\mu}\right]=(\bar{\theta}^{(0)}-\bar{Y})^2+({\sigma^2_A})^{(0)}/n,
\]
we have
\[
\begin{split}
\EE[f(x^{(1)})\,|\, x^{(0)}]&=\EE[f''(x^{(1)})\,|\, \mathcal{G}_{\mu}]\\
&\le\frac{n({\sigma_V^2})^2}{({\sigma_V^2}+ ({\sigma^2_A})^{(0)})^2}\left((\bar{\theta}^{(0)}-\bar{Y})^2+\frac{({\sigma^2_A})^{(0)}}{n}\right)\\
&\quad+\frac{n({\sigma_V^2})^2({\sigma_V^2}+2({\sigma^2_A})^{(0)})^2}{({\sigma_V^2}+({\sigma^2_A})^{(0)})^4}\left[\frac{\Delta}{n-1}-(({\sigma^2_A})^{(0)}+{\sigma_V^2})\right]^2+\bigO(1)\\
&=\frac{n({\sigma_V^2})^2(\bar{\theta}^{(0)}-\bar{Y})^2}{({\sigma_V^2}+ ({\sigma^2_A})^{(0)})^2}\\
&\quad+\frac{n({\sigma_V^2})^2({\sigma_V^2}+2({\sigma^2_A})^{(0)})^2}{({\sigma_V^2}+({\sigma^2_A})^{(0)})^4}\left[\frac{\Delta}{n-1}-(({\sigma^2_A})^{(0)}+{\sigma_V^2})\right]^2+\bigO(1).
\end{split}
\]
Finally, we complete the proof by
\[
\begin{split}
&\frac{n({\sigma_V^2})^2(\bar{\theta}^{(0)}-\bar{Y})^2}{({\sigma_V^2}+ ({\sigma^2_A})^{(0)})^2}+\frac{n({\sigma_V^2})^2({\sigma_V^2}+2({\sigma^2_A})^{(0)})^2}{({\sigma_V^2}+({\sigma^2_A})^{(0)})^4}\left[\frac{\Delta}{n-1}-(({\sigma^2_A})^{(0)}+{\sigma_V^2})\right]^2+\bigO(1)\\
&=\frac{n({\sigma_V^2})^2({\sigma_V^2}+2({\sigma^2_A})^{(0)})^2}{({\sigma_V^2}+({\sigma^2_A})^{(0)})^4}\left\{\frac{({\sigma_V^2}+ ({\sigma^2_A})^{(0)})^2}{({\sigma_V^2}+2({\sigma^2_A})^{(0)})^2}(\bar{\theta}^{(0)}-\bar{Y})^2+\left[\frac{\Delta}{n-1}-(({\sigma^2_A})^{(0)}+{\sigma_V^2})\right]^2\right\}+\bigO(1)\\
&\le \frac{({\sigma_V^2})^2({\sigma_V^2}+2({\sigma^2_A})^{(0)})^2}{({\sigma_V^2}+({\sigma^2_A})^{(0)})^4}\left\{n(\bar{\theta}^{(0)}-\bar{Y})^2+n\left[\frac{\Delta}{n-1}-(({\sigma^2_A})^{(0)}+{\sigma_V^2})\right]^2\right\}+\bigO(1)\\
&=\left[\left(\frac{({\sigma_V^2})^2+2{\sigma_V^2}({\sigma^2_A})^{(0)}}{({\sigma_V^2})^2+2{\sigma_V^2}({\sigma^2_A})^{(0)}+(({\sigma^2_A})^{(0)})^2}\right)^2\right] f(x^{(0)})+\bigO(1).
\end{split}
\]

\end{proof}

	\section{Proof of Lemma \ref{lemma_epsilon}}\label{proof_lemma_epsilon}
	
	\begin{lemma}\label{lemma_epsilon}
Under the assumptions of \cref{thm_Gibbs}, recall the ``large set'' defined in the proof of \cref{thm_Gibbs}.
	If $T=\Theta(1)$, by choosing the size of the ``small set'' $R=\{x\in\mathcal{X}: f_n(x)\le d\}$ to satisfy $d=\bigO(1)$ and $d> \frac{b}{1-\lambda_T}$, {there exists a probability measure $Q(\cdot)$ such that} the Markov chain satisfies a minorization condition in \cref{eq_minorization} with the minorization {volumne} $\epsilon=\Theta(1)$. 
\end{lemma}

\begin{proof}
{Throughout the proof, we write $f_n(x)$ as $f(x)$ for simplicity.} Recall that the small set is defined by $R=\{x\in\mathcal{X}: f(x)\le d\}$ where $d>2b/(1-\lambda_T)$ and $x=({\sigma^2_A}, \mu,\theta_1,\dots,\theta_n)$. When $b=\bigO(1)$ and $\lambda_T=\Theta(1)$, we can choose $d=\bigO(1)$. Our goal is to show the minorization volume $\epsilon$ satisfying
\[
P(x,\cdot)\ge \epsilon Q(\cdot), \quad \forall x\in R,
\]
is asymptotically bounded away from $0$. Denoting ${\hat{\sigma}^2_A}:=\frac{\Delta}{n-1}-{\sigma_V^2}$, we have
\[
\begin{split}
R&=\left\{x\in\mathcal{X}: n(\bar{\theta}-\bar{Y})^2+n\left[\left(\frac{\Delta}{n-1}-{\sigma_V^2}\right)-{\sigma^2_A}\right]^2\le d\right\}\\
&\subseteq \left\{x\in\mathcal{X}:|\bar{\theta}-\bar{Y}|\le \sqrt{\frac{d}{n}} \right\}\bigcap\left\{x\in\mathcal{X}:|{\sigma^2_A}-{{\hat{\sigma}^2_A}}|\le \sqrt{\frac{d}{n}}\right\}
\end{split}
\]
Denoting 
\[
R':=\left\{x\in\mathcal{X}:|\bar{\theta}-\bar{Y}|\le \sqrt{\frac{d}{n}}, |{\sigma^2_A}-{{\hat{\sigma}^2_A}}|\le \sqrt{\frac{d}{n}} \right\}
\]
since $R\subseteq R'$, it suffices to show the minorization volume $\epsilon$ satisfying
\[
P(x^{(0)},\cdot)\ge \epsilon Q(\cdot), \quad \forall x^{(0)}\in R',
\]
is asymptotically bounded away from $0$. One common technique to obtain $\epsilon$ is by integrating the infimum of densities of $P(x^{(0)},\cdot)$ where in our case the infimum is over all $\bar{\theta}^{(0)}$ and $({\sigma^2_A})^{(0)}$ such that $|\bar{\theta}^{(0)}-\bar{Y}|\le \sqrt{\frac{d}{n}}$ and $|({\sigma^2_A})^{(0)}-{{\hat{\sigma}^2_A}}|\le \sqrt{\frac{d}{n}}$. 

Note that the intuition behind the proof is: since $R'$ is determined by $|\bar{\theta}^{(0)}-\bar{Y}|\le \sqrt{\frac{d}{n}}$ and $|({\sigma^2_A})^{(0)}-{{\hat{\sigma}^2_A}}|\le \sqrt{\frac{d}{n}}$. The size of uncertainties of the initial $\bar{\theta}^{(0)}$ and $({\sigma^2_A})^{(0)}$ is of order $\bigO(1/\sqrt{n})$. Therefore, for any fixed initial state $x^{(0)}\in R'$, if the transition kernel $P(x^{(0)},\cdot)$ concentrates at a rate of $\Omega(1/\sqrt{n})$ then $\epsilon$ is bounded away from $0$.

For the density function of the Markov transition kernel $P(x^{(0)},\cdot)$, recall the order of Gibbs sampler
\[
\begin{split}
\mu^{(1)}&\sim \mathcal{N}\left(\bar{\theta}^{(0)},\frac{({\sigma^2_A})^{(0)}}{n}\right),\\
\theta_i^{(1)}&\sim \mathcal{N}\left(\frac{\mu^{(1)}{\sigma_V^2}+Y_i ({\sigma^2_A})^{(0)}}{{\sigma_V^2}+ ({\sigma^2_A})^{(0)}},\frac{({\sigma^2_A})^{(0)}{\sigma_V^2}}{{\sigma_V^2}+({\sigma^2_A})^{(0)}}\right),\quad i=1,\dots,n\\
({\sigma^2_A})^{(1)}&\sim \IG\left(a+\frac{n-1}{2},b+\frac{1}{2}\sum_{i=1}^n (\theta_i^{(1)}-\bar{\theta}^{(1)})^2\right).
\end{split}
\]
Then $\epsilon$ can be computed using the three steps of integration according to the reverse order of the Gibbs sampler:
\begin{enumerate}
	\item For given $\mu^{(1)}$ and $\{\theta_i^{(1)}\}$, integrating the infimum of the density of $({\sigma^2_A})^{(1)}$. Note that the infimum is over a subset of $\bar{\theta}^{(0)}$ and $({\sigma^2_A})^{(0)}$. However, 
	\[
	({\sigma^2_A})^{(1)}&\sim \IG\left(a+\frac{n-1}{2},b+\frac{1}{2}\sum_{i=1}^n (\theta_i^{(1)}-\bar{\theta}^{(1)})^2\right)
	\]
	does not depend on $\bar{\theta}^{(0)}$ and $({\sigma^2_A})^{(0)}$. Therefore,
	the integration of the infimum of the density in this step always equals one;
	\item For given $\mu^{(1)}$, integrating the infimum of the densities of $\{\theta_i^{(1)}\}$. We first note that $\{\theta_i^{(1)}\}$ appear in the densities only in the forms of  $\bar{\theta}^{(1)}$ and $S=\frac{\sum_i (\theta_i^{(1)}-\bar{\theta}^{(1)})^2}{n-1}$. Therefore, instead of integrating over $(\theta_1^{(1)},\dots,\theta_n^{(1)})$ we can integrate over $\bar{\theta}^{(1)}$ and $S$. Furthermore, we have shown $\bar{\theta}^{(1)}$ is conditional independent with $S$ given $({\sigma^2_A})^{(0)}$ in the proof of \cref{temp_lemma}, we can integrate them separately. Finally, we note that the infimum is over $\left\{({\sigma^2_A})^{(0)}: |({\sigma^2_A})^{(0)}-{{\hat{\sigma}^2_A}}|\le \sqrt{\frac{d}{n}}\right\}$. Overall, we need to show $\tilde{g}_n(\mu^{(1)})$ is lower bounded away from $0$, which is defined by
	\[
	\begin{split}
	\tilde{g}_n(\mu^{(1)})&:=\int \dee S\dee \bar{\theta} \inf_{x^{(0)}\in R'} \left\{f_S(({\sigma^2_A})^{(0)},n; S)\,
	\Normal\left(\frac{\mu^{(1)}{\sigma_V^2}+\bar{Y} ({\sigma^2_A})^{(0)}}{{\sigma_V^2}+ ({\sigma^2_A})^{(0)}},\frac{({\sigma^2_A})^{(0)}{\sigma_V^2}}{n({\sigma_V^2}+({\sigma^2_A})^{(0)})};\bar{\theta}\right)\right\}\\
	&\ge\left[\int \dee S \inf_{x^{(0)}\in R'} f_S(({\sigma^2_A})^{(0)},n; S)\right]
	\\
	&\qquad \cdot\left[\int\dee \bar{\theta} \inf_{x^{(0)}\in R'}\Normal\left(\frac{\mu^{(1)}{\sigma_V^2}+\bar{Y} ({\sigma^2_A})^{(0)}}{{\sigma_V^2}+ ({\sigma^2_A})^{(0)}},\frac{({\sigma^2_A})^{(0)}{\sigma_V^2}}{n({\sigma_V^2}+({\sigma^2_A})^{(0)})};\bar{\theta}\right)\right],
	\end{split}
	\]
	where $f_S(({\sigma^2_A})^{(0)},n;S)$ denotes the density function of $S=\frac{\sum_i (\theta_i-\bar{\theta})^2}{n-1}$ for given $({\sigma^2_A})^{(0)}$, with
	\[
	\theta_i\sim \mathcal{N}\left(\frac{\mu^{(1)}{\sigma_V^2}+Y_i ({\sigma^2_A})^{(0)}}{{\sigma_V^2}+ ({\sigma^2_A})^{(0)}},\frac{({\sigma^2_A})^{(0)}{\sigma_V^2}}{{\sigma_V^2}+({\sigma^2_A})^{(0)}}\right),\quad i=1,\dots,n,
	\]
	and $\Normal\left(\frac{\mu^{(1)}{\sigma_V^2}+\bar{Y} ({\sigma^2_A})^{(0)}}{{\sigma_V^2}+ ({\sigma^2_A})^{(0)}},\frac{({\sigma^2_A})^{(0)}{\sigma_V^2}}{n({\sigma_V^2}+({\sigma^2_A})^{(0)})};\bar{\theta}\right)$ denotes the density function of 
	\[
	\bar{\theta}\sim \Normal\left(\frac{\mu^{(1)}{\sigma_V^2}+\bar{Y} ({\sigma^2_A})^{(0)}}{{\sigma_V^2}+ ({\sigma^2_A})^{(0)}},\frac{({\sigma^2_A})^{(0)}{\sigma_V^2}}{n({\sigma_V^2}+({\sigma^2_A})^{(0)})}\right).
	\]
	\item Finally, we integrate the infimum of the densities of $\mu^{(1)}$ to get $\epsilon$. That is,
	\[
	\epsilon=\int \dee \mu  \left\{\tilde{g}_n(\mu)\inf_{x^{(0)}\in R'} \mathcal{N}\left(\bar{\theta}^{(0)}, \frac{({\sigma^2_A})^{(0)}}{n}; \mu\right)\right\}.
	\]
\end{enumerate}
In the following, we show $\epsilon$ is lower bounded away from $0$ in three steps. 
	
	First, it is easy to see that the density of $S$ does not depend on $\mu^{(1)}$. We show
	\[\label{tmp_step1}
	\int \dee S \inf_{x^{(0)}\in R'} f_S(({\sigma^2_A})^{(0)},n; S)=\Theta(1).
	\]
	
	Second, we show 
	\[\label{tmp_step2}
	\int\dee \bar{\theta} \inf_{x^{(0)}\in R'}\Normal\left(\frac{\mu^{(1)}{\sigma_V^2}+\bar{Y} ({\sigma^2_A})^{(0)}}{{\sigma_V^2}+ ({\sigma^2_A})^{(0)}},\frac{({\sigma^2_A})^{(0)}{\sigma_V^2}}{n({\sigma_V^2}+({\sigma^2_A})^{(0)})};\bar{\theta}\right)\ge 
	1-\textrm{erf}\left(\frac{C|\mu|+C'}{\sqrt{2}}\right)
	\]
	where $\textrm{erf}(z):=\frac{2}{\sqrt{\pi}}\int_{0}^ze^{-t^2}\dee t$ and $C$ and $C'$ are some constants.
	
	Finally, we complete the proof by showing
	\[\label{tmp_step3}
	\int \dee \mu  \left\{\left(1-\textrm{erf}(\frac{C|\mu|+C'}{\sqrt{2}})\right)\inf_{x^{(0)}\in R'} \mathcal{N}\left(\bar{\theta}^{(0)}, \frac{({\sigma^2_A})^{(0)}}{n}; \mu\right)\right\}=\Theta(1).
	\]

\subsection{Proof of \cref{tmp_step1}}
We omit the superscripts for simplicity. That is, we show
\[
\int \dee S \inf_{\left\{{\sigma^2_A}: |{\sigma^2_A}-{{\hat{\sigma}^2_A}}|\le \sqrt{\frac{d}{n}}\right\}} f_S({\sigma^2_A},n; S)=\Theta(1).
\]
Following the proof of \cref{temp_lemma} from \cref{tmp_eq2} to \cref{tmp_eq3}, defining
\[
\eta_i:=\theta_i-\frac{Y_i {\sigma^2_A}}{{\sigma_V^2}+{\sigma^2_A}}\sim \mathcal{N}\left(\frac{\mu {\sigma_V^2}}{{\sigma_V^2}+{\sigma^2_A}},\frac{{\sigma^2_A} {\sigma_V^2}}{{\sigma_V^2}+{\sigma^2_A}}\right),
\]
we know
\[
\EE\left[\left|S-\frac{\sum_i (\eta_i-\bar{\eta})^2}{n-1}-\left(\frac{{\sigma^2_A}}{{\sigma_V^2}+{\sigma^2_A}}\right)^2\frac{\Delta}{n-1}\right|^2\right]=\bigO(1/n).
\]
Therefore, defining 
\[
S':=\frac{\sum_i (\eta_i-\bar{\eta})^2}{n-1}+\left(\frac{{\sigma^2_A}}{{\sigma_V^2}+{\sigma^2_A}}\right)^2\frac{\Delta}{n-1}
\]
and denoting $f_{S'}'({\sigma^2_A},n; S')$ as the density of $S'$, it suffices to show
\[
\int \dee S' \inf_{\left\{{\sigma^2_A}: |{\sigma^2_A}-{{\hat{\sigma}^2_A}}|\le \sqrt{\frac{d}{n}}\right\}} f_{S'}'({\sigma^2_A},n; S')=\Theta(1).
\]
Furthermore, note that under $|{\sigma^2_A}-{{\hat{\sigma}^2_A}}|\le \sqrt{\frac{d}{n}}$, we have $\frac{{\sigma_V^2}+{\sigma^2_A}}{{\sigma^2_A} {\sigma_V^2}}=\frac{{\sigma_V^2}+{{\hat{\sigma}^2_A}}}{{{\hat{\sigma}^2_A}} {\sigma_V^2}}+\bigO(1/\sqrt{n})=\Theta(1)$. Then it suffices to show
\[
\int \dee S'' \inf_{\left\{{\sigma^2_A}: |{\sigma^2_A}-{{\hat{\sigma}^2_A}}|\le \sqrt{\frac{d}{n}}\right\}} f_{S''}''({\sigma^2_A},n; S'')=\Theta(1),
\]
where 
\[
S'':&=\frac{{\sigma_V^2}+{\sigma^2_A}}{{\sigma^2_A} {\sigma_V^2}}S'
=\frac{{\sigma_V^2}+{\sigma^2_A}}{{\sigma^2_A} {\sigma_V^2}}\frac{\sum_i (\eta_i-\bar{\eta})^2}{n-1}+\frac{1}{{\sigma_V^2}}\left(\frac{{\sigma^2_A}}{{\sigma_V^2}+{\sigma^2_A}}\right)\frac{\Delta}{n-1}
\] 
and $f_{S''}''({\sigma^2_A},n; S'')$ is the density function of $S''$.

Next, note that $\frac{{\sigma_V^2}+{\sigma^2_A}}{{\sigma^2_A} {\sigma_V^2}}\sum_i (\eta_i-\bar{\eta})^2\sim \chi^2_{n-1}$, we have 
\[
\frac{\frac{{\sigma_V^2}+{\sigma^2_A}}{{\sigma^2_A} {\sigma_V^2}}\sum_i (\eta_i-\bar{\eta})^2-(n-1)}{\sqrt{2(n-1)}}\xrightarrow{d} \Normal(0,1),
\] 
which does not depend on $n$.
We define $\tilde{f}(z, {\sigma^2_A}; x), \forall z\in\mathbb{R}$ as the density function of a random variable 
\[
\tilde{X}_{z,{\sigma^2_A}}:=z+\frac{\frac{{\sigma_V^2}+{\sigma^2_A}}{{\sigma^2_A} {\sigma_V^2}}\sum_i (\eta_i-\bar{\eta})^2
	-(n-1)}{\sqrt{2(n-1)}},
\]
then we know $\tilde{X}_{z,{\sigma^2_A}}\xrightarrow{d} \Normal(z,1)$.

The rest of the proof is first to lower bound
$\int \dee S'' \inf_{\left\{{\sigma^2_A}: |{\sigma^2_A}-{{\hat{\sigma}^2_A}}|\le \sqrt{\frac{d}{n}}\right\}} f_{S''}''({\sigma^2_A},n; S'')$ 
using the density function $\tilde{f}(z, {\sigma^2_A}; x)$ and then show it is asymptotically lower bounded away from $0$.

Notice that $\frac{1}{{\sigma_V^2}}\left(\frac{{\sigma^2_A}}{{\sigma_V^2}+{\sigma^2_A}}\right)\frac{\Delta}{n-1}$ is not random, and there exists a constant $C_0$ such that
\[
&\left(\max_{\{{\sigma^2_A}: |{\sigma^2_A}-{{\hat{\sigma}^2_A}}|\le \sqrt{d/n}\}}\frac{{\sigma^2_A}}{{\sigma_V^2}+{\sigma^2_A}}-\min_{\{{\sigma^2_A}: |{\sigma^2_A}-{{\hat{\sigma}^2_A}}|\le \sqrt{d/n}\}}\frac{{\sigma^2_A}}{{\sigma_V^2}+{\sigma^2_A}}\right)\frac{\Delta/{\sigma_V^2}}{n-1}\le\frac{C_0}{\sqrt{n-1}}.
\]
Finally we have
\[
\begin{split}
&\int \dee S'' \inf_{\left\{{\sigma^2_A}: |{\sigma^2_A}-{{\hat{\sigma}^2_A}}|\le \sqrt{\frac{d}{n}}\right\}} f_{S''}''({\sigma^2_A},n; S'')\\
&\ge \inf_{\left\{{\sigma^2_A}: |{\sigma^2_A}-{{\hat{\sigma}^2_A}}|\le \sqrt{\frac{d}{n}}\right\}}\int \dee x \min\left\{ \tilde{f}\left(-\frac{C_0}{\sqrt{2}}, {\sigma^2_A}; x\right), \tilde{f}\left(+\frac{C_0}{\sqrt{2}}, {\sigma^2_A}; x\right) \right\}\\
&= 1- \sup_{\{{\sigma^2_A}: |{\sigma^2_A}-{{\hat{\sigma}^2_A}}|\le \sqrt{d/n}\}}\int_{-\sqrt{2}C_0}^{\sqrt{2}C_0} \dee x \tilde{f}(0, {\sigma^2_A}; x)\\
&=1-\sup_{\{{\sigma^2_A}: |{\sigma^2_A}-{{\hat{\sigma}^2_A}}|\le \sqrt{d/n}\}}\Pr(-\sqrt{2}C_0\le \tilde{X}_{0,{\sigma^2_A}}\le \sqrt{2}C_0)\\
&\to 1-\int_{-\sqrt{2}C_0}^{\sqrt{2}C_0} \dee x \Normal(0,1; x)=\Theta(1).
\end{split}
\]

\subsection{Proof of \cref{tmp_step2}}
We again omit the subscripts for simplicity. The goal is to lower bound
\[
	\int\dee \bar{\theta} \inf_{\left\{{\sigma^2_A}: |{\sigma^2_A}-{{\hat{\sigma}^2_A}}|\le \sqrt{\frac{d}{n}}\right\}}\Normal\left(\frac{\mu {\sigma_V^2}+\bar{Y} {\sigma^2_A}}{{\sigma_V^2}+ {\sigma^2_A}},\frac{{\sigma^2_A} {\sigma_V^2}}{n({\sigma_V^2}+{\sigma^2_A})};\bar{\theta}\right)
\]
Note that there exists some constants $C_1$ and $C_2$ such that
\[
\max_{\left\{{\sigma^2_A}: |{\sigma^2_A}-{{\hat{\sigma}^2_A}}|\le \sqrt{\frac{d}{n}}\right\}} \frac{\mu {\sigma_V^2}+\bar{Y}{\sigma^2_A}}{{\sigma_V^2}+{\sigma^2_A}}-\min_{\left\{{\sigma^2_A}: |{\sigma^2_A}-{{\hat{\sigma}^2_A}}|\le \sqrt{\frac{d}{n}}\right\}} \frac{\mu {\sigma_V^2}+\bar{Y}{\sigma^2_A}}{{\sigma_V^2}+{\sigma^2_A}}\le
\frac{C_1|\mu|+C_2}{\sqrt{n}},
\]
and another constant $C_3$ such that
\[
\min_{\left\{{\sigma^2_A}: |{\sigma^2_A}-{{\hat{\sigma}^2_A}}|\le \sqrt{\frac{d}{n}}\right\}} \frac{{\sigma^2_A} {\sigma_V^2}}{n({\sigma_V^2}+{\sigma^2_A})} \ge \frac{C_3}{n}.
\]
Therefore, we have
\[
\begin{split}
	&\int\dee \bar{\theta} \inf_{\left\{{\sigma^2_A}: |{\sigma^2_A}-{{\hat{\sigma}^2_A}}|\le \sqrt{\frac{d}{n}}\right\}}\Normal\left(\frac{\mu {\sigma_V^2}+\bar{Y} {\sigma^2_A}}{{\sigma_V^2}+ {\sigma^2_A}},\frac{{\sigma^2_A} {\sigma_V^2}}{n({\sigma_V^2}+{\sigma^2_A})};\bar{\theta}\right)\\
	&\ge 2 \int_{(C_1|\mu|+C_2)/\sqrt{n}}^{\infty} \dee x\, \Normal(0,C_3/n; x)\\ &=2\int_{C_4|\mu|+C_5}^{\infty} \dee x\, \Normal(0,1; x)\\
	&=1-\textrm{erf}\left(\frac{C_4|\mu|+C_5}{\sqrt{2}}\right),
\end{split}
\]
where $C_4:=\frac{C_1}{\sqrt{C_3}}$ and $C_5:=\frac{C_2}{\sqrt{C_3}}$.

\subsection{Proof of \cref{tmp_step3}}
We omit the subscripts for simplicity. We show the following is asymptotically bounded away from $0$:
\[
	\int \dee \mu  \left\{\left(1-\textrm{erf}\left(\frac{C_4|\mu|+C_5}{\sqrt{2}}\right)\right)\inf_{x\in R'} \mathcal{N}\left(\bar{\theta}, \frac{{\sigma^2_A}}{n}; \mu\right)\right\}
\]
Note that there exists $({\sigma^2_A})_n'\in [{{\hat{\sigma}^2_A}}-\sqrt{d/n},{{\hat{\sigma}^2_A}}+\sqrt{d/n}]$ such that
\[
\begin{split}
&\inf_{\left\{(\bar{\theta},{\sigma^2_A}): |\bar{\theta}-\bar{Y}|\le \sqrt{\frac{d}{n}}, |{\sigma^2_A}-{{\hat{\sigma}^2_A}}|\le \sqrt{\frac{d}{n}}\right\}} \mathcal{N}\left(\bar{\theta}, \frac{{\sigma^2_A}}{n}; \mu\right)\\
&\quad=\min\left\{\Normal\left(\bar{Y}-\sqrt{\frac{d}{n}}, \frac{({\sigma^2_A})_n'}{n};\mu\right),\Normal\left(\bar{Y}+\sqrt{\frac{d}{n}}, \frac{({\sigma^2_A})_n'}{n};\mu\right)\right\}
\end{split}
\]

Therefore, we have
\[
\begin{split}
&\int_{-\infty}^{\infty} \dee \mu  \left\{\left(1-\textrm{erf}\left(\frac{C_4|\mu|+C_5}{\sqrt{2}}\right)\right)\inf_{\left\{(\bar{\theta},{\sigma^2_A}): |\bar{\theta}-\bar{Y}|\le \sqrt{\frac{d}{n}}, |{\sigma^2_A}-{{\hat{\sigma}^2_A}}|\le \sqrt{\frac{d}{n}}\right\}} \mathcal{N}\left(\bar{\theta}, \frac{{\sigma^2_A}}{n}; \mu\right)\right\}\\
&\ge \int_{0}^{2\bar{Y}} \dee \mu  \left\{\left(1-\textrm{erf}\left(\frac{C_4|\mu|+C_5}{\sqrt{2}}\right)\right)\inf_{\left\{(\bar{\theta},{\sigma^2_A}): |\bar{\theta}-\bar{Y}|\le \sqrt{\frac{d}{n}}, |{\sigma^2_A}-{{\hat{\sigma}^2_A}}|\le \sqrt{\frac{d}{n}}\right\}} \mathcal{N}\left(\bar{\theta}, \frac{{\sigma^2_A}}{n}; \mu\right)\right\}\\
&\ge \left(1-\textrm{erf}\left(\frac{C_4|2\bar{Y}|+C_5}{\sqrt{2}}\right)\right)\int_{0}^{2\bar{Y}} \dee \mu \inf_{\left\{(\bar{\theta},{\sigma^2_A}): |\bar{\theta}-\bar{Y}|\le \sqrt{\frac{d}{n}}, |{\sigma^2_A}-{{\hat{\sigma}^2_A}}|\le \sqrt{\frac{d}{n}}\right\}} \mathcal{N}\left(\bar{\theta}, \frac{{\sigma^2_A}}{n}; \mu\right)\\
&=\left(1-\textrm{erf}\left(\frac{C_4|2\bar{Y}|+C_5}{\sqrt{2}}\right)\right)\\
&\quad\cdot\left[\int_{0}^{\bar{Y}}\dee \mu\,\Normal\left(\bar{Y}+\sqrt{\frac{d}{n}}, \frac{({\sigma^2_A})_n'}{n};\mu\right)+\int_{\bar{Y}}^{2\bar{Y}}\dee \mu\,\Normal\left(\bar{Y}-\sqrt{\frac{d}{n}}, \frac{({\sigma^2_A})_n'}{n};\mu\right)\right]\\
&=\left(1-\textrm{erf}\left(\frac{C_4|2\bar{Y}|+C_5}{\sqrt{2}}\right)\right)\\
&\quad\cdot\left[\int_{-\bar{Y}}^{0}\dee \mu\,\Normal\left(\sqrt{\frac{d}{n}}, \frac{({\sigma^2_A})_n'}{n};\mu\right)+\int_{0}^{\bar{Y}}\dee \mu\,\Normal\left(-\sqrt{\frac{d}{n}}, \frac{({\sigma^2_A})_n'}{n};\mu\right)\right]
\end{split}
\]
Finally, we show
\[
\int_{-\bar{Y}}^{0}\dee \mu\,\Normal\left(\sqrt{\frac{d}{n}}, \frac{({\sigma^2_A})_n'}{n};\mu\right)+\int_{0}^{\bar{Y}}\dee \mu\,\Normal\left(-\sqrt{\frac{d}{n}}, \frac{({\sigma^2_A})_n'}{n};\mu\right)
\]
is asymptotically bounded away from $0$. Note that when $n\to \infty$, we have $({\sigma^2_A})_n'\to {{\hat{\sigma}^2_A}}$. So the density functions $\Normal\left(\pm\sqrt{\frac{d}{n}}, \frac{({\sigma^2_A})_n'}{n};\mu\right)$ concentrate on $0$. Therefore
\[
\begin{split}
&\int_{-\bar{Y}}^{0}\dee \mu\,\Normal\left(\sqrt{\frac{d}{n}}, \frac{({\sigma^2_A})_n'}{n};\mu\right)+\int_{0}^{\bar{Y}}\dee \mu\,\Normal\left(-\sqrt{\frac{d}{n}}, \frac{({\sigma^2_A})_n'}{n};\mu\right)\\
&\to \int_{-\infty}^{0}\dee \mu\,\Normal\left(\sqrt{\frac{d}{n}}, \frac{{{\hat{\sigma}^2_A}}}{n};\mu\right)+\int_{0}^{\infty}\dee \mu\,\Normal\left(-\sqrt{\frac{d}{n}}, \frac{{{\hat{\sigma}^2_A}}}{n};\mu\right)\\
&=1-\int_{-\sqrt{d/n}}^{\sqrt{d/n}}\dee x \, \Normal\left(0,\frac{{{\hat{\sigma}^2_A}}}{n}; x\right)\\
&=1-\int_{-\sqrt{d}}^{\sqrt{d}}\dee x \, \Normal(0,{{\hat{\sigma}^2_A}}; x)=\Theta(1).
\end{split}
\]
\end{proof}

	\section{Proof of Lemma \ref{lemma_R0}}\label{proof_lemma_R0}
	\begin{lemma}\label{lemma_R0}
	Under the assumptions of \cref{thm_Gibbs}, recall the definition of drift function and ``large set'' in the proof of \cref{thm_Gibbs}. With the initial state $x^{(0)}$ given by \cref{initial_state}, there exists a positive integer $N$, which does not depend on $k$, such that for all $n\ge N$, we have
	\[\label{eq_tail_bound2}
	\begin{split}
	&k\,\pi(R_T^c)+ \sum_{i=1}^k P^i (x^{(0)},R_T^c)\\
	&\le   \frac{k}{\sqrt{n}} \frac{\sqrt{b}(2{\sigma_V^2}/\delta+1)}{\left|\left(\frac{\Delta}{n-1}-{\sigma_V^2}\right)-T\right|} + \frac{k(1+k)}{2n}\frac{b}{\left[\left(\frac{\Delta}{n-1}-{\sigma_V^2}\right)-T\right]^2}.
	\end{split}
	\]
\end{lemma}

\begin{proof}
{In this proof, we write $f_n(x)$ as $f(x)$ for simplicity.} We first consider a Markov chain starting from initial state $x^{(0)}$ defined by \cref{initial_state}. By \cref{eq_assumption}, we have $({\sigma^2_A})^{(0)}=\frac{\sum_{i=1}^n (Y_i-\bar{Y})^2}{n-1}-{\sigma_V^2}$ for large enough $n$, which implies $f(x^{(0)})=0$. Therefore, for large enough $n$, we have $\EE(f(x^{(1)}))\le b$ from \cref{key_thm}. Furthermore, we can continue to get upper bounds $\EE(f(x^{(i)}))\le i b$
for all $i=1,\dots,k$. This implies
\[
\EE\left[\left(\left(\frac{\Delta}{n-1}-{\sigma_V^2}\right)-({\sigma^2_A})^{(i)}\right)^2\right]\le i \frac{b}{n},\quad i=1,\dots,k.
\]
By the Markov's inequality, we have
\[
\begin{split}
&\Pr\left(\left|({\sigma^2_A})^{(i)}-\left(\frac{\Delta}{n-1}-{\sigma_V^2}\right)\right|
\ge \left|T-\left(\frac{\Delta}{n-1}-{\sigma_V^2}\right)\right|\right)
\le\frac{i}{n} \frac{b}{\left[T-\left(\frac{\Delta}{n-1}-{\sigma_V^2}\right)\right]^2},
\end{split}
\]
for $i=1,\dots,k$.
Therefore, we have
\[
\sum_{i=1}^k P^i (x^{(0)},R_T^c)\le \frac{b}{\left[T-\left(\frac{\Delta}{n-1}-{\sigma_V^2}\right)\right]^2}\sum_{i=1}^k \frac{i}{n}=\frac{k(1+k)}{2n}\frac{b}{\left[T-\left(\frac{\Delta}{n-1}-{\sigma_V^2}\right)\right]^2}.
\]
Next, we consider a Markov chain starting from $\pi$. According to \cref{key_thm}, we have
\[
\begin{split}
&\EE_{\pi}\left[\left(1-\left(\frac{({\sigma_V^2})^2+2{\sigma_V^2}{\sigma^2_A}}{({\sigma_V^2})^2+2{\sigma_V^2}{\sigma^2_A}+({\sigma^2_A})^2}\right)^2\right)f(x)\right]\\
&=\EE_{\pi}\left[\left(1+\frac{({\sigma_V^2})^2+2{\sigma_V^2}{\sigma^2_A}}{({\sigma_V^2})^2+2{\sigma_V^2}{\sigma^2_A}+({\sigma^2_A})^2}\right)\left(1-\frac{({\sigma_V^2})^2+2{\sigma_V^2}{\sigma^2_A}}{({\sigma_V^2})^2+2{\sigma_V^2}{\sigma^2_A}+({\sigma^2_A})^2}\right)f(x)\right]\\
&=\EE_{\pi}\left[\left(1+\frac{({\sigma_V^2})^2+2{\sigma_V^2}{\sigma^2_A}}{({\sigma_V^2})^2+2{\sigma_V^2}{\sigma^2_A}+({\sigma^2_A})^2}\right)\left(\frac{{\sigma^2_A}}{{\sigma_V^2}+{\sigma^2_A}}\right)^2f(x)\right]
\le b,
\end{split}
\]
where $\EE_{\pi}[\cdot]$ denotes the expectation is over $x\sim \pi(\cdot)$.
Note that by {H\"older's inequality (in the reverse way)}
\[
\begin{split}
&\EE_{\pi}\left[\left(1+\frac{({\sigma_V^2})^2+2{\sigma_V^2}{\sigma^2_A}}{({\sigma_V^2})^2+2{\sigma_V^2}{\sigma^2_A}+({\sigma^2_A})^2}\right)\left(\frac{{\sigma^2_A}}{{\sigma_V^2}+{\sigma^2_A}}\right)^2f(x)\right]\\
&\ge
\EE_{\pi}\left[\left(\frac{{\sigma^2_A}}{{\sigma_V^2}+{\sigma^2_A}}\right)^2f(x)\right]\\
&\ge [\EE_{\pi}(f(x)^{\frac{1}{2}})]^2\left\{\EE_{\pi}\left[\left(\frac{{\sigma^2_A}}{{\sigma_V^2}+{\sigma^2_A}}\right)^{-2}\right]\right\}^{-1}\\
&=[\EE_{\pi}(f(x)^{\frac{1}{2}})]^2 / \EE_{\pi}[(1+{\sigma_V^2}/{\sigma^2_A})^2].
\end{split}
\]
Therefore, we have
\[
\EE_{\pi}(f(x)^{\frac{1}{2}})\le \sqrt{b}\sqrt{1+2{\sigma_V^2}\EE_{\pi}(1/{\sigma^2_A})+({\sigma_V^2})^2\EE_{\pi}(1/({\sigma^2_A})^2)}.
\]
Next, according to \cref{lemma_bounded_A}, we know that $\EE_{\pi}(1/{\sigma^2_A})\le 2/\delta$ and $\EE_{\pi}(1/({\sigma^2_A})^2)\le 2/\delta^2$ for large enough $n$.

More specifically, by \cref{lemma_bounded_A}, we have
$\sqrt{1+2{\sigma_V^2}\EE_{\pi}(1/{\sigma^2_A})+({\sigma_V^2})^2\EE_{\pi}(1/({\sigma^2_A})^2)}\le 1+2{\sigma_V^2}/\delta$ for large enough $n$. Therefore, we get 
\[
\EE_{\pi}\left(\left|\left(\frac{\Delta}{n-1}-{\sigma_V^2}\right)-{\sigma^2_A}\right|\right)\le \sqrt{\frac{b}{n}}(2{\sigma_V^2}/\delta+1).
\]
Thus, by the Markov's inequality
\[
\begin{split}
\pi(R_T^c)&=\Pr_{\pi}\left(\left|\left(\frac{\Delta}{n-1}-{\sigma_V^2}\right)-{\sigma^2_A}\right|\ge \left|\left(\frac{\Delta}{n-1}-{\sigma_V^2}\right)-T\right|\right)\\
&\le \frac{\sqrt{\frac{b}{n}}(2{\sigma_V^2}/\delta+1)}{\left|\left(\frac{\Delta}{n-1}-{\sigma_V^2}\right)-T\right|}.
\end{split}
\]
Finally, we have
\[
\begin{split}
&k\,\pi(R_T^c)+ \sum_{i=1}^k P^i (x^{(0)},R_T^c)\\
&\le   \frac{k}{\sqrt{n}} \frac{\sqrt{b}(2{\sigma_V^2}/\delta+1)}{\left|\left(\frac{\Delta}{n-1}-{\sigma_V^2}\right)-T\right|} + \frac{k(1+k)}{2n}\frac{b}{\left[T-\left(\frac{\Delta}{n-1}-{\sigma_V^2}\right)\right]^2}.
\end{split}
\]
\end{proof}

\begin{lemma}\label{lemma_bounded_A}
	There exists a positive integer $N$, which only depends on $a$, $b$, ${\sigma_V^2}$, and $\delta$, such that for all $n\ge N$, we have
	\[
	\EE_{\pi}(1/{\sigma^2_A})\le 2/\delta,\quad \EE_{\pi}(1/({\sigma^2_A})^2)\le 2/\delta^2.
	\]
\end{lemma}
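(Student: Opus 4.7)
The plan is to derive the marginal posterior density of $A$ by integrating out $\mu$ and $\theta_1,\dots,\theta_n$, and then apply a Laplace-type concentration argument to show that $\EE_\pi[1/A] \to 1/A^*$, where $A^* := \Delta/(n-1) - V$. Since by \cref{eq_assumption} we have $A^* \ge \delta$ for $n \ge N_0$, this will yield $\EE_\pi[1/A] \le 1/\delta + o(1) \le 2/\delta$ for $n$ sufficiently large.

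First, integrating $\theta_i$ in the MCMC model yields $Y_i \mid \mu, A \distind \Normal(\mu, V+A)$, and then integrating the flat prior on $\mu$ gives a marginal likelihood proportional to $(V+A)^{-(n-1)/2}\exp(-\Delta/(2(V+A)))$. Combining with the $\IG(a,b)$ prior on $A$ yields
\[
\pi(A \mid Y) \propto \rho(A),\qquad \rho(A) := A^{-a-1} e^{-b/A} (V+A)^{-(n-1)/2}\exp\!\left(-\frac{\Delta}{2(V+A)}\right).
\]
Let $h(A) := -\tfrac{n-1}{2}\log(V+A) - \tfrac{\Delta}{2(V+A)}$ be the part of $\log\rho$ that scales linearly in $n$. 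Elementary calculus shows $h$ is strictly concave with unique maximum at $A^*$, and $h''(A^*) = -(n-1)^3/(2\Delta^2) = -\Theta(n)$ since $\Delta/(n-1) = \Theta(1)$ by \cref{eq_assumption}.

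Second, I would write
\[
\EE_\pi[1/A] = \frac{\int_0^\infty A^{-1}\rho(A)\,dA}{\int_0^\infty \rho(A)\,dA}
\]
and apply Laplace's method. Both integrals admit a leading-order expansion of the form $\phi(A^*)(A^*)^{-a-1} e^{-b/A^*} e^{h(A^*)} \sqrt{2\pi/(-h''(A^*))}\,(1+o(1))$, with $\phi=1$ for the denominator and $\phi(A)=1/A$ for the numerator, so the ratio is $1/A^* + o(1)$. Since $1/A^* \le 1/\delta$, this would finish the proof once the $o(1)$ term is controlled.

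The main obstacle is making the Laplace estimate rigorous while controlling the tail contribution from $(0,\delta/2]$, where $1/A$ is unbounded. For this I would establish a uniform exponential bound on the density ratio: setting $r := (V+A)/(V+A^*)$, the concavity gap factors as $h(A) - h(A^*) = -\tfrac{n-1}{2}\,g(r)$ with $g(r) := \log r + 1/r - 1$, which is strictly positive for $r\neq 1$ and bounded below by a positive constant on any closed subinterval of $(0,1)$. Since $A^* \ge \delta$ implies $r \le (V+\delta/2)/(V+\delta) < 1$ uniformly on $A \in (0,\delta/2]$, there is a constant $c>0$ depending only on $V,\delta$ and the upper bound of $A^*$ (finite by \cref{eq_assumption}) with $\rho(A)/\rho(A^*) \le e^{-(n-1)c}$ on that interval. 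The remaining factor $A^{-a-2}e^{-b/A}$ is bounded on $(0,\delta/2]$ (it vanishes at $A=0$), so the contributions from $(0,\delta/2]$ to both the numerator and denominator are $O(e^{-(n-1)c})$, which is negligible compared with the $e^{h(A^*)}/\sqrt{n}$-scale Laplace contribution near $A^*$. Combining this tail bound with the standard Laplace expansion on $[\delta/2,\infty)$ yields $\EE_\pi[1/A] = 1/A^* + o(1)$, and choosing $N$ large enough that the $o(1)$ term is at most $1/\delta$ produces the desired $\EE_\pi[1/A]\le 2/\delta$ for all $n \ge N$, with $N$ depending only on $a,b,V,\delta$.
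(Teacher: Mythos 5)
Your proposal is correct and follows essentially the same route as the paper's proof: integrate out $\theta_1,\dots,\theta_n$ and $\mu$ to reduce $\EE_\pi(1/A)$ to a ratio of one-dimensional integrals over $A$, then apply Laplace's method at the mode $A^*=\Delta/(n-1)-V\ge\delta$ (the paper does the same after a change of variables $y=1/\sqrt{V+A}$ and the identity $\tfrac{1}{A}f_a=\tfrac{a}{b}f_{a+1}$, which is just bookkeeping). One small slip: $h$ is not globally strictly concave ($h''$ changes sign at $V+A=2\Delta/(n-1)$), but this is harmless since the identity $h(A)-h(A^*)=-\tfrac{n-1}{2}\bigl(\log r+1/r-1\bigr)$ you actually use gives the needed global unimodal control, and your explicit treatment of the $A\downarrow 0$ tail is if anything more careful than the paper's.
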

\begin{proof}
	The posterior distribution can be written as
	\[
	{\pi}(x\,|\,Y_1,\dots,Y_n)=\frac{f_a(x,Y_1,\dots,Y_n)}{\int f_a(x,Y_1,\dots,Y_n)\dee x},
	\]
	where we use $f_a(x,Y_1,\dots,Y_n)$ to denote the joint distribution of $x$ and $\{Y_i\}$ when $\IG(a,b)$ is used as the prior for ${\sigma^2_A}$. That is,
	\[
	\begin{split}
	&f_a(x,Y_1,\dots,Y_n)\\
	&=\frac{b^a}{\Gamma(a)}({\sigma^2_A})^{-a-1}e^{-b/{\sigma^2_A}}\prod_{i=1}^n\frac{1}{\sqrt{2\pi {\sigma^2_A}}}e^{-\frac{(\theta_i-\mu)^2}{2{\sigma^2_A}}}\frac{1}{\sqrt{2\pi}}e^{-\frac{(Y_i-\theta_i)^2}{2{\sigma_V^2}}}\\
	&=\frac{1}{(2\pi)^n}\frac{b^a}{\Gamma(a)}({\sigma^2_A})^{-a-1-\frac{n}{2}}e^{-b/{\sigma^2_A}}\exp\left[-\sum_{i=1}^n\left(\frac{(\theta_i-\mu)^2}{2{\sigma^2_A}}+\frac{(Y_i-\theta_i)^2}{2{\sigma_V^2}}\right)\right].
	\end{split}
	\]
	Now using $\frac{1}{{\sigma^2_A}}f_a(x, Y_1,\dots,Y_n)=\frac{a}{b}f_{a+1}(x,Y_1,\dots,Y_n)$, we have
	\[
	\EE_{\pi}(1/{\sigma^2_A})=\frac{a}{b}\frac{\int f_{a+1}(x,Y_1,\dots,Y_n)\dee x}{\int f_a(x,Y_1,\dots,Y_n)\dee x},\quad 	\EE_{\pi}(1/({\sigma^2_A})^2)=\frac{a^2}{b^2}\frac{\int f_{a+2}(x,Y_1,\dots,Y_n)\dee x}{\int f_a(x,Y_1,\dots,Y_n)\dee x}.
	\]
	Therefore, it suffices to show the ratios $\frac{\int f_{a+1}(x,Y_1,\dots,Y_n)\dee x}{\int f_a(x,Y_1,\dots,Y_n)\dee x}$ and $\frac{\int f_{a+2}(x,Y_1,\dots,Y_n)\dee x}{\int f_a(x,Y_1,\dots,Y_n)\dee x}$ are (asymptotically) bounded. Next, we focus on the first ratio. The second ratio can be proved using a similar argument.
	
	Using the fact that
	\[
	\begin{split}
	&\int\exp\left[-\left(\frac{{\sigma_V^2}(\theta_i-\mu)^2+{\sigma^2_A}(Y_i-\theta_i)^2}{2{\sigma^2_A}{\sigma_V^2}}\right)\right]\dee \theta_i\\
	&=\left(\int \exp\left[-\frac{\left(\theta-\frac{{\sigma_V^2}\mu+Y {\sigma^2_A}}{{\sigma^2_A}+{\sigma_V^2}}\right)^2}{\frac{2{\sigma^2_A} {\sigma_V^2}}{{\sigma^2_A}+{\sigma_V^2}}}\right]\dee \theta_i\right) \left(\exp\left[-\frac{(Y_i-\mu)^2}{2({\sigma_V^2}+{\sigma^2_A})}\right]\right)\\
	&=\sqrt{2\pi \frac{2{\sigma^2_A}{\sigma_V^2}}{{\sigma_V^2}+{\sigma^2_A}}}\exp\left[-\frac{(Y_i-\mu)^2}{2({\sigma_V^2}+{\sigma^2_A})}\right],
	\end{split}
	\]
	and
	\[
	\begin{split}
	&\int\exp\left[-\frac{\sum_{i=1}^n(Y_i-\mu)^2}{2({\sigma_V^2}+{\sigma^2_A})}\right]\dee\mu\\
	&=\left(\int \exp\left[-\frac{(\mu-\bar{Y})^2}{2({\sigma_V^2}+{\sigma^2_A})/n}\right]\dee\mu\right) \left(\exp\left[-\frac{\sum_i Y_i^2-n \bar{Y}^2}{2({\sigma_V^2}+{\sigma^2_A})}\right]\right)\\
	&=\exp\left[-\frac{\sum_{i=1}^n(Y_i-\bar{Y})^2}{2({\sigma_V^2}+{\sigma^2_A})}\right]\sqrt{2\pi \frac{2({\sigma_V^2}+{\sigma^2_A})}{n}},
	\end{split}
	\]
	we can write $\EE_{\pi}(1/{\sigma^2_A})$ as a function of $\Delta=\sum_i(Y_i-\bar{Y})^2$. Denote $h_n(\Delta):=\EE_{\pi}(1/{\sigma^2_A})$, then we have
	\[
	h_n(\Delta):=\frac{\int ({\sigma^2_A})^{-a-2}e^{-b/{\sigma^2_A}}({\sigma_V^2}+{\sigma^2_A})^{\frac{-n+1}{2}}\exp\left[-\frac{\Delta}{2({\sigma_V^2}+{\sigma^2_A})}\right]\dee {\sigma^2_A}}{\int ({\sigma^2_A})^{-a-1}e^{-b/{\sigma^2_A}}({\sigma_V^2}+{\sigma^2_A})^{\frac{-n+1}{2}}\exp\left[-\frac{\Delta}{2({\sigma_V^2}+{\sigma^2_A})}\right] \dee {\sigma^2_A}}.
	\]
	Next, we show $h_n((n-1)(c+{\sigma_V^2}))$ is (asymptotically) bounded for any fixed $c>0$. Note that 
	\[
	\begin{split}
	\int &({\sigma^2_A})^{-a-1}e^{-b/{\sigma^2_A}}({\sigma_V^2}+{\sigma^2_A})^{\frac{-n+1}{2}}\exp\left[-\frac{\Delta}{2({\sigma_V^2}+{\sigma^2_A})}\right]\dee {\sigma^2_A}\\
	&=\int ({\sigma^2_A})^{-a-1}e^{-b/{\sigma^2_A}}  \left\{\frac{1}{\sqrt{{\sigma_V^2}+{\sigma^2_A}}}\exp\left[-\frac{\frac{\Delta}{n-1}}{2({\sigma_V^2}+{\sigma^2_A})}\right]\right\}^{n-1} \dee {\sigma^2_A}.
	\end{split}
	\]
	We change variable $y=\frac{1}{\sqrt{{\sigma_V^2}+{\sigma^2_A}}}$ and apply the Laplace approximation. Note that for any $c>0$, let $y_0=\arg\max_{y}\left[y\exp\left(-\frac{c+{\sigma_V^2}}{2}y^2\right)\right]$, then $y_0=\frac{1}{\sqrt{c+{\sigma_V^2}}}$. Therefore, by the Laplace approximation \cite[Thm. 1, Chp. 19.2.4]{Zorich2004}, we have
	\[
	\begin{split}
	h_n((n-1)(c+{\sigma_V^2}))
	&=\frac{c^{-a-2}e^{-b/c}\left[y_0\exp\left(-\frac{c+{\sigma_V^2}}{2}y_0^2\right)\right]^{n-1}(1+\bigO(n^{-\frac{1}{2}}))}{c^{-a-1}e^{-b/c}\left[y_0\exp\left(-\frac{c+{\sigma_V^2}}{2}y_0^2\right)\right]^{n-1}(1+\bigO(n^{-\frac{1}{2}}))}\\
	&=\frac{1}{c}(1+\bigO(n^{-1/2})),
	\end{split}
	\]
	where the term $\bigO((n^{-1/2})$ only depends on constants $a$, $b$, and ${\sigma_V^2}$.
	Finally, since for all $n\ge N_0$ we have $\Delta\ge (n-1)({\sigma_V^2}+\delta)$, this implies
	$h_n(\Delta)\le\frac{1}{\delta}(1+\bigO(n^{-1/2})), \forall n\ge N_0$. Therefore, there exists large enough positive integer $N_0$, which only depends on $a$, $b$, ${\sigma_V^2}$, and $\delta$, such that for all $n\ge N_0$, we have $\EE_{\pi}(1/{\sigma^2_A})=h_n(\Delta)\le \frac{1}{\delta}(1+\bigO(n^{-1/2}))\le \frac{2}{\delta}$. 
	
	For $\EE_{\pi}(1/({\sigma^2_A})^2)$, we can follow a similar argument to show that $\EE_{\pi}(1/({\sigma^2_A})^2)\le \frac{2}{\delta^2}$ for large enough $n$. Therefore, we can conclude that there exists large enough positive integer $N$, which only depends on $a$, $b$, ${\sigma_V^2}$, and $\delta$, such that for all $n\ge N$, we have both $\EE_{\pi}(1/{\sigma^2_A})\le \frac{2}{\delta}$ and $\EE_{\pi}(1/({\sigma^2_A})^2)\le \frac{2}{\delta^2}$.
\end{proof}

	\section{Proof of Theorem \ref{thm_toy_example}}\label{proof_thm_toy_example}
	
The key step is to establish the following drift condition:
\[
\EE[f^{\textrm{new}}(X^{(1)})]\le \frac{1}{4} f^{\textrm{new}}(X^{(0)})+\bigO(1/p).
\]
It suffices to show 
\[
\EE\left[\left(\frac{\|X_1^{(1)}\|^2}{p}-1\right)^2\mid X_2^{(0)}\right]\le \lambda \left(\frac{\|X_2^{(0)}\|^2}{p}-1\right)^2+b,
\]
where $\lambda=\frac{1}{2}$ and $b=\bigO(1/p)$. 

Writing $X_2^{(0)}=x$ and $X_1^{(1)}=\frac{1}{2}x+Z$ where $Z\sim \mathcal{N}(0,\frac{3}{4}I_p)$, we have
\[
&\EE\left[\left(\frac{\|X_1^{(1)}\|^2}{p}-1\right)^2\mid X_2^{(0)}=x\right]\\
&=\left(\frac{\|x\|^2}{4p}-1\right)^2+2\left(\frac{\|x\|^2}{4p}-1\right)\EE\left[\frac{\|Z\|^2}{p}\right]+\EE\left[\left(\frac{\|Z\|^2+Z^Tx}{p}\right)^2\right]\\
&=\left(\frac{\|x\|^2}{4p}-1\right)^2+2\left(\frac{\|x\|^2}{4p}-1\right)\frac{3}{4} +\left(\frac{9}{16}+\frac{9}{8p}+\frac{3}{4p}\frac{\|x\|^2}{p}\right)\\
&=\frac{1}{4}\left(\frac{\|x\|^2}{p}-1+\frac{6}{p}\right)^2+\bigO(\frac{1}{p})\\
&=\frac{1}{2}\left(\frac{\|x\|^2}{p}-1\right)^2+\bigO(\frac{1}{p}),
\]
where the last step is by
\[
\left(\frac{\|x\|^2}{p}-1+\frac{6}{p}\right)^2\le 2\left[\left(\frac{\|x\|^2}{p}-1\right)^2+\frac{36}{p^2}\right].
\]

To complete the proof, we still need to show a \emph{multi-step} minorization condition with $\epsilon$ bounded away from zero. Note that the $1$-step drift condition directly implies a $k$-step drift condition with $\lambda=\frac{1}{2^k}$ and $b=\bigO(1/p)$. Next, note that
\[
X_2^{(k)}\mid X_2^{(0)}=x_2\sim \mathcal{N}(\frac{1}{4^{k}}x_2,(1-\frac{1}{4^{k+1}})I_p).
\]
Therefore, according to the $k$-step drift condition, for all the states $x$ in the small set, we have $c\sqrt{p}\le \|x_2\|\le C\sqrt{p}$ for some positive constant $c<1$ and $C>1$. Then we choose $k$ such that $\|x_2\|/4^k=\bigO(1/p)$ so that the integral of the minimum of the two one-dimensional densities $\mathcal{N}(\frac{1}{4^{k}}C\sqrt{p},(1-\frac{1}{4^{k+1}}))$ and $\mathcal{N}(-\frac{1}{4^{k}}C\sqrt{p},(1-\frac{1}{4^{k+1}}))$ is $1-\bigO(1/p)$. Then by writing the multivariate Gaussian density as product of one-dimensional densities, the total minimization volume can be controlled so that $\epsilon=(1-\bigO(1/p))^p>0$ and bounded away from zero as $p\to \infty$. Therefore, we can choose $k=\lfloor C\log(p)\rfloor+1$ a large enough constant $C$. Overall, we have proven that for a $k$-step drift condition and the corresponding minimization condition gives $\epsilon$ which is asymptotically bounded away from zero, which completes the proof.
	
	\section{Proof of Theorem \ref{thm_new_example}}\label{proof_thm_new_example}
	We analyze this model by choosing a drift function
\[
f_n(x)=\left(\frac{\bar{\lambda}}{\alpha}-\frac{1}{\beta}\right)^2
\]
where $\bar{\lambda}=\frac{1}{n}\sum_i\lambda_i$. The key step of the proof is to show the following drift condition
\[
\EE[f_n(X^{(k+1)})\mid x^{(k)}]\le b,
\]
where $b=\bigO(1/n)$. 
For simplicity of notation, we omit the index $k$ in the rest of the proof. The computation of $\EE[f_n(X^{(k+1)})\mid x^{(k)}]$ have two steps. We first compute the conditional expectation over $\beta\mid \lambda\sim \textrm{Ga}(\rho+n\alpha, \delta+n\bar{\lambda})$. Using the fact that $1/\beta$ has an inverse gamma distribution, we have
\[
\EE_{\beta\mid \lambda}\left[\left(\frac{\bar{\lambda}}{\alpha}-\frac{1}{\beta}\right)^2\right]=\left(\frac{\bar{\lambda}}{\alpha}\right)^2-2\left(\frac{\bar{\lambda}}{\alpha}\right)\left(\frac{\frac{\delta}{n}+\bar{\lambda}}{\frac{\rho-1}{n}+\alpha}\right)
+\frac{\left(\frac{\delta}{n}+\bar{\lambda}\right)^2}{\left(\frac{\rho-1}{n}+\alpha\right)\left(\frac{\rho-2}{n}+\alpha\right)}
\]
Next, we compute the conditional expectation over $\lambda$ given $\beta$. Note that by summing (conditional) independent Gamma distribution we know
\[
n\bar{\lambda}\mid \beta \sim \textrm{Ga}(n(\bar{Y}+\alpha),1+\beta)
\]
which gives
\[
\EE_{\lambda\mid \beta}[\bar{\lambda}]=\frac{\bar{Y}+\alpha}{1+\beta},\quad \EE_{\lambda\mid \beta}[\bar{\lambda}^2]=\frac{(\bar{Y}+\alpha)(\bar{Y}+\alpha+\frac{1}{n})}{(1+\beta)^2}.
\]
Using the assumption on $\bar{Y}$ and the fact that $\frac{1}{1+\beta}\in (0,1]$, we have
\[
&\EE_{\lambda\mid \beta}\left[\left(\frac{\bar{\lambda}}{\alpha}\right)^2-2\left(\frac{\bar{\lambda}}{\alpha}\right)\left(\frac{\frac{\delta}{n}+\bar{\lambda}}{\frac{\rho-1}{n}+\alpha}\right)
+\frac{\left(\frac{\delta}{n}+\bar{\lambda}\right)^2}{\left(\frac{\rho-1}{n}+\alpha\right)\left(\frac{\rho-2}{n}+\alpha\right)}\right]\\
&=\EE_{\lambda\mid \beta}\left[\left(\frac{\bar{\lambda}}{\alpha}-\frac{\frac{\delta}{n}+\bar{\lambda}}{\frac{\rho-1}{n}+\alpha}\right)^2+\frac{\frac{1}{n}\left(\frac{\delta}{n}+\bar{\lambda}\right)^2}{\left(\frac{\rho-1}{n}+\alpha\right)^2\left(\frac{\rho-2}{n}+\alpha\right)}\right]= \bigO(\frac{1}{n^2}) + \bigO(\frac{1}{n}).
\]
Therefore, we established the drift condition $\EE[f_n(X^{(k+1)})\mid x^{(k)}]\le b$ where $b=\bigO(1/n)$.

Now the proof can be completed by verifying the Gibbs sampler satisfies 
the minorization condition: $P(x,\cdot)\ge \epsilon Q(\cdot)$  for all $x$ in the small set $\left\{\left|\bar{\lambda}-\frac{\alpha}{\beta}\right|=\bigO(1/\sqrt{n})\right\}$. We only need to show that $\epsilon$ is asymptotically bounded away from $0$ as $n\to\infty$. Note that the last step of updating $\beta$ in the Gibbs sampler doesn't depend on the previous state, it then suffices to derive the minorization condition for the step $n\bar{\lambda}\mid \beta \sim \textrm{Ga}(n(\bar{Y}+\alpha),1+\beta)$ for all $\beta$ in the small set. Let $\beta_{\max}$ and $\beta_{\min}$ be the maximum and minimum value of $\beta$ in the small set. Then from the explicit form of the density of $\bar{\lambda}$, on can see that $\epsilon$ must be asymptotically bounded away from $0$ if $1/(1+\beta_{\min})-1/(1+\beta_{\max})=\bigO(1/\sqrt{n})$, which is satisfied by the small set. This completes the proof.

	\section{Proof of Remark \ref{key_remark}}\label{proof_key_remark}
	
    \citet[Appendix C]{jones2006fixed} states another way to obtain samples from the posterior of the MCMC model related to James--Stein estimator. More specifically, recall that the model 
    \[
	\begin{split}
	Y_i~|~\theta_i\quad &\sim \Normal(\theta_i,\sigma^2_V),\quad 1\le i\le n,\\
	\theta_i~|~\mu,\sigma^2_A &\sim \Normal(\mu,\sigma^2_A), \quad 1\le i\le n,\\
	\mu &\sim \textrm{ flat prior on }\mathbb{R},\\
	\sigma^2_A &\sim \IG(a,b),
	\end{split}
	\]
	where {$\sigma^2_V$ is assumed to be known}, $Y=(Y_1,\dots,Y_n)$ is the observed data, and $x=(\sigma^2_A,\mu, \theta_1,\dots,\theta_n)$ are parameters. Then the posterior can be written as
	\[
	\pi(\theta, \mu, \sigma^2_A \mid Y)=\pi(\theta\mid \mu,\sigma^2_A,Y)\pi(\mu\mid \sigma^2_A, Y)\pi(\sigma^2_A\mid Y),
	\]
	where $\pi(\theta\mid \mu, \sigma^2_A,Y)$ is a product of independent univariate normal densities 
	\[
	\theta_i\sim \mathcal{N}\left(\frac{\sigma^2_A Y_i+\sigma^2_V\mu}{\sigma^2_V+\sigma^2_A},\frac{\sigma^2_A\sigma^2_V}{\sigma^2_A+\sigma^2_V}\right)
	\]
	and $\pi(\mu\mid \sigma^2_A, Y)$ is a normal distribution
	\[
	\mu\mid \sigma^2_A,Y \sim \mathcal{N}\left(\bar{Y}, \frac{\sigma^2_A+\sigma^2_V}{n}\right)
	\]
	Therefore, one can use a rejection sampler with proposal from $\IG(a,b)$ to obtain independent samples from $\pi(\sigma^2_A\mid Y)$. However, we show that the acceptance probability of this rejection sampler decreases (typically exponentially) fast with $n$. To see this, note that
	\[
\pi(\sigma^2_A\mid Y)\propto \frac{1}{(\sigma^2_A)^{a+1}(\sigma^2_A+\sigma^2_V)^{(n-1)/2}}\exp(-\frac{1}{b}-\frac{\sum_{i=1}^n(Y_i-\bar{Y})^2}{2(\sigma^2_A+\sigma^2_V)}).
	\]
	We let $g(\sigma^2_A)$ be the density of $\IG(a,b)$, then using the fact
	\[
		\frac{\pi(\sigma^2_A\mid Y)}{g(\sigma^2_A)}&\propto\frac{1}{(\sigma^2_A)^{a+1}(\sigma^2_A+\sigma^2_V)^{(n-1)/2}}\exp(-\frac{1}{b}-\frac{\sum_{i=1}^n(Y_i-\bar{Y})^2}{2(\sigma^2_A+\sigma^2_V)})/g(\sigma^2_A)\\
	&=(\sigma^2_A+\sigma^2_V)^{(1-n)/2}\exp(-\sum_{i=1}^n(Y_i-\bar{Y})^2/2(\sigma^2_A+\sigma^2_V))\\
	&\le M:= \left(\frac{\sum_{i=1}^n(Y_i-\bar{Y})^2}{n-1}\right)^{(1-n)/2}e^{-\frac{n-1}{2}}
	\]
	where the upper bound $M$ is achieved when $\sigma^2_A=\frac{\sum_{i=1}^n(Y_i-\bar{Y})^2}{n-1}-\sigma^2_V$.
	Then the acceptance probability of the rejection sampler is
	\[
	&\EE_{ \sigma^2_A\sim \IG(a,b)}\left[\frac{(\sigma^2_A+\sigma^2_V)^{(1-n)/2}\exp(-\sum_{i=1}^n(Y_i-\bar{Y})^2/2(\sigma^2_A+\sigma^2_V))}{M}\right]\\
	&=\EE_{ \sigma^2_A\sim \IG(a,b)}\left[\left(\frac{\sigma^2_A+\sigma^2_V}{\frac{\sum_{i=1}^n(Y_i-\bar{Y})^2}{n-1}}\right)^{(1-n)/2}\exp\left(\frac{\frac{\sum_{i=1}^n(Y_i-\bar{Y})^2}{n-1}}{\sigma^2_A+\sigma^2_V}-1\right)^{(1-n)/2}\right]\\
		&=\EE_{ \sigma^2_A\sim \IG(a,b)}\left[\left(\frac{\sigma^2_A+\sigma^2_V}{\frac{\sum_{i=1}^n(Y_i-\bar{Y})^2}{n-1}}\right)^{(1-n)/2}\exp\left(\frac{\frac{\sum_{i=1}^n(Y_i-\bar{Y})^2}{n-1}}{\sigma^2_A+\sigma^2_V}-1\right)^{(1-n)/2}\right].
	\]
	Therefore, the acceptance probability of the rejection sampler equals to $\EE[Z^{(n-1)/2}]$ where
    \[
	Z:=\left(\frac{\sigma^2_A+\sigma^2_V}{\frac{\sum_{i=1}^n(Y_i-\bar{Y})^2}{n-1}}\right)^{-1}\exp\left(1-\frac{\frac{\sum_{i=1}^n(Y_i-\bar{Y})^2}{n-1}}{\sigma^2_A+\sigma^2_V}\right)\le 1,
	\]	
	where the last inequality comes from $\exp(x-1)\ge x$.
	
	We can see that under mild conditions such that $\frac{\sum_{i=1}^n(Y_i-\bar{Y})^2}{n-1}$ converges to a constant, the acceptance probability of the rejection sampler goes to zero, $\EE[Z^{(n-1)/2}]\to 0$, very fast.

\end{appendix}

\begin{acks}[Acknowledgments]
The authors thank Jim Hobert and Gareth Roberts for helpful discussions, {and two referees for their valuable comments which have significantly improved the quality of the paper}. J.Y.\ also thanks Quan Zhou and Aaron Smith for helpful comments on the proof. This research is supported by the Natural Sciences and Engineering Research Council (NSERC) of Canada.
\end{acks}
\bibliographystyle{imsart-number} 
\bibliography{accepted-version/mcmc_bib.bib}       


\end{document}